\documentclass[11pt,twoside,letterpaper]{article} 
\usepackage{fancyhdr}
\usepackage[dvips]{graphicx}
\usepackage{rotating}
\raggedbottom

\usepackage{graphics}
\usepackage{graphicx}
\usepackage{amsmath}
\usepackage{amsfonts}
\usepackage{amssymb}
\usepackage{comment}
\usepackage{color}
\usepackage{lineno}
\usepackage{epsf}
\usepackage[colorlinks=true]{hyperref}
\usepackage{enumerate}
\usepackage{makeidx}
\usepackage{amsthm}
\usepackage{times}
\usepackage{palatino}
\usepackage{charter}
\usepackage{authblk}
\usepackage[retainorgcmds]{IEEEtrantools}
\usepackage{fancyhdr}
\usepackage{fancybox}
\usepackage{mathrsfs}
\usepackage{multirow}
\usepackage{bm}
\sloppy

\setlength{\topmargin}{-0.35in}
\setlength{\textheight}{8.5in}   
\setlength{\textwidth}{5.5in}    
\setlength{\oddsidemargin}{0.5in}
\setlength{\evensidemargin}{0.5in}
\setlength{\headheight}{26pt}
\setlength{\headsep}{8pt} 

\makeatletter
\setlength\@fptop{0\p@}
\makeatother

\makeatletter
\def\cleardoublepage{\clearpage\if@twoside \ifodd\c@page\else%
    \hbox{}%
    \thispagestyle{empty}%
    \newpage%
    \if@twocolumn\hbox{}\newpage\fi\fi\fi}
\makeatother

\newcommand{\ud}{\mathrm{d}}
\newcommand{\R}{\mathbb{R}}

\def\figurename{Figure}
\makeatletter
\renewcommand{\fnum@figure}[1]{\figurename~\thefigure.}
\makeatother

\def\tablename{Table}
\makeatletter
\renewcommand{\fnum@table}[1]{\bfseries\tablename~\thetable.}
\makeatother

\setcounter{tocdepth}{2}

\newtheorem{theorem}{Theorem}
\newtheorem{corollary}{Corollary}
\newtheorem{remark}{Remark}
\newtheorem{definition}{Definition}
\newtheorem{lemma}[theorem]{Lemma}
\newtheorem{proposition}{Proposition}

\newtheorem{assumption}{Assumption}

\begin{document}
\title{
{\begin{flushleft}
\vskip 0.45in
{\normalsize\bfseries\textit{Chapter~1}}
\end{flushleft}
\vskip 0.45in
\bfseries\scshape  Linear and Nonlinear Partial Integro-Differential Equations arising from Finance}}
\author{
{\bfseries\itshape 
Jos\'{e} M. T. S. Cruz\textsuperscript{1}, Maria do R. L. Grossinho\textsuperscript{1}, 
Daniel \v{S}ev\v{c}ovi\v{c}\textsuperscript{2}\thanks{Corresponding author: D. ~\v{S}ev\v{c}ovi\v{c}; email: sevcovic@fmph.uniba.sk \\ 
Acknowledgments: Support of the Slovak Research and Development Agency under the project APVV-20-0311 (C.U.) is kindly acknowledged. The research was also supported by the VEGA 1/0611/21 grant (D.\v{S}.).}, and Cyril Izuchukwu Udeani\textsuperscript{2}
}
\\
${}^1$ {ISEG, University of Lisbon, Portugal,}
\\
${}^2$ {Comenius University in Bratislava, Slovakia}
}
\date{}
\maketitle
\thispagestyle{empty}
\setcounter{page}{1}
\thispagestyle{fancy}
\fancyhead{}
\fancyhead[L]{In: Book Title \\ 
Editor: Editor Name, pp. {\thepage-\pageref{lastpage-01}}} 
\fancyhead[R]{ISBN 0000000000  \\
\copyright~2022 Nova Science Publishers, Inc.}
\fancyfoot{}
\renewcommand{\headrulewidth}{0pt}


\begin{abstract}

\noindent 
The purpose of this review paper is to present our recent results on nonlinear and nonlocal mathematical models arising from modern financial mathematics. It is based on our four papers written jointly by J. Cruz, M. Grossinho, D. \v{S}ev\v{c}ovi\v{c}, and C. Udeani \cite{NBS19}, \cite{CruzSevcovic2020}, \cite{vsevvcovivc2021multidimensional}, \cite{udeani2021application}, as well as parts of PhD thesis by J. Cruz \cite{CruzPhD2021}. We investigated linear and nonlinear partial integro-differential equations (PIDEs) arising from option pricing and portfolio selection problems and studied the systematic relationships between the PIDEs with option pricing theory and Black--Scholes models. First, we relax the liquid and complete market assumptions and extend the models that study the market's illiquidity to the case where the underlying asset price follows a L\'evy stochastic process with jumps. Then, we establish the corresponding PIDE for option pricing under suitable assumptions. The qualitative properties of solutions to nonlocal linear and nonlinear PIDE are presented using the theory of abstract semilinear parabolic equation in the scale of Bessel potential spaces. The existence and uniqueness of solutions to the PIDE for a general class of the so-called admissible L\'evy measures satisfying suitable growth conditions at infinity and origin are also established in the multidimensional space. Additionally, the qualitative properties of solutions to the generalized PIDE are investigated by considering a general shift function arising from nonlinear option pricing models, which takes into account a large trader stock-trading strategy with the underlying asset price following the L\'evy process. For the portfolio management problem, we present the existence and uniqueness results of the fully nonlinear HBJ equation arising from the stochastic dynamic optimization problem in Sobolev spaces using the theory of monotone operator technique, which can also be viewed as PIDE in some sense. Furthermore, a stable, convergent, and consistent numerical scheme that can give an approximate solution to such PIDE is presented, and various numerical experiments are conducted to illustrate the influence of a large trader and the intensity of jumps on the option price.\enlargethispage{-16pt}
\end{abstract}

\smallskip

\noindent{\bf Keywords:} L\'evy measure, Option pricing, Partial integro-differential equation, Hamilton-Jacobi-Bellman equation, Maximal monotone operator, Dynamic stochastic portfolio optimization

\hypersetup{linkcolor=black}
\tableofcontents

\section{Introduction}
\label{sec:introduction}


This review paper contain our recent advances in the research focused on nonlinear and nonlocal mathematical models arising from modern financial mathematics. The major parts of this chapter are based on our four papers jointly written by J. Cruz, M. Grossinho, D. \v{S}ev\v{c}ovi\v{c}, and C. Udeani \cite{NBS19}, \cite{CruzSevcovic2020}, \cite{vsevvcovivc2021multidimensional}, \cite{udeani2021application}, as well as parts of PhD thesis by J. Cruz. 

The classical Black--Scholes model has been widely used in financial industry because of its simplicity and existence of analytical formula for pricing derivative securities. This model relies on the restrictive assumptions, such as completeness, frictionless of the market, and the assumption that the underlying asset price follows a geometric Brownian motion. However, the assumption that an investor can trade large amount of assets without affecting the underlying asset price is usually not satisfied, especially in illiquid markets. It is also known that the fully nonlinear Hamilton--Jacobi--Bellman equation plays an essential role in finance. For instance, it gives the necessary and sufficient condition of a control with respect to the value function. Therefore, this chapter investigates linear and nonlinear partial integro-differential equations (PIDEs) arising from option pricing and portfolio selection problem. We investigate the systematic relationships of the PIDEs with option pricing theory and Black--Scholes models. First, we relax the liquid and complete market assumptions and extend the models that study market's illiquidity to the case where the underlying asset price follows a L\'evy stochastic process with jumps. Then, we establish the corresponding PIDE for option pricing under suitable assumptions. The qualitative properties of solutions to nonlocal linear and nonlinear PIDE are presented using the theory of abstract semilinear parabolic equation in the scale of Bessel potential spaces. The existence and uniqueness of solutions to the PIDE for a general class of the so-called admissible L\'evy measures satisfying suitable growth conditions at infinity and origin are also established, and the solution to the corresponding PIDE are presented in the multidimensional space. Additionally, the qualitative properties of solutions to the generalized PIDE are investigated by considering a general shift function arising from nonlinear option pricing models, which takes into account a large trader stock-trading strategy with the underlying asset price following the L\'evy process. For the portfolio management problem, we present the existence and uniqueness results to the fully nonlinear HBJ equation arising from stochastic dynamic optimization problem in Sobolev spaces using the theory of monotone operator technique, which can also be viewed as PIDE in some sense. Furthermore, a stable, convergent, and consistent numerical scheme that can give an approximate solution of such PIDE is presented, and various numerical experiments are conducted to illustrate the influence of a large trader and intensity of jumps on the option price.

The Black--Scholes model and Hamilton--Jacobi--Bellman (HJB) equation have been widely used in financial markets. However, evidence from the stock market observation shows that the Black-Scholes model is not the most realistic one because it depends on some restrictive assumptions, such as the liquidity, completeness, and frictionless of the market. Additionally, the linear Black--Scholes equation provides a solution that corrsponds to a perfectly replicated portfolio, which is not a desirable property. For this reason, several attempts have been made to generalize and relax some of these assumptions. Some authors relaxed these assumptions by considering the presence of transaction costs (see Kwok \cite{NBS5} and Avellaneda and Paras \cite{NBS7}), feedback and illiquid market effects due to large traders choosing given stock-trading strategies (Sch\"onbucher and Willmott \cite{NBS13}, Frey and Patie \cite{NBS11}, Frey and Stremme \cite{NBS10}), and risk from the unprotected portfolio (Janda\v{c}ka and \v{S}ev\v{c}ovi\v{c} \cite{NBS1}). In these generalizations, the constant volatility was replaced by a nonlinear function depending on the second derivative of the option price. Frey and Stremme derived a nonlinear Black--Scholes model that plays an essential role in the class of the generalized Black--Scholes equation with such a  nonlinear diffusion function \cite{NBS11,NBS1, Frey98}). In this model, the asset dynamics considers the presence of feedback effects due to a large trader choosing his/her stock-trading strategy \cite{NBS13}. Another important direction in generalizing the original Black--Scholes equation arises from the fact that the sample paths of a Brownian motion are continuous; however, the realized stock price of a typical company exhibits random jumps over the intraday scale, making the price trajectories discontinuous. In the classical Black--Scholes model, the logarithm of the price process has normal distribution. However, the empirical distribution of stock returns exhibits fat tails. Meanwhile, when calibrating the theoretical prices to the market prices, the implied volatility is not constant as a function of strike price nor a function of time to maturity, contradicting the prediction of the Black--Scholes model. However, the models with jumps and diffusion can solve the problems inherent to the Black--Scholes model. The jump models also plays an essential role in the option market. In the Black--Scholes model, the market is complete, implying that every payoff can exactly be replicated; meanwhile, there is no perfect hedge in jump models, making the way of options not redundant.

Market Illiquidity has been widely studied in the literature $\cite{Jarrow94, PlaSch98, SirPapa98, PhilWil2000, Frey2000}$. The first major contribution was made by Robert Jarrow, in 1994, who studied the market manipulation strategies that may arise in illiquid markets. The author also studied the option pricing theory in discrete time when there is a large trader. The pricing argument used was a condition to ensure that no market manipulation strategy is used by the large trader and large trader's optimality conditions; thus, replacing the usual free-arbitrage argument. Then, Frey (1998) extended Jarrow's analysis to the continuous time case and established the existence and uniqueness of solution of a nonlinear partial differential equation satisfied by the large trader's hedging strategy. Additionally, Platten and Schweizer (1998) proposed an explanation for the smile and skewness for the implied volatilities and showed that hedging strategies followed by large traders can lead to option price bias. Sircar and Papanicolaou (1998) also proposed a model where the derivative security price is characterized by a nonlinear partial differential equation that becomes the Black--Scholes equation when there is no feedback. When the programme traders are a small fraction of the economy, numerical and analytical methods can be used to analyze the nonlinear partial differential equation through perturbation. This equation is derived using an argument similar to the one used in deriving classical Black--Scholes equation. Consequently, they obtained that this model also predict increased implied volatilities as in Platten and Schweizer. Furthermore, Schonbucher and Willmott (2000) analyzed the feedback effects from the presence of hedging strategies. They also derived a nonlinear partial differential equation for an option replication strategy and studied these effects for a put option. The effects are more pronounced in markets with low liquidity, which can induce discontinuities in the price process. However, none of these studies that investigated jump models \cite{Merton76, DPE98, BARNIE01, PD99, NuaSch01, RamCont2005, Eber98} considered in the market's illiquidity. Meanwhile, investors and risk managers have realized that financial models based on the assumption that an investor can trade large amounts of an asset without affecting its price is no longer true in markets that are not liquid. Therefore, in this chapter, we relax the liquid and complete market hypothesis and extend the models that study market's illiquidity to the case where the underlying asset price follows a L\'evy stochastic process with jumps to obtain a model for pricing European and American call and put options on an underlying asset characterized by a L\'evy measure. In this way, it is assumed that trading strategies affects the stock price, and the possibility to account for sudden jumps that might occur when the market is under stress \enlargethispage{-16pt}.

\pagestyle{fancy} 
\fancyhead{} 
\fancyhead[EC]{J. Cruz, M. Grossinho, D. \v{S}ev\v{c}ovi\v{c}, C. I. Udeani} 
\fancyhead[EL,OR]{\thepage}
\fancyhead[OC]{ Linear and Nonlinear PIDE arising from Option Pricing} 
\fancyfoot{}
\renewcommand\headrulewidth{0.5pt}
Recently, the relationships between more general nonlocal operators and jump processes have been widely investigated. For instance, there is an actual connection between the solution to PIDE and properties of the corresponding Markov jump process (c.f.,  Abels and Kassmann \cite{AbelsKass2009}; Florescu and Mariani \cite{florescu2010solutions}). In the past decades, the role of PIDEs has been investigated in various fields, such as pure mathematical areas, biological sciences, and economics \cite{aboodh2016solution, NBS15, yuzbacsi2016improved}. PIDE problems arising from financial mathematics, especially from option pricing models, have been of great interest to many researchers. In most cases, standard methods for solving these problems lead to the study of parabolic equations. Mikulevi\v{c}ius and Pragaraustas \cite{mikulevivcius1992cauchy} investigated solutions of the Cauchy problem to the parabolic PIDE with variable coefficients in Sobolev spaces. They employed their results to obtain solutions of the corresponding martingale problem. Crandal \emph{et al.} \cite{ishii1996viscosity} employed the notion of a viscosity solution to investigate the qualitative results. Soner \emph{et al.} \cite{burzoni2020viscosity} and Barles \emph{et al.} \cite{barles1997backward} extended and generalized their results for the first and second order operators, respectively. Florescu and Mariani \cite{florescu2010solutions} employed the Schaefer fixed point argument to establish existence of a weak solution of the generalized PIDE. Amster \emph{et al.} \cite{Amster12} used the notion of upper and lower solutions to obtain the solution to such PIDEs. They proved the existence of solutions in a general domain for multiple assets and the  regime switching jump-diffusion model. Cont \emph{et al.} \cite{cont2005integro} investigated the actual connection between option pricing in exponential L\'evy models and the corresponding PIDEs for European options and those with single or double barriers. They discussed and established the conditions for which prices of option are classical solution of the corresponding PIDE. In this chapter, we obtain a certain partial integro-differential equation (PIDE) for option pricing in illiquid market by assuming a certain dynamics for the stock's price. The existence of solution and localization results of the associated PIDE are also established. We investigated and established the qualitative properties of solutions to the nonlocal linear and nonlinear PIDE in the scale of Bessel potential spaces using the theory of abstract semilinear parabolic equation. Furthermore, we present the existence and uniqueness results for nonlinear parabolic equations using monotone operator technique, Fourier transform, and Banach fixed point argument. We considered the fully nonlinear HJB equation arising portfolio selection problem, where the goal of an investor is to optimize the condition expected value of the terminal utility of the portfolio. Such nonlinear parabolic equation is presented in an abstract setting, which can also be viewed as a nonlinear PIDE. Many previous studies have developed numerical methods for PIDEs, such as finite difference and finite element methods. However, the equation corresponding to the case of illiquid markets is more difficult. Therefore, this chapter also presents a stable, convergent, and consistent numerical scheme that can give an approximate solution of such PIDEs. Various numerical experiments are presented to illustrate the influence of a large trader and intensity of jumps on the option price.

\section{Background and Motivation} 

Based on the classical theory developed by  Black,  Scholes, and Merton, the price $V(t,S)$ of an option in a stylized market at time $t\in[0,T]$ and the underlying asset price $S$ can be calculated as a solution to the following linear Black--Scholes parabolic equation:
\begin{equation}
\frac{\partial V}{\partial t}(t,S) + \frac{1}{2}\sigma^2 S^2\frac{\partial^2 V}{\partial S^2}(t,S) + r S\frac{\partial V}{\partial S}(t,S) - r V(t,S) = 0,\mbox{ }t \in [0,T),S > 0.
\label{eqBS}
\end{equation}
Here, $\sigma>0$ is the historical volatility of the underlying asset driven by the geometric Brownian motion, and $r > 0$ is the risk-free interest rate of zero-coupon bond. The solution to above equation is subject to the terminal payoff condition $V(T,S) = \Phi(S)$ at maturity $t = T$. Meanwhile, evidence from stock markets observations indicates that the model is not the most realistic one because it assumes that the market is liquid, complete, frictionless and without transaction costs. It is also known that the linear Black--Scholes equation provides a solution corresponding to a perfectly replicated  portfolio, which need not be a desirable property. To solve this issues, several attempts have been made to generalized the linear Black--Scholes equation (\ref{eqBS}) by replacing the constant volatility $\sigma$ with a nonlinear function $\tilde\sigma(S \partial _S^2 V)$ depending on the second derivative $\partial _S^2 V$ of the option price. In this regard, Frey and Stremme derived a nonlinear Black--Scholes model, which plays an essential role in the class of generalized Black--Scholes equation with such a  nonlinear diffusion function \cite{NBS11, Frey98}). They considered that case in which the asset dynamics takes into account the presence of feedback effects due to a large trader choosing his/her stock-trading strategy (see also \cite{NBS13}). The diffusion coefficient is non-constant, and it is given by
\begin{equation}
\tilde\sigma(S \partial^2_S V)^2 = \sigma^2 \left(1-\varrho S\partial^2_S V\right)^{-2},
\label{doplnky-c-frey}
\end{equation}
where $\sigma$ and $\varrho>0$ are constants. 

Furthermore, several researchers have attempted to generalized the original Black--Scholes equation arises from the fact that the sample paths of a Brownian motion are continuous. However, the realized stock price of a typical company exhibits random jumps over the intraday scale, making the price trajectories discontinuous. The underlying asset price process is usually assumed to follow a geometric Brownian motion in the classical Black--Scholes model. However, the empirical distribution of stock returns exhibits fat tails. The models with jumps and diffusion can solve the problems inherent to the linear Black--Scholes model and play an essential role in options pricing. It is well-known that the market is complete in the Black--Scholes model, illustrating that each payoff can be perfectly replicated; however, there is no perfect hedge in jump--diffusion models, making the options not redundant. It turns out that the option price can be computed from the solution $V(t,S)$ to the following PIDE Black--Scholes equation \cite{NBS19}:
\begin{eqnarray}
&&\frac{\partial V}{\partial t}(t,S)+\frac{1}{2}\sigma^2 S^2 \frac{\partial^2 V}{\partial S^2 }(t,S) +r S\frac{\partial V}{\partial S}(t,S)-r V(t,S)\nonumber
\\
&& +\int_{\mathbb{R}} V(t,S+H(z,S))-V(t,S)-H(z,S)\frac{\partial V}{\partial S}(t,S)\nu(\ud z)=0,\label{nonlinearPIDE_one-intro}
\end{eqnarray}
where $H(z,S) = S (e^z-1)$, and $\nu$ is the so-called L\'evy measure characterizing the underlying asset process with random jumps in time and space. It is worth noting that (\ref{nonlinearPIDE_one-intro}) reduces to the classical linear Black--Scholes equation (\ref{eqBS}) if  $\nu=0$. 

In this chapter, we consider both directions of generalizations of the Black--Scholes equation. First, we relaxed the assumption of liquid market following the Frey--Stremme model by assuming that the underlying asset price follows a L\'evy stochastic process with jumps and established the corresponding PIDEs. Then, we present the existence and uniqueness results to the linear and nonlinear nonlocal PIDE in the framework of Bessel potential spaces for the multidimensional case. A more generalized nonlinear nonlocal PIDE is also presented by considering a shift function $\xi=\xi(\tau, x, z)$ depending on the variables $x,z\in\R^n$. Furthermore, we derive, analyze, and perform numerical computation of the model. We also show that the corresponding PIDE nonlinear equation has the following form: 
\begin{eqnarray}
&&\frac{\partial V}{\partial t}+\frac{1}{2}\frac{\sigma^2}{\left(1-\varrho S\partial_S \phi\right)^{2}} S^2 \frac{\partial^2 V}{\partial S^2 } +r S\frac{\partial V}{\partial S}-rV\nonumber
\\
&&+\int_{\mathbb{R}} V(t,S+H(t,z,S))-V(t,S)-H(t,z,S)\frac{\partial V}{\partial S}\nu(\ud z)=0.\label{nonlinPIDE}
\end{eqnarray}
It is worth noting that the function $H(t,z,S)$ may depend on the large trader strategy function $\phi=\phi(t,S)$ and the delta $\partial_S V$ of the price $V$ if $\varrho>0$. 


We consider a stylized economy with two traded assets: a riskless asset (a bond with a price $B_{t}$ taken as numeraire) and a risky asset (stock with a price $S_{t}$). Here, we assume that the bond market is perfectly elastic since it is more liquid than stocks. Here, we consider two type of traders: reference and program traders. The program traders are also known as portfolio insurers because they use dynamic hedging strategies to hedge portfolio against jumps in stock prices. They are classified as single traders or group of traders acting together. It is assumed that their trades influence the stock price equilibrium. In contrast, the reference traders can be considered as representative traders of many small agents. We assume that they act as price takers. Generally, $\tilde{D}(t,Y_{t},S_{t})$ represents the reference trader demand function that depends on the income process $Y_t$ or some other fundamental state variable influencing the reference trader demand. The aggregate demand of program traders is denoted by $\varphi(t,S_{t})=\xi \phi(t,S_{t})$, where $\xi$ is the number of written identical securities that the program traders are trying to hedge, and $\phi(t,S_{t})$ is the demand per unit of the security being hedged. For simplicity, we assume that $\xi$ is the same for every program trader. For a more general case where different securities are considered, see \cite{SirPapa98}. Suppose the supply of a stock with the price  $\tilde{S_{0}}$ is constant, and let $D(t,Y,S)=\frac{\tilde{D}(t,Y,S)}{\tilde{S_{0}}}$ be the quantity demanded by a reference trader per unit supply. Then, the total demand relative to the supply at time $t$ is given by $G(t,Y,S)=D(t,Y,S)+\rho\phi(t,S)$, where $\rho=\frac{\xi}{\tilde{S_{0}}}$, and $\rho\phi(t,S)$ is the proportion of the total supply of the stock traded by program traders. Therefore, to obtain the market equilibrium, the variables $Y$ and $S$ should satisfy $G(t,Y,S)=1$. Assume that the function $G$ is monotone  with respect to the variables $Y$ and $S$, and it is sufficiently smooth. Then, we can solve the implicit equation $G(t,Y_t,S_t)=1$ to obtain $S_{t}=\psi(t,Y_{t})$, where $\psi$ is a sufficiently smooth function. By employing the approach in \cite{SirPapa98}, we assume that the stochastic process $Y_{t}$ has the following dynamics: 
\[
\ud Y_{t}=\mu(t,Y_{t})\ud t+\eta(t,Y_{t}) \ud W_{t}.
\]
Then, using It\^{o}'s lemma for the process $S_{t}=\psi(t,Y_{t})$, we have
\begin{equation}
\ud S_t = \left( \partial_t \psi + \mu\partial_y\psi +\frac{\eta^2}{2}\partial^2_y\psi \right) \ud t + \eta \partial_y\psi  \ud W_t \equiv b(t,S_t) S_t \ud t + v(t,S_t) S_t \ud W_t.
\label{BS-v}
\end{equation}
It means that $S_t$ follows a geometric Brownian motion with a nonconstant volatility function $v(t,S)=\eta(t,Y)\partial_Y\psi(t,Y)/\psi(t,Y)$, where $Y=\psi^{-1}(t,S)$. 
 Thus, we follow the argument used in the derivation of the original Black--Scholes equation to obtain a generalization of the Black--Scholes partial differential equation with a nonconstant volatility function $\sigma=v(t,S)$.
We employ the Frey and Stremme's approach  (cf. \cite{NBS11, Frey98}) to prescribe a dynamics for the underlying stock price instead of deriving it using the market equilibrium and dynamics for the income process $Y_t$ as it is done in \cite{SirPapa98}. In this way, Frey and Stremme derived the same stock price dynamics as in \cite{SirPapa98} corresponding to a situation where the demand function is of logarithmic type, $D(Y,S)=\ln(\frac{Y^{\gamma}}{S})$, where $\gamma=\frac{\sigma}{\eta_0}$, and the income process $Y_t$ follows a geometric Brownian motion, i.e.,
\begin{eqnarray}
&&\partial_Y D(Y,S)=\gamma\frac{1}{Y},\ \partial_S D (Y,S)=-\frac{1}{S},\ \  \ud Y_{t}=\mu_0 Y_{t}\ud t+\eta_0 Y_{t} \ud W_{t},
\label{generalPDE-1}
\\
&&v(t,S)=\eta(t,Y)\frac{\partial_Y\psi(t,Y)}{\psi(t,Y)} = - \frac{\eta_0 Y}{S}\frac{\gamma\frac{1}{Y}}{-\frac{1}{S}+\rho \frac{\partial \phi}{\partial S}}=\frac{\sigma}{1-\rho S\frac{\partial \phi}{\partial S}}.
\nonumber
\end{eqnarray}
Assuming the delta hedging strategy with $\phi(t,S)=\partial_S V(t,S)$ and substituting the volatility function $v(t,S)$ in (\ref{BS-v}), we obtain the generalized Black--Scholes equation with the nonlinear diffusion function of the form (\ref{doplnky-c-frey}).

In this chapter, we first generalized the Frey--Stremme  model by considering an underlying asset following a L\'{e}vy process with jumps. Then, we establish the corresponding PIDE for option pricing. Furthermore, we investigated the existence and uniqueness of solutions to such PIDE in the multidimensional spaces. 

\section{Preliminaries and Definitions}
This section presents some basic definitions and properties of L\'evy measures and notion of admissible activity L\'evy measures. Here,$|\cdot|$ and $\Vert\cdot\Vert$ represent the Euclidean norm in $\R^n$ and the norm in an infinite dimensional function space (e.g., $L^p(\R^n), X^\gamma$). HIn what follows, $a\cdot b$ stands for the usual Euclidean product in $\R^n$ with the norm $|z| = \sqrt{z\cdot z}$.
\begin{definition}\cite{vsevvcovivc2021multidimensional}
A L\'evy process on $\R^n$ is a stochastic (right continuous) process $X = \{X_t, t\geq 0\}$ having the left limit with independent stationary increments. It is uniquely characterized by its L\'evy exponent $ \phi $:
\[
\mathbb{E}_x(e^{i y\cdot X_t}) = e^{-t\phi(y)}, ~ y\in\R^n.
\]
The subscript $x$ in the expectation operator $\mathbb{E}_x$ indicate that the process $X_t$ starts from a given value $x$ at the origin $t=0$. The L\'evy exponent $\phi$ has the following unique decomposition: 
\[
\phi(y) = ib\cdot y + \sum_{i, j=1}^n a_{ij} y_i y_j + \int_{\R^n}\left(1-e^{iy\cdot z} + iy\cdot  z1_{|z|\leq 1}\right)\nu(\ud z), 
\]
where $b\in\R^n$ is a constant vector; $(a_{ij})$ is a constant matrix, which is positive semidefinite; $\nu(\ud z)$ is a nonnegative measure on $\R^n \setminus \{0\}$ such that $\int_{\R^n}\min(1, |z|^2)\nu(\ud z) <\infty$ (c.f.,  \cite{palatucci2017recent}).
\end{definition}
\subsection{Exponential L\'{e}vy models}\label{sectiontwo}

Let $X_t, t\ge0,$ be a stochastic process. The Poisson random measure $\nu(A)$ of a Borel set $A \in \mathcal{B}(\mathbb{R})$ is defined  by $\nu(A)=\mathbb{E}\left[ J_{X}([0,1]\times A)\right]$, where $J_{X}([0,t]\times A)=\# \left\{s \in [0,t]:\Delta X_{s} \in A \right\}$. This measure gives the mean number per unit time of jumps whose amplitude belongs to the set $A$. It is worth noting that the L\'{e}vy--It\^{o} decomposition provides a representation of $X_t$ interpreted as a combination of a Brownian motion with a drift $\omega$ and an infinite sum of independent compensated Poisson processes with variable jump sizes (see \cite{CruzSevcovic2020}), i.e., 
\[
\ud X_{t}=\omega \ud t+\sigma \ud W_{t}+ \int_{\left|x\right|\geq 1}x J_{X}\left(\ud t,\ud x\right)+ \int_{\left|x\right|<1}x \widetilde{J}_{X}\left(\ud t,\ud x\right), 
\]
where $\widetilde{J}_{X}\left(\left[0,t\right]\times A\right) =J_{X}\left(\left[0,t\right]\times A\right)-t \nu\left(A\right)$ is the compensation of $J_{X}$.

\begin{remark}\cite[Remark 1]{CruzSevcovic2020}
Note that any L\'{e}vy process is a strong Markov process, and the associated semigroup is a convolution semigroup. Its infinitesimal generator $L:u\mapsto L[u]$ is a nonlocal  partial integro-differential operator given by
\begin{eqnarray}
L[u](x)&=&\lim_{h \to 0^+} \frac{\mathbb{E}\left[u\left(x+X_{h}\right)\right]-u\left(x\right)}{h}
\nonumber \\
&=&\frac{\sigma^{2}}{2}\frac{\partial^{2}u}{\partial x^{2}}+\gamma\frac{\partial u}{\partial x}+\int_{\mathbb{R}}\left[u\left(x+y\right)-u\left(x\right)-y 1_{\left|y\right| \leq 1}\frac{\partial u}{\partial x}\left(x\right)\right] \nu(\ud y), 
\label{eq:infgen no condition}
\end{eqnarray}
which is well-defined for any compactly supported function $u\in C^{2}_0\left(\mathbb{R}\right)$.
\end{remark}
Let $S_{t}, t\ge 0,$ be a stochastic process representing an underlying asset process under a filtered probability space $\left(\Omega,\mathcal{F},\left\{\mathcal{F}_{t}\right\},\mathbb{P}\right)$. The filtration $\left\{\mathcal{F}_{t}\right\}$ represents the price history up to the time $t$. In an arbitrage-free market, there is an equivalent  measure $\mathbb{Q}$ under which discounted prices of all traded assets are $\mathbb{Q}-$ martingales, which is called the fundamental theorem of asset pricing (see,  \cite{ConTan03}). The measure $\mathbb{Q}$ is also known as the risk-neutral measure. We consider the exponential L\'{e}vy model in which the risk-neutral price process $S_{t}$ under $\mathbb{Q}$ is given by $S_{t}=e^{rt+X_{t}}$, where $X_{t}$ is a L\'{e}vy process under $\mathbb{Q}$ with the characteristic triplet $\left(\sigma,\gamma,\nu\right)$. Then, the arbitrage-free market hypothesis states that $\widehat{S}_{t}=S_{t}e^{-rt}=e^{X_{t}}$ is a martingale, which is equivalent to the following conditions imposed on the triplet $\left(\sigma,\gamma,\nu\right)$:

\begin{equation}
\int_{\left|y\right|\ge 1} e^{y} \nu (\ud y)<\infty,\ \ \gamma\in\mathbb{R},\ \  \gamma=-\frac{\sigma^{2}}{2}-\int_{-\infty}^{+\infty} \left(e^{y}-1-y1_{\left|y\right|\leq 1}\right) \nu(\ud y). \label{expcond}
\end{equation}
The risk-neutral dynamics of $S_{t}$ under $\mathbb{Q}$ is given by
\begin{equation}
\ud S_{t}=r S_t \ud t+ \sigma S_t \ud W_{t}+ \int_{\mathbb{R}}\left(e^{y}-1\right)S_{t}\widetilde{J}_{X}\left(dt,dy\right).
\end{equation}
The exponential price process, $e^{X_{t}}, t\ge 0 $, is also a Markov process with the state space $\left(0,\infty\right)$. It has the following infinitesimal generator:
\begin{eqnarray}
L^{S}[V](S)&=&\lim_{h \rightarrow 0} \frac{\mathbb{E}[V(S e^{X_{h}})]-V(S)}{h} = r S\frac{\partial V}{\partial S}+\frac{\sigma^{2}}{2}S^2\frac{\partial^{2}V}{\partial S^{2}}\\
&& +\int_{\mathbb{R}}\left[V(S e^{y})-V(S)-S (e^{y}-1)\frac{\partial V}{\partial S}\right]\nu(\ud y)\label{eq: infgenS }
\end{eqnarray}
(see \cite{ConTan03}). A L\'{e}vy process with the  following representation:
\[
\ud X_{t}=\omega \ud t+\sigma \ud W_{t}+\int_{|x|\ge 1} K(t,x) J_{X}(\ud t, \ud x)
+\int_{|x|<1} H(t,x) \tilde{J}_{X}(\ud t, \ud x).
\]
is called the L\'{e}vy type stochastic integral. The following variant of It\^{o}'s lemma is an essential results, which will be needed later in this study.

\begin{theorem}\cite[Therorem 2.1]{NBS19}
Let $f\in C^{1,2}([0,T]\times\mathbb{R})$ and  $H, K\in C([0,T]\times\mathbb{R})$. Suppose $X_{t}, t\ge 0,$ is a L\'{e}vy stochastic process. Then, 
\begin{eqnarray}
\ud f(t,X_{t})&=&\frac{\partial f}{\partial t}\ud t+\frac{\partial f}{\partial x}\ud X_{t}+\frac{1}{2}\frac{\partial^2 f}{\partial x^2}\ud [X_{t},X_{t}]\nonumber
\\
&&+\int_{|x|\ge 1} f(t,X_{t}+K(t,x))-f(t,X_{t})J_{X}(\ud t, \ud x)\label{itolemma}
\\
&&+\int_{|x|<1} f(t,X_{t}+H(t,x))-f(t,X_{t})\tilde{J}_{X}(\ud t, \ud x)\nonumber
\\
&&+\int_{|x|<1} f(t,X_{t}+H(t,x))-f(t,X_{t})-H(t,x)\frac{\partial f}{\partial x}(t,X_t)\nu(\ud x)\ud t .
\nonumber
\end{eqnarray}
\end{theorem}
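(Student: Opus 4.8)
The plan is to reduce the statement to the classical It\^o formula for continuous semimartingales by a truncation-and-limit procedure, handling separately the continuous part of $X_t$, its finitely many large jumps, and its infinitely many small (compensated) jumps. As a preliminary reduction, note that with the stopping times $\tau_n=\inf\{t\ge 0:|X_t|\ge n\}$ one has $\tau_n\to T$ almost surely, so it suffices to prove the identity when $\partial_t f,\partial_x f,\partial_{xx}f$ are bounded (modify $f$ outside a compact set); the general case follows on letting $n\to\infty$. For $\varepsilon\in(0,1)$ let $X^\varepsilon_t=\omega t+\sigma W_t+\int_0^t\!\int_{|x|\ge 1}K(s,x)\,J_X(\ud s,\ud x)+\int_0^t\!\int_{\varepsilon\le|x|<1}H(s,x)\,\widetilde J_X(\ud s,\ud x)$ be $X_t$ with the jumps of size $<\varepsilon$ removed. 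Then $X^\varepsilon$ has almost surely finitely many jumps on $[0,T]$, and $X^\varepsilon_t\to X_t$ in $L^2$, uniformly on $[0,T]$, as $\varepsilon\to 0$, by the $L^2$-convergence of the compensated small-jump integral.

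First I would prove the formula for $X^\varepsilon$. On each interval between consecutive jump times, $X^\varepsilon$ coincides with a continuous It\^o process with diffusion coefficient $\sigma$ and drift $\omega-\int_{\varepsilon\le|x|<1}H(s,x)\,\nu(\ud x)$ (the drift correction being the compensator of the small-jump integral), so the classical It\^o formula applies there; summing these contributions over $[0,t]$ and adding the increments $f(s,X^\varepsilon_{s-}+\Delta X^\varepsilon_s)-f(s,X^\varepsilon_{s-})$ at the finitely many jump times gives
\begin{align*}
f(t,X^\varepsilon_t)-f(0,X^\varepsilon_0)&=\int_0^t\!\Big(\partial_t f+\omega\,\partial_x f+\tfrac12\sigma^2\,\partial_{xx}f\Big)\ud s+\int_0^t\!\sigma\,\partial_x f\,\ud W_s\\
&\quad-\int_0^t\!\!\int_{\varepsilon\le|x|<1}\!\!H\,\partial_x f\,\nu(\ud x)\,\ud s+\sum_{0<s\le t}\big[f(s,X^\varepsilon_{s-}+\Delta X^\varepsilon_s)-f(s,X^\varepsilon_{s-})\big],
\end{align*}
the derivatives of $f$ being evaluated at $(s,X^\varepsilon_{s-})$. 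By definition of the Poisson random measure, the jump sum splits into the $J_X$-integral of $f(s,X^\varepsilon_{s-}+K)-f(s,X^\varepsilon_{s-})$ over $\{|x|\ge 1\}$ plus that of $f(s,X^\varepsilon_{s-}+H)-f(s,X^\varepsilon_{s-})$ over $\{\varepsilon\le|x|<1\}$; adding and subtracting the compensator in the latter and combining its $\nu$-part with the drift correction above yields
\begin{align*}
&\int_0^t\!\!\int_{\varepsilon\le|x|<1}\!\!\big[f(s,X^\varepsilon_{s-}+H)-f(s,X^\varepsilon_{s-})\big]\,J_X(\ud s,\ud x)-\int_0^t\!\!\int_{\varepsilon\le|x|<1}\!\!H\,\partial_x f\,\nu(\ud x)\,\ud s\\
&=\int_0^t\!\!\int_{\varepsilon\le|x|<1}\!\!\big[f(s,X^\varepsilon_{s-}+H)-f(s,X^\varepsilon_{s-})\big]\,\widetilde J_X(\ud s,\ud x)\\
&\quad+\int_0^t\!\!\int_{\varepsilon\le|x|<1}\!\!\big[f(s,X^\varepsilon_{s-}+H)-f(s,X^\varepsilon_{s-})-H\,\partial_x f(s,X^\varepsilon_{s-})\big]\,\nu(\ud x)\,\ud s.
\end{align*}

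Next I would let $\varepsilon\to 0$. In the last integral, Taylor's theorem and the boundedness of $\partial_{xx}f$ give the pointwise bound $|f(s,y+H)-f(s,y)-H\,\partial_x f(s,y)|\le\tfrac12\|\partial_{xx}f\|_\infty\,|H(s,x)|^2$; since $H$ is continuous with $H(s,x)=O(|x|)$ as $x\to 0$ (as must be assumed for the $\nu$-integral in \eqref{itolemma} to be well defined) and $\int_{\mathbb{R}}\min(1,|x|^2)\,\nu(\ud x)<\infty$, dominated convergence makes this integral converge to the one over $\{|x|<1\}$. For the compensated stochastic integral, the It\^o isometry bounds the $L^2$-distance between the truncations at levels $\varepsilon'<\varepsilon$ by $\|\partial_x f\|_\infty^2\int_0^t\!\int_{\varepsilon'\le|x|<\varepsilon}|H|^2\,\nu(\ud x)\,\ud s\to 0$, so it is Cauchy in $L^2$ and converges to the corresponding $\widetilde J_X$-integral over $\{|x|<1\}$. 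The $J_X$-integral over $\{|x|\ge 1\}$, the term $f(t,X^\varepsilon_t)$, and the remaining $\ud s$- and $\ud W_s$-integrals converge because $X^\varepsilon\to X$ and $f\in C^{1,2}$; collecting terms yields \eqref{itolemma}.

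The step I expect to be the main obstacle is this passage to the limit, and in particular the necessity of keeping the small-jump contribution in compensated form throughout: the bare sum $\sum_{0<s\le t}[f(s,X_{s-}+H)-f(s,X_{s-})]$ over jumps of size $<1$ need not converge absolutely, so the drift correction $-\int_0^t\!\int_{|x|<1}H\,\partial_x f\,\nu(\ud x)\,\ud s$ cannot be detached from the jump sum before compensating; the argument then rests entirely on the quadratic estimate $f(s,y+H)-f(s,y)-H\,\partial_x f(s,y)=O(|H|^2)$ supplied by the $C^{1,2}$ regularity of $f$ together with $\int_{\mathbb{R}}\min(1,|x|^2)\,\nu(\ud x)<\infty$. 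The localization step and the uniform-in-$\varepsilon$ $L^2$ bounds on the martingale terms are routine, but must be carried out carefully to legitimate the dominated-convergence and isometry arguments above.
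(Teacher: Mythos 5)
The paper does not prove this statement: it imports it verbatim from \cite{NBS19} (where it is in turn the standard It\^o formula for L\'evy-type stochastic integrals, cf.\ Applebaum or Cont--Tankov), so there is no in-paper proof to compare against. Judged on its own, your argument is the standard and correct one --- localization, removal of jumps with $|x|<\varepsilon$, classical It\^o on the continuous interlacing pieces, recombination of the compensator with the drift correction, and passage to the limit via the quadratic estimate $|f(s,y+H)-f(s,y)-H\,\partial_x f(s,y)|\le\tfrac12\Vert\partial_{xx}f\Vert_\infty |H|^2$ together with $\int\min(1,|x|^2)\,\nu(\ud x)<\infty$ --- and you correctly flag that square-integrability of $H$ against $\nu$ near the origin is implicit in the very definition of the L\'evy-type integral defining $X_t$. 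The one place where the write-up is looser than it should be is the Cauchy-in-$L^2$ step for the compensated small-jump integral: the integrands at levels $\varepsilon'<\varepsilon$ differ not only in their domain of integration but also through the argument $X^{\varepsilon}_{s-}$ versus $X^{\varepsilon'}_{s-}$, so the isometry bound over the annulus $\{\varepsilon'\le|x|<\varepsilon\}$ alone does not close the estimate; one must add the elementary bound $|[f(s,y'+H)-f(s,y')]-[f(s,y+H)-f(s,y)]|\le\Vert\partial_{xx}f\Vert_\infty\,|H|\,|y'-y|$ and the uniform $L^2$ convergence $X^\varepsilon\to X$ to control the second source of error. This is a routine completion rather than a conceptual gap, and with it your proof is sound.
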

\subsection{Examples of L\'{e}vy processes in finance}\label{sectionthird}

There are two types of exponential L\'{e}vy models considered in the literature. The first types are jump-diffusion models, where the log-price is represented as a L\'{e}vy process with a nonzero diffusion part $(\sigma >0)$ and a jump process with finite activity $($i.e., $\nu(\mathbb{R})<\infty)$. The second types of models are infinite activity pure jump models, where there is no diffusion part and only a jump process with infinite activity $($i.e., $\nu(\mathbb{R})=\infty)$. This section presents different types of exponential L\'{e}vy models that differ in the choice of the L\'{e}vy measure.
\subsubsection{Jump-Diffusion models}

A L\'{e}vy process with jump-diffusion has the following general form:
\begin{equation*}
X_{t}=\gamma t +\sigma W_{t}+\sum_{i=1}^{N_{t}} Y_{i},
\end{equation*} 
where $\sigma>0$, and $ N_{t}$ is a Poisson process with intensity $\lambda$ that counts the jumps of $X_{t}$, and $Y_{i}, i=1,2,3...$ are independent and identically distributed random variables with distribution $\mu $. The L\'{e}vy measure $\nu$ is $\lambda \mu$, and the drift $\gamma$ is given by \begin{equation*}-\frac{\sigma^{2}}{2}-\int_{\mathbb{R}} \left(e^{y}-1-y1_{\left|y\right|\leq 1} \right)\nu \left(\ud y \right).
\end{equation*}

\subsubsection*{Merton's model}
This is the first jump-diffusion model proposed by Merton \cite{Merton76} in the financial application. The random variables $Y_{i}, i=1,2,3...$, are normally distributed with mean $m$ and variance $\delta^2$.
It has the following L\'{e}vy density:
\begin{equation}
\nu(\ud z)=\lambda \frac{1}{(2\pi\delta^2)^{n/2}}e^{-\frac{|z-m|^2}{2\delta^2}}\ud z\,,
\label{merton-density}
\end{equation}
where the parameters $m\in\R^n, \lambda, \delta>0,$ are given. Merton's measure is a finite activity L\'evy measure, i.e., $\nu(\mathbb{R}^n)=\int_{\mathbb{R}^n}\nu(\ud z)<\infty$, with finite variation $\int_{|z|\leq 1}|z|\nu(\ud z)<\infty$. Therefore, the probability density of $X_{t}$ can be obtain as a series that converges rapidly (see \cite{ConTan03}):
\begin{IEEEeqnarray}{rCl}
p_{t}(x)&=&\sum_{j=0}^{\infty}e^{-\lambda t}(\lambda t)^j\frac{e^{-\frac{|z-\gamma t-jm|^2}{2(\sigma^2 t+j\delta^2)}}}{j!\sqrt{2 \pi(\sigma^2 t+j\delta^2)}}.
\end{IEEEeqnarray}
Thus, the price of an European call option can be expressed as a weighted sum of Black--Scholes prices:
\begin{equation}
C_{Merton}(S_{0},K,T,\sigma,r)=e^{-rT}\sum_{j=0}^{\infty}e^{-\lambda t}\frac{(\lambda t)^j}{j!}e^{r_{j}T}C_{BS}(S_{0}e^{\frac{j\delta^2}{T}},K,T,\sigma_{j},r_{j}),
\end{equation}   
where $r_{j}=r-\lambda(e^{m+\frac{\delta^2}{2}}-1)+\frac{jm}{T}$, $\sigma_{j}=\sqrt{\sigma^2+\frac{j\delta^2}{T}}$, and $C_{BS}(S,K,T,\sigma,r)$ is the well-known Black--Scholes formula. 

\subsubsection{Infinite activity pure jump models}

The variance Gammma and normal inverse Gaussian (NIG) processes are obtained by a subordination of a Brownian motion and a tempered $\alpha$-stable process; variance Gamma and NIG processes correspond to $\alpha=0$ and $\alpha=1/2$, respectively. These models are widely used in finance because of the existence of probability density of the subordinator in a closed form for these values of $\alpha$ (see \cite{ConTan03}).

\subsubsection*{Variance Gamma process}

This is a pure discontinuous process of infinite activity and finite variation ($\int_{|x|\leq 1}|x|\nu(\ud x)<\infty$) that is widely used in the financial modeling. It has the following L\'{e}vy measure:
\begin{equation*}\nu \left(x\right)=\frac{1}{\kappa \left|x\right|}e^{Ax-B\left|x\right|} \text{ with } A=\frac{\theta}{\sigma^{2}} \text{ and } B=\frac{\sqrt{\theta^2+2\frac{\sigma^2}{\kappa}}}{\sigma^2}.
\end{equation*}
Here, $\sigma$ and $\theta$ are parameters related to the volatility and drift of the Brownian motion, respectively; $\kappa$ is the parameter related to the variance of the subordinator, which is a the Gamma process (see \cite{ConTan03}). The probability density is given by
\begin{equation*}
p_{t}(x)=Ce^{A x}|x|^{\frac{t}{k}}K_{\frac{t}{k}-\frac{1}{2}}(|x|),
\end{equation*}
where $K$ is the modified Bessel function of second kind. The characteristic function of $X_{t}+\gamma t$ is given by
\begin{equation*}
\Phi_{t}\left(u\right)=e^{itu\gamma}\phi_{t}\left(u\right)=e^{itu\gamma}\left(1+\frac{\sigma^2 u^2 \kappa}{2}-i\theta \kappa u\right)^{-t/\kappa},
\end{equation*} 
where $\gamma$ is determined by the martingale condition, and $\phi_{t}\left(u\right)$ is the characteristic function of $X_{t}$.
Moreover,  we have 
\begin{equation}
\mathbb{E}[e^{-rT}S_{T}|\mathbb{F}_{t}]=e^{-rt}S_{t},
\end{equation} 
where 
\begin{equation}
S_{t}=S_{0}e^{rt+\gamma t+X_{t}}
\end{equation}
is the risk-neutral process introduced in \cite{DPE98}. Therefore, $\gamma=\frac{1}{\kappa}log(1-\frac{\sigma^2 \kappa}{2}-\theta\kappa).$

\subsubsection*{Normal inverse Gaussian model}

The NIG process is a process of infinite activity and infinite variation without any Brownian component. It has the following L\'{e}vy measure \cite{ConTan03}

\begin{equation*}\nu\left(x\right)=\frac{C}{\left|x\right|}e^{Ax}K_{1}\left(B\left|x\right|\right)\end{equation*}

and \begin{equation*}C=\frac{\sqrt{\theta^{2}+\frac{\sigma^{2}}{\kappa}}}{2\pi\sigma\sqrt{\kappa}}, A=\frac{\theta}{\sigma^2}, B=\frac{\sqrt{\theta^{2}+\frac{\sigma^{2}}{\kappa}}}{\sigma^{2}},\end{equation*}
where $\theta$, $\sigma$, and $\kappa$ have the same meaning as in the Variance Gamma process.
The probability density is given by
\begin{equation*}
p_{t}(x)=Ce^{Ax}\frac{K_{1}(B\sqrt{x^2+\frac{t^2\sigma^2}{\kappa}})}{\sqrt{x^2+\frac{t^2\sigma^2}{\kappa}}}
\end{equation*}
where $K$ is the modified Bessel function of second kind.
The characteristic function is given by
\begin{equation}
\Phi_{t}\left(u\right)=e^{\frac{t}{\kappa}-\frac{t}{\kappa}\sqrt{1+u^2\sigma^2\kappa-2iu\theta\kappa}}.
\end{equation}
\subsubsection*{Generalized hyperbolic model}
The generalized hyperbolic model is a process of infinite variation without Gaussian part. It has the following characteristic function (see \cite{ConTan03}):
\begin{equation}
\phi_{t}(u)=e^{i\mu u}(\frac{\alpha^{2}-\beta^{2}}{\alpha^{2}-(\beta+iu)^{2}})^{\frac{t}{2\kappa}}\frac{K_{\frac{t}{\kappa}}(\delta\sqrt{\lambda^{2}-(\beta+iu)^{2}})}{K_{\frac{t}{\kappa}}(\delta\sqrt{\alpha^{2}-\beta^{2}})},
\end{equation} 
where $\delta$ is a scale parameter, $\mu$ is the shift parameter, and $\kappa$ has the same meaning as in the variance Gamma process. The parameters $\lambda$, $\alpha$, and $\beta$ determine the shape of the distribution.
The density function
\begin{equation*}
p_{t}(x)=C(\sqrt{\delta^{2}+(x-\mu)^{2}})^{\frac{t}{k}-\frac{1}{2}}K_{\frac{t}{\kappa}-\frac{1}{2}}(\alpha\sqrt{\delta^{2}-(x-\mu)^{2}})e^{\beta(x-\mu)},
\end{equation*}
where $K$ is the modified Bessel function and 
\begin{equation*} C=\frac{(\sqrt{\alpha^{2}-\beta^{2}})^{\frac{t}{k}}}{\sqrt{2\pi}\alpha^{\frac{t}{\kappa}-\frac{1}{2}}\delta^{\frac{t}{\kappa}}K_{\frac{t}{\kappa}}(\delta\sqrt{\alpha^{2}-\beta^{2}})}.
\end{equation*}
The variance Gamma process is obtained for $\mu=0$ and $\delta=0$. The NIG process corresponds to $\lambda=-\frac{1}{2}$.

\subsection{Admissible activity L\'evy measures}

This subsection presents the notion of an admissible activity L\'evy measure introduced by Cruz and \v{S}ev\v{c}ovi\v{c} \cite{NBS19, CruzSevcovic2020} for the one-dimensional case $n=1$, which was later extended by \v{S}ev\v{c}ovi\v{c} and Udeani \cite{vsevvcovivc2021multidimensional} for the multidimensional case $n\ge 1$.

\begin{definition}\cite[Definition 1]{vsevvcovivc2021multidimensional}
\label{def-admissiblemeasure}
A measure $\nu$ in $\R^n$ is called an admissible activity L\'evy measure if  there exists a nonnegative Lebesgue measurable function $h:\R^n\to \R$ such that $\nu(\ud z) = h(z) \ud z$ with 
\begin{equation}
0 \le  h(z)\le C_0 | z|^{-\alpha} e^{- D |z| - \mu |z|^{2}},
\label{growth_measure}
\end{equation}
for all $z\in\R^n$ and the shape parameters $\alpha, \mu\geq 0, D\in \R$ ($D>0$ if $\mu=0)$, where $C_0>0$ is a positive constant. 
\end{definition}
\begin{remark}
It is worth noting that the additional conditions $\int_{\mathbb{R}} \min(|z|^2,1) \nu (\ud z)<\infty$ and 
$\int_{\left|z\right|>1} e^{z} \nu (\ud z)<\infty$ are satisfied provided that $\nu$ is an admissible L\'evy measure with shape parameters $\alpha<3$, and either $\mu>0, D^\pm\in \mathbb{R}$, or $\mu=0$ and $D^-+1<0<D^+$. For the Merton model, we have $\alpha=0, D^\pm=0$ and $\mu=1/(2\delta^2)>0$. Meanwhile, for the Kou model, we have $\alpha=\mu=0, D^+=\lambda^-, D^-=-\lambda^+$. For the variance Gamma process, we have $\alpha=1, \mu=0, D^\pm=A\pm B$.
\end{remark}

\section{Multidimensional Linear and Nonlinear PIDE}
This section focuses on qualitative properties of solutions to the linear and nonlinear nonlocal parabolic PIDE of the form: 
\begin{eqnarray}
\frac{\partial u}{\partial \tau} &=& \frac{\sigma^2}{2}\Delta u 
+ \int_{\mathbb{R}^n}\left[ u(\tau, x+z)-u(\tau, x)- z\cdot \nabla_x u(\tau,x)
\right] \nu(\ud z) +  g(\tau, x, u, \nabla_x u), 
\label{PDE-u}
\\
&&u(0,x)=u_0(x), \quad x\in \R^n, \tau\in(0,T),
\nonumber
\end{eqnarray}
where $g$ is a given sufficiently smooth function; $\nu $ is a positive measure on $\mathbb{R}^n$ such that its Radon derivative is a nonnegative Lebesgue measurable function $h$ in $\R^n$, i.e., $\nu (\ud z) = h(z) \ud z$. Additionally, we will analyze the solution of the following generalization of the above PIDE, in which the shift function may depend on the variables $\tau>0, x,z\in\R$:
\begin{equation}
\frac{\partial u}{\partial \tau} = \frac{\sigma^2}{2} \Delta u 
+ \int_{\mathbb{R}^n}\left[ u(\tau, x+ \xi )-u(\tau, x)- \xi \cdot \nabla_x u(\tau,x)
\right] \nu(\ud z) +  g(\tau, x, u, \nabla_x u) ,
\label{PDE-u-general}
\end{equation}
where $\xi=\xi(\tau, x, z)$ is the shift function. An application of such a general shift function $\xi$ can be found in nonlinear option pricing models considering a large trader stock-trading strategy with the underlying asset price dynamic following the L\'evy process (c.f.,  Cruz and \v{S}ev\v{c}ovi\v{c} \cite{NBS19}). If $\xi(x, z)\equiv z$, then (\ref{PDE-u-general}) reduces to equation (\ref{PDE-u}). Fr example, the nonlinearity $g$ often arises from applications occurring in pricing XVA derivatives (c.f.,  Arregui \emph{et al.} \cite{NBS15, NBS17}) or applications of the penalty method for American option pricing under a PIDE model (c.f.,   Cruz and \v{S}ev\v{c}ovi\v{c} \cite{CruzSevcovic2020}).

\subsection{Existence and uniqueness results of PIDE}

In this section, we present the existence and uniqueness results for the general (\ref{PDE-u-general}) for a class of L\'evy measures using the theory of abstract semilinear parabolic equation in the scale of Bessel potential spaces. First, we rewrite the PIDE (\ref{PDE-u-general}) in high-dimensional space as follows:
\begin{eqnarray}
&&\frac{\partial u}{\partial \tau} + A u = 
 f(u) + g(\tau, x, u, \nabla_x u), \;\; u(0,x)=u_0(x), \; x\in \mathbb{R}^n, \tau\in(0,T),
\label{problem_transformed}
\end{eqnarray}
where  $A = -(\sigma^2/2) \Delta $. The linear nonlocal operator $f$ is defined by
\begin{equation}
f(u)(\cdot) =
\int_{\mathbb{R}^n}\left[ u(\cdot+\xi)-u(\cdot)- \xi \cdot \nabla_x u(\cdot)\, \right] \nu(\ud z),
\label{functional_f_def}
\end{equation}
where $\xi = \xi(\tau,x,z)$ is a given shift function. The function $g$ is assumed to be H\"older and Lipschitz continuous in the $\tau$ and other variables, respectively. Then, we employ the theory of abstract semilinear parabolic equations presented by Henry \cite{Henry1981} to establish the existence, continuation, and uniqueness of a solution. A solution to the PIDE (\ref{problem_transformed}) is constructed in the scale of the Bessel potential spaces ${\mathscr L}^p_{2\gamma}(\mathbb{R}^n),\gamma\ge 0$ in high-dimensional space, $n\geq 1$. These spaces can be viewed as a natural extension of the classical Sobolev spaces $W^{k,p}(\mathbb{R}^n)$ for non-integer values of order $k$. It is worth noting that nested scale of Bessel potential spaces allows for a finer formulation of existence and uniqueness results than the classical Sobolev spaces.

\begin{definition}\cite[Definition 1]{Henry1981}
An analytic semigroup is a family of bounded linear operators $\left\{S(t), t\geq 0\right\}$ in a Banach space $X$ satisfying the following conditions:
\begin{itemize}
\item[i)] $S(0)=I, S(t)S(s)=S(s)S(t)=S(t+s)$, for all $t,s\geq 0$;
\item[ii)] $S(t) u\rightarrow u$ when $t\rightarrow 0^{+}$ for all $u\in X$;
\item[iii)] $t\rightarrow S(t) u$ is a real analytic function on $0< t< \infty$ for each $u\in X$.
\end{itemize} 
The associated infinitesimal generator $A$ is defined as follows: $A u = \lim_{t\rightarrow 0^{+}} \frac{1}{t} (S(t)u-u)$ and its domain $D(A)\subseteq X$ consists of those elements $u\in X$ for which the limit exists in the space $X$. 
\end{definition}

\begin{definition}\cite{Henry1981}
Let $S_{a,\phi}=\left\{\lambda \in \mathbb{C}: \phi\leq \arg(\lambda-a)\leq 2\pi-\phi \right\}$ be a sector of complex numbers. A closed densely defined linear operator $A: D(A)\subset X \rightarrow X$ is called a sectorial operator if there exists a constant $M\geq 0$ such that $\Vert(A-\lambda )^{-1}\Vert\leq M/|\lambda -a|$ for all $\lambda\in S_{a,\phi} \subset \mathbb{C}\setminus\sigma(A)$.
\end{definition}

Next, we briefly recall the construction and basic properties of Bessel potential spaces. It is worth noting if $A$ is a sectorial operator in a Banach space $X$, then $-A$ is a generator of an analytic semigroup $\left\{e^{-A t}, t\geq 0\right\}$ acting on $X$ (c.f.,  \cite[Chapter I]{Henry1981}). For any $\gamma >0$, we can introduce the operator $A^{-\gamma}:X\to X$ as follows: $A^{-\gamma}=\frac{1}{\Gamma(\gamma)}\int_{0}^{\infty} \xi^{\gamma-1}e^{-A \xi} \ud \xi$. Then, the fractional power space $X^{\gamma} = D(A^{\gamma})$ is the domain of the operator $A^\gamma= (A^{-\gamma})^{-1}$, i.e., $X^{\gamma}=\left\{u\in X:\  \exists \varphi\in X,  u=A^{-\gamma}\varphi\right\}$. The norm is defined as follows: $\Vert u\Vert_{X^\gamma}=\Vert A^{\gamma}u\Vert_X=\Vert \varphi\Vert_X$. Furthermore, we have continuous embedding: $D(A)\equiv X^1 \hookrightarrow X^{\gamma_1} \hookrightarrow X^{\gamma_2} \hookrightarrow X^0\equiv X$, for $0\le \gamma_2\le \gamma_1\le 1$. 

Let us recall the convolution operator $(G*\varphi)(x)=\int_{\mathbb{R}^n} G(x-y)\varphi(y)\ud{y}$. According to \cite[Section 1.6]{Henry1981}, $ A= -(\sigma^2/2) \Delta$ is a sectorial operator in the Lebesgue space $X=L^{p}(\mathbb{R}^{n})$ for any $p\ge 1, n\ge 1$, and $D(A) \subset W^{2,p}(\mathbb{R}^{n})$. It follows from  \cite[Chapter 5]{Stein1970} that the space $X^\gamma, \gamma>0,$ can be identified with the Bessel potential space ${\mathscr L}^p_{2\gamma}(\mathbb{R}^n)$, where 
\[
{\mathscr L}^p_{2\gamma}(\mathbb{R}^n):=\{u\in X:\  \exists \varphi\in X, u=G_{2\gamma}*\varphi\}.
\]
Here, $G_{2\gamma}$ is the Bessel potential function, 
\[
G_{2\gamma}(x) = \frac{1}{(4\pi)^{n/2}\Gamma(\gamma)} \int_0^\infty  y^{-1+\gamma-n/2} e^{-(y +|x|^2/(4y))}\ud y .
\]
The norm of $u=G_{2\gamma}*\varphi$ is given by $\Vert u\Vert_{X^\gamma}=\Vert \varphi\Vert_{L^p}$. The space $X^\gamma$ is continuously embedded in the fractional Sobolev--Slobodeckii space $W^{2\gamma,p}(\R^n)$ (c.f.,  \cite[Section 1.6]{Henry1981}).

In what follows, we denote $C_0>0$ as a generic constant, which is independent of the solution $u$; however, it may depend on the model parameters, e.g., $n\ge 1, p\ge 1, \gamma\in[0,1)$.

\begin{proposition} \cite[Proposition 1]{vsevvcovivc2021multidimensional}
\label{prop_pointwise_est-2}
Let us define the mapping $Q(u,\xi)$ as follows:
\[
Q(u,\xi) = u(x +\xi(x))-\xi(x)\cdot\nabla_x u(x), \quad x\in\R^n.
\]
Then, there exists a constant $\hat{C}>0$ such that, for any vector valued functions  $\xi_1, \xi_2\in (L^\infty(\R^n))^n$, and $u$ such that $\nabla_x u \in (X^{\gamma-1/2})^n$, $1/2\le \gamma<1$, the following estimate holds:
\[
\Vert Q(u,\xi_1) - Q(u,\xi_2)  \Vert_{L^p({\R^n})}
\leq \hat{C} \Vert\xi_1-\xi_2\Vert_\infty^{2\gamma-1} (\Vert\xi_1\Vert_\infty+\Vert\xi_2\Vert_\infty) \Vert\nabla_x u\Vert_{X^{\gamma-1/2}}.
\]
\end{proposition}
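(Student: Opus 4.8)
The plan is to reduce the bound to a single fractional shift (translation) estimate for $\nabla_x u$ in the Bessel potential scale, and then to recover the precise symmetric form stated in the proposition by an elementary rearrangement of the two sup-norms.

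First I would rewrite the increment of $Q$. Set $w=\nabla_x u$ and, for $t\in[0,1]$, $\zeta_t(x)=(1-t)\xi_2(x)+t\,\xi_1(x)$. Applying the fundamental theorem of calculus and the chain rule to $s\mapsto u\bigl(x+\zeta_s(x)\bigr)$ and using the definition of $Q$ gives the pointwise identity
\[
Q(u,\xi_1)(x)-Q(u,\xi_2)(x)=\int_0^1\bigl[\,w\bigl(x+\zeta_t(x)\bigr)-w(x)\,\bigr]\cdot\bigl(\xi_1(x)-\xi_2(x)\bigr)\,\ud t .
\]
Since each $\zeta_t$ is a convex combination of $\xi_1$ and $\xi_2$, we have $\|\zeta_t\|_\infty\le\max(\|\xi_1\|_\infty,\|\xi_2\|_\infty)\le\|\xi_1\|_\infty+\|\xi_2\|_\infty=:M$ for every $t$. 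Taking $L^p$-norms, bounding the scalar product pointwise by $\|\xi_1-\xi_2\|_\infty\,|w(x+\zeta_t(x))-w(x)|$, and using Minkowski's integral inequality in $t$, I obtain
\[
\|Q(u,\xi_1)-Q(u,\xi_2)\|_{L^p}\le\|\xi_1-\xi_2\|_\infty\int_0^1\bigl\|\,w(\cdot+\zeta_t(\cdot))-w(\cdot)\,\bigr\|_{L^p}\,\ud t .
\]

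The core step is the shift estimate: for $w\in(X^{\gamma-1/2})^n$ with $1/2\le\gamma<1$ and any $\zeta\in(L^\infty(\R^n))^n$,
\[
\bigl\|\,w(\cdot+\zeta(\cdot))-w(\cdot)\,\bigr\|_{L^p}\le C_0\,\|\zeta\|_\infty^{\,2\gamma-1}\,\|w\|_{X^{\gamma-1/2}} .
\]
To prove this I would use the identification $X^{\gamma-1/2}={\mathscr L}^p_{2\gamma-1}(\R^n)$ recalled before the proposition, writing $w=G_{2\gamma-1}*\varphi$ with $\|\varphi\|_{L^p}=\|w\|_{X^{\gamma-1/2}}$. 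For a constant shift $h$ this is just Young's inequality, $\|w(\cdot+h)-w(\cdot)\|_{L^p}\le\|G_{2\gamma-1}(\cdot+h)-G_{2\gamma-1}(\cdot)\|_{L^1}\|\varphi\|_{L^p}$, together with the $L^1$-modulus of continuity of the Bessel kernel, $\|G_s(\cdot+h)-G_s(\cdot)\|_{L^1(\R^n)}\le C\,|h|^{s}$ for $0\le s<1$; the latter follows by splitting the integral at $|x|\sim|h|$ and using $G_s(x)\le C|x|^{s-n}$ near the origin, $|\nabla G_s(x)|\le C|x|^{s-n-1}$ away from it, and the exponential decay of $G_s$ at infinity. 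Since $\|\zeta_t\|_\infty\le M$ for all $t$, inserting the shift estimate into the previous display yields
\[
\|Q(u,\xi_1)-Q(u,\xi_2)\|_{L^p}\le C_0\,\|\xi_1-\xi_2\|_\infty\,M^{\,2\gamma-1}\,\|\nabla_x u\|_{X^{\gamma-1/2}} .
\]

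Finally I would pass to the stated form: since $\|\xi_1-\xi_2\|_\infty\le\|\xi_1\|_\infty+\|\xi_2\|_\infty=M$ and $2-2\gamma>0$,
\[
\|\xi_1-\xi_2\|_\infty\,M^{\,2\gamma-1}=\|\xi_1-\xi_2\|_\infty^{\,2\gamma-1}\bigl(\|\xi_1-\xi_2\|_\infty/M\bigr)^{\,2-2\gamma}M\le\|\xi_1-\xi_2\|_\infty^{\,2\gamma-1}M ,
\]
which is the asserted inequality with $\hat C=C_0$. The hard part is the shift estimate; everything else is the fundamental theorem of calculus, Hölder's and Minkowski's inequalities, and this last rearrangement. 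The delicate points inside the shift estimate are the $L^1$ modulus-of-continuity bound for the Bessel kernel and the passage from a constant shift to the $x$-dependent bounded shift $\zeta$ (where one wants the ambient Sobolev embedding to ensure that $\nabla_x u$ has a continuous representative and that $w\circ(\mathrm{id}+\zeta)$ is a bona fide element of $L^p$) --- both standard but requiring care.
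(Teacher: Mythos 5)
Your argument is correct and rests on the same two ingredients as the paper's proof: the identification $\nabla_x u = G_{2\gamma-1}*\varphi$ with $\Vert\varphi\Vert_{L^p}=\Vert\nabla_x u\Vert_{X^{\gamma-1/2}}$, and the $L^1$-modulus bound $\Vert G_{2\gamma-1}(\cdot+h)-G_{2\gamma-1}(\cdot)\Vert_{L^1}\le C_0|h|^{2\gamma-1}$ combined with Young's convolution inequality. Where you genuinely differ is the decomposition and the bookkeeping. The paper splits $Q(u,\xi_1)-Q(u,\xi_2)$ into the two pieces $(\xi_1-\xi_2)\cdot\int_0^1[\nabla_x u(x+\theta \xi_1)-\nabla_x u(x)]\,\ud\theta$ and $\xi_2\cdot\int_0^1[\nabla_x u(x+\theta \xi_1)-\nabla_x u(x+\theta \xi_2)]\,\ud\theta$, estimates them separately to obtain the asymmetric bound $\Vert\xi_1-\xi_2\Vert_\infty\Vert\xi_1\Vert_\infty^{2\gamma-1}+\Vert\xi_2\Vert_\infty\Vert\xi_1-\xi_2\Vert_\infty^{2\gamma-1}$, and then must invoke the numerical Young inequality $ab\le a^\alpha/\alpha+b^\beta/\beta$ with $\alpha=1/(2-2\gamma)$, $\beta=1/(2\gamma-1)$ to recover the symmetric form of the statement. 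Your single line integral along $\zeta_t=(1-t)\xi_2+t\xi_1$ collapses this into one term whose shift is controlled by $\max(\Vert\xi_1\Vert_\infty,\Vert\xi_2\Vert_\infty)$, so you only ever need the ``one-point'' translation estimate (never the difference of two variable shifts), and your symmetrization reduces to the trivial inequality $\Vert\xi_1-\xi_2\Vert_\infty\le\Vert\xi_1\Vert_\infty+\Vert\xi_2\Vert_\infty$; your intermediate bound $\Vert\xi_1-\xi_2\Vert_\infty M^{2\gamma-1}$ in fact implies the stated one. One caveat, which you correctly flag and which applies equally to the paper's proof: for an $x$-dependent shift $\zeta(x)$ the expression $\int_{\R^n}\bigl[G_{2\gamma-1}(x+\zeta(x)-y)-G_{2\gamma-1}(x-y)\bigr]\varphi(y)\,\ud y$ is no longer a convolution, so passing from the constant-shift case to $\Vert w(\cdot+\zeta(\cdot))-w(\cdot)\Vert_{L^p}\le C_0\Vert\zeta\Vert_\infty^{2\gamma-1}\Vert\varphi\Vert_{L^p}$ is not literally Young's inequality and requires an additional kernel argument; the paper simply writes the variable-shift expression as a convolution without comment, so on this delicate point you are not below the paper's level of rigor.
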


\begin{proof}
Let $u\in X$ be such that  $\nabla_x u\in (X^{\gamma-1/2})^n$, i.e., $\partial_{x_i}u\in X^{\gamma-1/2} ~ \text{for each} ~ i = 1, \cdots, n$. Then, $\nabla_x u= A^{-(2\gamma-1)/2}\varphi  = G_{2\gamma-1} * \varphi$ for some $\varphi\in (L^p(\R^n))^n$, and $\Vert \nabla_x u\Vert_{X^{\gamma-1/2}} = \Vert A^{(2\gamma-1)/2}\nabla_x u\Vert_X = \Vert \varphi\Vert_{L^p}$. 
Here, $\varphi = (\varphi_1, \cdots, \varphi_n)$ and $ \partial_{x_i}u =  G_{2\gamma-1} * \varphi_i$. Let $x, \xi\in \R^n$. Then, 
\[
\nabla_x u(x+\xi) = G_{2\gamma-1}(x+\xi - \cdot)* \varphi(\cdot), \qquad  \nabla_x u(x)
= G_{2\gamma-1}(x - \cdot)* \varphi(\cdot).
\]
Recall that the following inequality holds for convolution operator:
\[
\Vert \psi*\varphi\Vert_{L^p(\R^n)}\le \Vert \psi\Vert_{L^q(\R^n)} \Vert \varphi\Vert_{L^r(\R^n)},
\]
where  $p,q,r\ge 1$ and $1/p + 1 = 1/q + 1/r$ (see \cite[Section 1.6]{Henry1981}). In particular, for $q=1$, we have $\Vert \psi*\varphi\Vert_{L^p}\le \Vert \psi\Vert_{L^1} \Vert \varphi\Vert_{L^p}$. 
The following estimate holds for the modulus of continuity of the Bessel potential function $G_{\alpha}, \alpha\in(0,1)$:
\[
\Vert G_{\alpha}(\cdot + h) - G_{\alpha}(\cdot) \Vert_{L^1}
\le C_0 |h|^{\alpha},
\]
for any $h\in \R^n$ (c.f.,  \cite[Chapter 5.4, Proposition 7]{Stein1970}).
Let $\xi_1, \xi_2$ be bounded vector valued  functions, i.e., $\xi_1, \xi_2\in (L^\infty(\R^n))^n$. Then, for any $x\in \R^n ~ \text{and} ~\theta\in [0,1]$, we have
\begin{eqnarray*}
&&
u(x+\xi_1(x)) - u(x+\xi_2(x)) - (\xi_1(x) -\xi_2(x)) \cdot\nabla_x u(x)
\\
&=& u(x+\xi_1(x)) - \nabla_x u(x) - \xi_1(x)\cdot \nabla_x u(x) 
\\
&& - [u(x+\xi_2(x))- \nabla_x u(x) -\xi_2(x)\cdot \nabla_x u(x)]
\\
&=& (\xi_1(x) -\xi_2(x))\int_0^1\nabla_x u(x+\theta \xi_1(x)) - \nabla_x u(x)\ud{\theta}  
\\
&&
+ \int_0^1 \nabla_x u(x+\theta \xi_1(x))-  \nabla_x u(x+\theta \xi_2(x))\ud{\theta}
\ .
\end{eqnarray*}
Now,
\begin{eqnarray*}
&&
\Vert Q(u,\xi_1) - Q(u,\xi_2)  \Vert^p_{L^p({\R^n})}
\\
&=& \int_{\R^n} |u(x+\xi_1(x)) - u(x+\xi_2(x)) - (\xi_1(x) - \xi_2(x)) \cdot\nabla_x u(x)|^p\ud{x}
\\
&\le&  \int_{\R^n} \left|(\xi_1(x)-\xi_2(x)) \int_0^1 \nabla_x u(x+\theta \xi_1(x) )-  \nabla_x u(x)\ud{\theta} \right|^p \ud{x}
\\
&& + \int_{\R^n} \left|\xi_2(x) \int_0^1 \nabla_x u(x+\theta \xi_1(x))- \nabla_x u(x+\theta\xi_2(x))\ud{\theta} \right|^p \ud{x}
\\
&\le& \Vert\xi_1 - \xi_2\Vert_{\infty}^p\int_0^1 \int_{\R^n} |\nabla_x u(x+\theta\xi_1(x))  - \nabla_x u(x)|^p dx d\theta 
\\
&& + \Vert\xi_2\Vert_{\infty}^p\int_0^1 \int_{\R^n} |\nabla_x u(x+\theta\xi_1(x))  - \nabla_x u(x+\theta\xi_2(x)|^p dx d\theta 
\\
&\le&  \Vert\xi_1 - \xi_2\Vert_{\infty}^p\int_0^1 \Vert\left( G_{2\gamma-1}(\cdot + \theta\xi_1) - G_{2\gamma-1}(\cdot)\right)*\varphi\Vert_{L^p}^p d\theta 
\\
&& + \Vert\xi_2\Vert_{\infty}^p\int_0^1 \Vert\left( G_{2\gamma-1}(\cdot + \theta\xi_1) - G_{2\gamma-1}(\cdot+\theta\xi_2)\right)*\varphi\Vert_{L^p}^p d\theta 
\\
&\le&  \Vert\xi_1 - \xi_2\Vert_{\infty}^p\int_0^1  \Vert G_{2\gamma-1}(\cdot + \theta\xi_1 ) - G_{2\gamma-1}(\cdot)\Vert^p_{L^1} d\theta \Vert\varphi\Vert_{L^p}^p 
\\
&& +  \Vert \xi_2\Vert_{\infty}^p \int_0^1 \Vert G_{2\gamma-1}(\cdot + \theta\xi_1) - G_{2\gamma-1}(\cdot+\theta\xi_2)\Vert^p_{L^1} d\theta \Vert\varphi\Vert_{L^p}^p
\\
&\le&  \left(\Vert\xi_1 - \xi_2\Vert_{\infty}^p \Vert\xi_1 \Vert^{(2\gamma-1)p}_{\infty}
+
\Vert \xi_2\Vert_{\infty}^p \Vert\xi_1-\xi_2 \Vert^{(2\gamma-1)p}_{\infty}\right)C_0^p\Vert\nabla_x u\Vert^p_{X^{\gamma-1/2}} 
\\
&\le&  \Vert\xi_1-\xi_2 \Vert^{(2\gamma-1)p}_{\infty}\left(\Vert\xi_1\Vert_{\infty}^p +\Vert\xi_2\Vert^{(2-2\gamma)p}_{\infty} \Vert\xi_1 \Vert^{(2\gamma-1)p}_{\infty}
+
\Vert \xi_2\Vert_{\infty}^p \right)C_0^p\Vert\nabla_x u\Vert^p_{X^{\gamma-1/2}} 
\ .
\end{eqnarray*}
By Young's inequality, we have $ab\leq \frac{a^\alpha}{\alpha}+ \frac{b^\beta}{\beta}$ for any $a, b\geq 0$, and $\alpha, \beta>1 $ with $1/\alpha + 1/\beta=1$  (c.f., \cite{brezis2010functional}). Set $ \alpha = 1/(2- 2\gamma), \beta =1/(2\gamma -1)$. Then, $1/\alpha+1/\beta=1$, and  we obtain 
$\Vert\xi_2\Vert^{(2-2\gamma)p}_{\infty} \Vert\xi_1 \Vert^{(2\gamma-1)p}_{\infty}\leq (2-2\gamma)\Vert \xi_2\Vert_{\infty}^p + (2\gamma -1)\Vert \xi_1\Vert_{\infty}^p\leq 2\Vert\xi_2\Vert_{\infty}^p + \Vert\xi_1\Vert_{\infty}^p$. Therefore, 
\begin{eqnarray*}
\Vert Q(u,\xi_1) - Q(u,\xi_2)  \Vert_{L^p({\R^n})}^p
&\le&  2\Vert\xi_1-\xi_2 \Vert^{(2\gamma-1)p}_{\infty}\left(\Vert\xi_1\Vert_{\infty}^p +
\Vert \xi_2\Vert_{\infty}^p \right)C_0^p\Vert\nabla_x u\Vert^p_{X^{\gamma-1/2}} 
\\
&\leq& 2 C_0^p \Vert\xi_1-\xi_2\Vert_\infty^{(2\gamma-1)p} (\Vert\xi_1\Vert_\infty+\Vert\xi_2\Vert_\infty)^p \Vert\nabla_x u\Vert^p_{X^{\gamma-1/2}}.
\end{eqnarray*}
Hence, the pointwise estimate holds with the constant $\hat{C} = 2^{1/p}C_0 >0$. 
\end{proof}

Applying  Proposition \ref{prop_pointwise_est-2} with $\xi_1=\xi ~ \text{and} ~ \xi_2=0$, we obtain the following corollary. 

\begin{corollary}\cite[Corollary 1]{vsevvcovivc2021multidimensional}
\label{prop_pointwise_est}
Let $u$ be such that $\nabla_x u \in (X^{\gamma-1/2})^n$ where $1>\gamma\ge 1/2$. Then, for any  $\xi\in\R^n$, the following pointwise estimate holds:
\[
\Vert Q(u,\xi) \Vert_{L^p({\R^n})}\leq C_0|\xi|^{2\gamma} \Vert\nabla_x u\Vert_{X^{\gamma-1/2}}.
\]
\end{corollary}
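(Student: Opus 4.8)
The plan is to derive this as the special case of Proposition~\ref{prop_pointwise_est-2} in which the shift is a constant vector and the second argument is the zero shift. First I would observe that a fixed $\xi\in\R^n$ may be regarded as the constant vector-valued function $x\mapsto\xi$, which belongs to $(L^\infty(\R^n))^n$ with $\Vert\xi\Vert_\infty=|\xi|$; likewise $\xi\equiv 0$ lies in $(L^\infty(\R^n))^n$ with $\Vert 0\Vert_\infty=0$. Since $Q(u,0)(x)=u(x)-u(x)-0\cdot\nabla_x u(x)=0$, we may write $Q(u,\xi)=Q(u,\xi)-Q(u,0)$, so the hypotheses of Proposition~\ref{prop_pointwise_est-2} are met with $\xi_1=\xi$ and $\xi_2=0$ under exactly the standing assumption $\nabla_x u\in(X^{\gamma-1/2})^n$, $1/2\le\gamma<1$.

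Second, I would substitute these choices into the estimate of Proposition~\ref{prop_pointwise_est-2}:
\[
\Vert Q(u,\xi)\Vert_{L^p(\R^n)}
\le\hat C\,\Vert\xi-0\Vert_\infty^{2\gamma-1}\bigl(\Vert\xi\Vert_\infty+\Vert 0\Vert_\infty\bigr)\Vert\nabla_x u\Vert_{X^{\gamma-1/2}}
=\hat C\,|\xi|^{2\gamma-1}|\xi|\,\Vert\nabla_x u\Vert_{X^{\gamma-1/2}}.
\]
Combining the two powers of $|\xi|$ gives $|\xi|^{2\gamma}$, and relabelling the constant $\hat C$ as the generic constant $C_0$ yields precisely the asserted pointwise bound $\Vert Q(u,\xi)\Vert_{L^p(\R^n)}\le C_0|\xi|^{2\gamma}\Vert\nabla_x u\Vert_{X^{\gamma-1/2}}$.

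There is no genuine obstacle here: the entire content already sits inside Proposition~\ref{prop_pointwise_est-2}, and the only thing requiring (minor) care is the bookkeeping — reading the fixed vector $\xi$ as a constant $L^\infty$ function so that $\Vert\xi\Vert_\infty=|\xi|$, and noting that the zero shift makes $Q(u,0)$ vanish so that the left-hand side of the proposition's inequality reduces to $\Vert Q(u,\xi)\Vert_{L^p}$. If one preferred a self-contained argument, the same estimate also follows directly by writing $u(x+\xi)-u(x)-\xi\cdot\nabla_x u(x)=\xi\cdot\int_0^1[\nabla_x u(x+\theta\xi)-\nabla_x u(x)]\,\ud\theta$, representing $\nabla_x u=G_{2\gamma-1}*\varphi$ with $\Vert\varphi\Vert_{L^p}=\Vert\nabla_x u\Vert_{X^{\gamma-1/2}}$, and applying Young's convolution inequality together with the H\"older modulus-of-continuity bound $\Vert G_{2\gamma-1}(\cdot+h)-G_{2\gamma-1}(\cdot)\Vert_{L^1}\le C_0|h|^{2\gamma-1}$ from \cite[Chapter~5.4, Proposition~7]{Stein1970}; this, however, merely reproduces the computation already carried out in the proof of Proposition~\ref{prop_pointwise_est-2}.
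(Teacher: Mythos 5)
Your proposal is correct and is exactly the paper's route: the paper derives the corollary by applying Proposition~\ref{prop_pointwise_est-2} with $\xi_1=\xi$ (viewed as a constant function, so $\Vert\xi_1\Vert_\infty=|\xi|$) and $\xi_2=0$, which collapses the bound $\hat C\Vert\xi\Vert_\infty^{2\gamma-1}\Vert\xi\Vert_\infty$ to $C_0|\xi|^{2\gamma}$. Your observation that $Q(u,\xi)-Q(u,0)$ reduces to the Taylor remainder $u(x+\xi)-u(x)-\xi\cdot\nabla_x u(x)$ is the only bookkeeping step involved, and it is handled correctly.
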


Next, we consider the case when the nonlocal integral term depends on $x$ and $z$ variables. It is a generalization of the result \cite[Lemma 3.4]{NBS19} due to Cruz and \v{S}ev\v{c}ovi\v{c} proven for the case where $\xi(x,z)\equiv z$.

\begin{proposition}\cite[Proposition 2]{vsevvcovivc2021multidimensional}
\label{prop-f} 
Suppose that the shift mapping $\xi=\xi(x,z)$ satisfies $\sup_{x\in\R} |\xi(x,z)|  \le C_0 |z|^\omega (1+ e^{D_0 |z|})$ for some constants $C_0>0, D_0\ge 0, \omega>0$ and any $z\in\R$. Assume $\nu$ is a L\'evy measure with the shape parameters $\alpha, D,$ and either $\mu>0, D\in\R$, or $\mu=0$ and $D>D_0\ge 0$.  Assume $1/2\le \gamma <1$, and $\gamma> (\alpha-n)/(2\omega)$. Then there exists a constant $C_0>0$ such that
\[
\Vert f(u)\Vert_{L^p} \le C_0 \Vert \nabla_x u\Vert_{X^{\gamma-1/2}},
\]
provided that $\nabla_x u \in (X^{\gamma-1/2})^n$. If $u\in X^\gamma$ then $\Vert f(u)\Vert_{L^p} \le C \Vert u\Vert_{X^\gamma}$, i.e., $f:X^\gamma\to X$ is a bounded linear operator.
\end{proposition}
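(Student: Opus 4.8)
The plan is to move the $L^p$-norm inside the L\'evy integral and then estimate the near-origin and tail contributions separately. From the definition (\ref{functional_f_def}) and Minkowski's integral inequality,
\[
\Vert f(u)\Vert_{L^p(\R^n)} \le \int_{\R^n} I(z)\,\nu(\ud z), \qquad
I(z) := \bigl\Vert u(\cdot+\xi(\cdot,z)) - u(\cdot) - \xi(\cdot,z)\cdot\nabla_x u(\cdot)\bigr\Vert_{L^p(\R^n)} ,
\]
and the integrand in $I(z)$ is precisely $Q(u,\xi(\cdot,z))-Q(u,0)$ in the notation of Proposition~\ref{prop_pointwise_est-2}. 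Setting $s(z):=\sup_{x\in\R}|\xi(x,z)|$, the hypothesis gives $s(z)\le C_0|z|^\omega(1+e^{D_0|z|})$, so the task reduces to bounding $I(z)$ by a power of $s(z)$ and verifying that the resulting weight is $\nu$-integrable.

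Applying Proposition~\ref{prop_pointwise_est-2} (equivalently Corollary~\ref{prop_pointwise_est}) with $\xi_1=\xi(\cdot,z)$ and $\xi_2\equiv 0$ --- legitimate since $\xi(\cdot,z)\in(L^\infty(\R^n))^n$, $1/2\le\gamma<1$, and $\nabla_x u\in(X^{\gamma-1/2})^n$ --- gives $I(z)\le \hat C\, s(z)^{2\gamma}\Vert\nabla_x u\Vert_{X^{\gamma-1/2}}$ for all $z$. On $\{|z|\le 1\}$ the factor $1+e^{D_0|z|}$ is bounded, hence $I(z)\le C_0|z|^{2\gamma\omega}\Vert\nabla_x u\Vert_{X^{\gamma-1/2}}$, and with $\nu(\ud z)=h(z)\ud z$ and the admissibility bound (\ref{growth_measure}),
\[
\int_{|z|\le 1} I(z)\,\nu(\ud z)\le C_0\Vert\nabla_x u\Vert_{X^{\gamma-1/2}}\int_{|z|\le 1}|z|^{2\gamma\omega-\alpha}\,\ud z ,
\]
which is finite exactly because $2\gamma\omega-\alpha+n>0$, i.e. $\gamma>(\alpha-n)/(2\omega)$; this is where the assumption on $\gamma$ enters. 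On the tail $\{|z|>1\}$, (\ref{growth_measure}) yields $\int_{|z|>1}I(z)\,\nu(\ud z)\le C_0\Vert\nabla_x u\Vert_{X^{\gamma-1/2}}\int_{|z|>1}|z|^{2\gamma\omega-\alpha}(1+e^{D_0|z|})^{2\gamma}e^{-D|z|-\mu|z|^2}\,\ud z$, and this integral converges because either $\mu>0$, so the Gaussian factor $e^{-\mu|z|^2}$ absorbs any polynomial–exponential growth, or $\mu=0$ and the exponential decay $e^{-D|z|}$ with $D>D_0$ dominates. Adding the two pieces gives $\Vert f(u)\Vert_{L^p}\le C_0\Vert\nabla_x u\Vert_{X^{\gamma-1/2}}$.

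For the final assertion I would use the standard mapping property of the Bessel-potential scale that $\nabla_x\colon X^\gamma\to(X^{\gamma-1/2})^n$ is bounded: $\partial_{x_i}(-\Delta)^{-1/2}$ is, up to a constant, a Riesz transform, hence a bounded Fourier multiplier on $L^p$ (see \cite{Stein1970}), and it commutes with $A^{\gamma-1/2}$, so $\Vert\partial_{x_i}u\Vert_{X^{\gamma-1/2}}=\Vert A^{(2\gamma-1)/2}\partial_{x_i}u\Vert_{L^p}=\Vert\partial_{x_i}(-\Delta)^{-1/2}A^{\gamma}u\Vert_{L^p}\le C_0\Vert u\Vert_{X^\gamma}$. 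Thus $u\in X^\gamma$ implies $\nabla_x u\in(X^{\gamma-1/2})^n$, the first estimate then gives $\Vert f(u)\Vert_{L^p}\le C\Vert u\Vert_{X^\gamma}$, and since $f$ is evidently linear in $u$ by (\ref{functional_f_def}), $f\colon X^\gamma\to X$ is a bounded linear operator.

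I expect the only genuinely delicate point to be the behaviour at $z=0$: matching the singularity $|z|^{-\alpha}$ of the L\'evy density against the H\"older exponent $2\gamma$ produced by Proposition~\ref{prop_pointwise_est-2} is what pins down the sharp threshold $\gamma>(\alpha-n)/(2\omega)$, and one must resist using a cruder Lipschitz-type bound there. In the tail the argument instead rests entirely on the decay encoded in the admissibility condition (\ref{growth_measure}); the remaining ingredients --- Minkowski's inequality, the embedding $X^{\gamma-1/2}\hookrightarrow X^0=L^p$ for $\gamma\ge 1/2$, and $L^p$-boundedness of Riesz transforms --- are routine.
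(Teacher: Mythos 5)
Your proof is correct and reaches the same threshold $\gamma>(\alpha-n)/(2\omega)$ as the paper, but it handles the $z$-integration differently. The paper works with $\Vert f(u)\Vert_{L^p}^p$, factors the L\'evy density as $h=h_1h_2$ with $h_1(z)=|z|^{-\beta}\tilde h(z)^{1/2}$, $h_2(z)=|z|^{\beta-\alpha}\tilde h(z)^{1/2}$, and applies H\"older's inequality in $z$; this forces it to introduce the auxiliary exponent $\beta$ and verify that the window $\alpha-n+n/p<\beta<2\gamma\omega+n/p$ is nonempty, which is exactly the condition $\gamma>(\alpha-n)/(2\omega)$. You instead apply Minkowski's integral inequality to pull the $L^p$-norm inside $\int(\cdot)\,\nu(\ud z)$, invoke the same key estimate (Proposition~\ref{prop_pointwise_est-2} with $\xi_2\equiv 0$, i.e.\ Corollary~\ref{prop_pointwise_est}) to get $I(z)\le \hat C\,s(z)^{2\gamma}\Vert\nabla_x u\Vert_{X^{\gamma-1/2}}$, and then check integrability of $s(z)^{2\gamma}h(z)$ directly, splitting at $|z|=1$. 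Your route is more economical: it dispenses with $\beta$ and the splitting of $h$ entirely and makes transparent that the origin governs the restriction on $\gamma$ while the tail only uses the decay in (\ref{growth_measure}); the paper's H\"older route buys nothing extra here since both reduce to the same nonempty-interval condition. Two small remarks: (i) in the tail with $\mu=0$ your bound actually produces the factor $e^{2\gamma D_0|z|}$, so literally one needs $D>2\gamma D_0$ rather than $D>D_0$ --- but the paper's own proof silently discards the same factor $(1+e^{D_0|z|})^{2\gamma}$, so this is a shared imprecision, not a gap you introduced; (ii) your Riesz-transform/Mikhlin-multiplier argument for $\Vert\nabla_x u\Vert_{X^{\gamma-1/2}}\le C\Vert u\Vert_{X^\gamma}$ supplies a justification for the final assertion that the paper merely states, though it requires $1<p<\infty$ rather than $p\ge 1$.
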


\begin{proof}
The L\'evy measure $\nu(\ud z)$ is given by   $\nu(\ud z)= h(z) \ud z$. Let us denote the auxiliary function $\tilde h(z)=|z|^\alpha h(z)$. Then, $0\le \tilde h(z)\le C_0 e^{- D |z| - \mu |z|^{2}}$. Since $h(z) =  |z|^{-\alpha} \tilde h(z) = h_1(z) h_2(z)$, where $h_1(z)=|z|^{-\beta} \tilde h(z)^\frac12$ and $h_2(z)=|z|^{\beta-\alpha} \tilde h(z)^\frac12$.
Applying  Proposition~\ref{prop_pointwise_est} with $\xi_1=\xi, \xi_2=0$, and using the H\"older inequality, we obtain
\begin{eqnarray*}
\Vert f(u)\Vert_{L^p}^p 
&=& \int_{\R^n} \left|\int_{\R^n} ( u(x+\xi(x,z)) - u(x) - \xi(x,z)\cdot \nabla_x u(x)) h(z)\ud{z}  \right|^p\ud{x}
\\
&\le& \int_{\R^n} \int_{\R^n} \left| u(x+\xi(x,z)) - u(x) - \xi(x,z) \cdot \nabla_x u(x)\right|^p h_1(z)^p \ud{z} 
\\
&& \quad\times \left(\int_{\R^n} h_2(z)^q \ud{z}\right)^{p/q} \ud{x}
\\
&=& \int_{\R^n} \left(\int_{\R^n} \left| u(x+\xi(x,z)) - u(x) - \xi(x,z) \cdot\nabla_x u(x)\right|^p \ud{x}\right) h_1(z)^p \ud{z} 
\\
&& \quad\times \left(\int_{\R^n} h_2(z)^q \ud{z}\right)^{p/q} 
\\
&\le & 
C_0^p  \Vert \nabla_x u\Vert_{X^{\gamma-1/2}}^p
\int_{\R^n} |\xi(x,z)|^{2\gamma p} |z|^{-\beta p} \tilde h(z)^{p/2} \ud{z} \left(\int_{\R^n} h_2(z)^q \ud{z}\right)^{p/q}
\\
&\le & 
C_0^p  \Vert \nabla_x u\Vert_{X^{\gamma-1/2}}^p
\int_{\R^n} |z|^{(2\gamma\omega -\beta) p} \tilde h(z)^{p/2} \ud{z} \left(\int_{\R^n} h_2(z)^q \ud{z}\right)^{p/q}.
\end{eqnarray*}
Assuming $p,q\ge 1, 1/p +1/q=1$ are such that 
\[
(2\gamma\omega -\beta) p > -n, \qquad (\beta-\alpha) q = (\beta-\alpha) \frac{p}{p-1} >-n,
\]
then, the integrals 
$\int_{\R^n} |z| ^{(2\gamma\omega -\beta) p} \tilde h(z)^{p/2} \ud{z}$ and 
$\int_{\R^n} h_2(z)^q \ud{z} = \int_{\R^n} |z| ^{(\beta-\alpha) q} \tilde h(z)^{q/2} \ud{z}$
are finite, provided that the shape parameters satisfy: either $\mu>0, D\in\mathbb{R}$, or $\mu=0, D>D_0\ge 0$. As $\gamma>(\alpha-n)/(2\omega)$, there exists $\beta>1$
satisfying 
\[
\alpha-n+n/p < \beta < 2\gamma\omega +n/p.
\]
Therefore, there exists $C_0>0$ such that $\Vert f(u)\Vert_{L^p} \le C_0 \Vert \nabla_x u\Vert_{X^{\gamma-1/2}}$.
\end{proof}

Let $C([0,T],X^{\gamma})$ be the Banach space consisting of continuous functions from $[0,T]$ to $X^\gamma$ with the maximum norm. The following proposition is due to Henry \cite{Henry1981} (see also Cruz and \v{S}ev\v{c}ovi\v{c} \cite{CruzSevcovic2020}).

\begin{proposition}\cite[Proposition 3.5]{Henry1981}
\label{semilinear_general_existence_result}
Suppose that the linear operator $-A$ is a generator of an analytic semigroup $\left\{e^{-At},t\geq 0\right\}$ in a Banach space $X$. Assume the initial condition $U_0$ belongs to the space $X^{\gamma}$ where $0\leq \gamma <1$.  Suppose that the mappings $F:[0,T]\times X^{\gamma}\to X$ and $h:(0,T]\to X$ are H\"older continuous in the $\tau$ variable, $\int_0^T \Vert h(\tau)\Vert_X \ud \tau <\infty$, and $F$ is Lipschitz continuous in the $U$ variable. Then, for any $T>0$, there exists a unique solution to the abstract semilinear evolution equation: $\partial_\tau  U+A U=F(\tau, U) +h(\tau)$ 
such that $U\in C([0,T],X^{\gamma}), U(0)=U_{0}, \partial_\tau U(\tau) \in X, U(\tau)\in D(A)$ for any $\tau\in (0,T)$. The function $U$ is a solution in the mild (integral) sense, i.e., 
$U(\tau) = e^{-A \tau} U_0 + \int_0^\tau e^{-A (\tau-s)} (F(s, U(s)) + h(s) ) \ud{s}$, $\tau\in[0,T]$.
\end{proposition}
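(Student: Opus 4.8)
The plan is to run a Banach fixed point argument on the variation-of-constants (Duhamel) map and then bootstrap regularity, which is the standard route for abstract semilinear parabolic problems; the point specific to this statement is that the \emph{global} Lipschitz hypothesis on $F$ upgrades the usual local existence to existence on the whole interval $[0,T]$. First I would record the smoothing estimates for the analytic semigroup generated by $-A$: there is $M\ge 1$ and, for each $\gamma\in[0,1)$, a constant $C_\gamma>0$ with $\Vert e^{-At}\Vert_{\mathcal{L}(X)}\le M$ and $\Vert A^\gamma e^{-At}\Vert_{\mathcal{L}(X)}\le C_\gamma t^{-\gamma}$ for $t\in(0,T]$, together with the strong-continuity fact that $\Vert(e^{-At}-I)v\Vert_{X^\gamma}\to 0$ as $t\to0^+$ for every $v\in X^\gamma$.

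Next, on a short interval $[0,T_0]$ I would define the map $\mathcal{F}$ on $C([0,T_0],X^\gamma)$ by
\[
(\mathcal{F}U)(\tau)=e^{-A\tau}U_0+\int_0^\tau e^{-A(\tau-s)}\bigl(F(s,U(s))+h(s)\bigr)\,\ud s,
\]
and show that, for $T_0$ small enough, $\mathcal{F}$ maps a closed ball of $C([0,T_0],X^\gamma)$ centred at the constant function $U_0$ into itself and is a contraction there. Both the ball-invariance and the contraction follow by applying $A^\gamma$ under the integral, using $\Vert A^\gamma e^{-A(\tau-s)}\Vert_{\mathcal{L}(X)}\le C_\gamma(\tau-s)^{-\gamma}$, the Lipschitz continuity of $F$ in its second argument, and the integrability $\int_0^\tau(\tau-s)^{-\gamma}\,\ud s=\tau^{1-\gamma}/(1-\gamma)<\infty$ which holds precisely because $\gamma<1$; the $h$ contribution is controlled by $M\int_0^\tau\Vert h(s)\Vert_X\,\ud s$, finite by hypothesis. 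This produces a unique fixed point $U\in C([0,T_0],X^\gamma)$, i.e. a local mild solution, and uniqueness on $[0,T_0]$ is immediate from the contraction estimate (or a singular Gronwall argument).

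To continue the solution to all of $[0,T]$ I would exploit that $F$ is globally Lipschitz in $U$: the Duhamel identity together with the estimates above gives $\Vert U(\tau)\Vert_{X^\gamma}\le M\Vert U_0\Vert_{X^\gamma}+C\int_0^\tau(\tau-s)^{-\gamma}\bigl(1+\Vert U(s)\Vert_{X^\gamma}\bigr)\,\ud s+C\int_0^\tau\Vert h(s)\Vert_X\,\ud s$, and the singular (Henry--Gronwall) inequality then bounds $\Vert U(\tau)\Vert_{X^\gamma}$ a priori on any interval where the solution exists. Hence the local existence time can be chosen uniform, and repeating the fixed point construction finitely many times extends $U$ to a mild solution on $[0,T]$; continuity of $\tau\mapsto U(\tau)$ into $X^\gamma$ on all of $[0,T]$ — including at $\tau=0$, where one uses $U_0\in X^\gamma$ and the strong-continuity fact above — is checked by splitting the Duhamel integral near the current time and invoking the smoothing bounds.

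Finally I would upgrade the mild solution to the claimed regularity by the classical parabolic bootstrap. From the fixed-point estimates one first shows that $\tau\mapsto U(\tau)$ is locally H\"older continuous from $(0,T]$ into $X^\gamma$; combined with the hypotheses that $F$ is H\"older in $\tau$ and Lipschitz in $U$ and that $h$ is H\"older on $(0,T]$, this makes $s\mapsto F(s,U(s))+h(s)$ locally H\"older continuous from $(0,T]$ into $X$. The regularity theorem for the inhomogeneous linear problem $\partial_\tau U+AU=G(\tau)$ with $G$ locally H\"older then yields $U(\tau)\in D(A)$ and $\partial_\tau U(\tau)\in X$ for $\tau\in(0,T)$ with the equation holding in $X$; this is exactly the content of Henry's theory, so at that point one may simply quote \cite[Chapter~3]{Henry1981}. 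The step I expect to demand the most care is the $X^\gamma$-valued (H\"older) continuity of the Duhamel term on $(0,T]$, since its integrand carries the integrable but genuine singularity $(\tau-s)^{-\gamma}$ and the estimates must be arranged so that the H\"older modulus near $\tau$ remains controlled; the remaining steps are routine applications of the analytic-semigroup calculus.
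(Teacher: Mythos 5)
The paper offers no proof of this proposition --- it is quoted directly from Henry's monograph --- and your sketch reproduces exactly the argument given there: contraction of the Duhamel map on $C([0,T_0],X^{\gamma})$ using the smoothing bound $\Vert A^{\gamma}e^{-At}\Vert\le C_{\gamma}t^{-\gamma}$, continuation to $[0,T]$ via the global Lipschitz hypothesis and the singular (Henry--Gronwall) inequality, and a H\"older bootstrap to obtain $U(\tau)\in D(A)$ and $\partial_\tau U(\tau)\in X$. The one imprecision is your claim that the $h$-contribution is controlled by $M\int_0^\tau\Vert h(s)\Vert_X\,\ud s$: to land in $X^{\gamma}$ that term also carries the singular weight $(\tau-s)^{-\gamma}$, which one handles by splitting the integral at $\tau/2$, using the integrability of $\Vert h\Vert_X$ near $s=0$ where the weight is bounded, and the H\"older continuity of $h$ on compact subsets of $(0,T]$ near $s=\tau$ where the weight is merely integrable.
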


Applying Propositions~\ref{prop-f} and \ref{semilinear_general_existence_result}, we can state  the following result which is a nontrivial generalization of the result shown by \v{S}ev\v{c}ovi\v{c} and Cruz \cite{CruzSevcovic2020} for $n=1$. 

\begin{theorem} \cite[Theorem 1]{vsevvcovivc2021multidimensional}
\label{semilinear_existence_result}
Suppose that the shift mapping $\xi=\xi(x,z)$ satisfies $\sup_{x\in\R} |\xi(x,z)|  \le C_0 |z|^\omega (1+ e^{D_0 |z|}), z\in\R^n$, for some constants $C_0>0, D_0\ge 0, \omega>0$. Assume $\nu$ is an admissible activity L\'evy measure with the shape parameters $\alpha, D$, and, either $\mu>0, D\in\R$, or $\mu=0, D>D_0\ge 0$. Assume $1/2\le \gamma < 1$ and $\gamma>(\alpha-n)/(2\omega)$, $n\geq 1$. Suppose that $g(\tau, x, u,\nabla_x u)$ is H\"older continuous in the $\tau$ variable and  Lipschitz continuous in the remaining variables, respectively. Assume $u_0\in X^\gamma$, and $T>0$. Then, there exists a unique mild solution $u$ to PIDE (\ref{PDE-u-general}) satisfying $u\in C([0,T],X^{\gamma})$.
\end{theorem}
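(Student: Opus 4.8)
The plan is to recast the Cauchy problem (\ref{PDE-u-general}) in the abstract form (\ref{problem_transformed}), namely $\partial_\tau u + A u = F(\tau,u)$ with $A=-(\sigma^2/2)\Delta$ and $F(\tau,u):=f(u)+g(\tau,\cdot,u,\nabla_x u)$, and then to verify the hypotheses of Proposition~\ref{semilinear_general_existence_result} with $h\equiv 0$. As recalled above, $A$ is a sectorial operator on $X=L^p(\R^n)$ for every $p\ge 1$, so $-A$ generates an analytic semigroup $\{e^{-At},t\ge 0\}$, and the fractional power space $X^\gamma$ coincides with the Bessel potential space ${\mathscr L}^p_{2\gamma}(\R^n)$. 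Since $u_0\in X^\gamma$ with $0\le\gamma<1$, the structural assumptions of Proposition~\ref{semilinear_general_existence_result} on the space and the semigroup are already in force; what remains is to show that $F:[0,T]\times X^\gamma\to X$ is well defined, H\"older continuous in the $\tau$ variable, and Lipschitz continuous in the $u$ variable.

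The first key step is to handle the nonlocal term $f$ defined in (\ref{functional_f_def}). Since $f$ is \emph{linear} in $u$, it is enough to invoke Proposition~\ref{prop-f}: under the stated growth bound $\sup_{x}|\xi(x,z)|\le C_0|z|^\omega(1+e^{D_0|z|})$ on the shift mapping, the admissibility of $\nu$ with shape parameters $\alpha,D$ (either $\mu>0,\ D\in\R$, or $\mu=0,\ D>D_0\ge 0$), and the hypotheses $1/2\le\gamma<1$, $\gamma>(\alpha-n)/(2\omega)$, one has $\Vert f(u)\Vert_{L^p}\le C_0\Vert\nabla_x u\Vert_{X^{\gamma-1/2}}\le C\Vert u\Vert_{X^\gamma}$, the last inequality using the boundedness of the operator $\nabla_x:X^\gamma\to(X^{\gamma-1/2})^n$ (a standard mapping property of Bessel potentials, valid precisely because $\gamma\ge 1/2$, cf.~\cite[Section 1.6]{Henry1981}). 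Hence $f:X^\gamma\to X$ is a bounded linear operator and therefore globally Lipschitz, $\Vert f(u)-f(v)\Vert_{L^p}=\Vert f(u-v)\Vert_{L^p}\le C\Vert u-v\Vert_{X^\gamma}$.

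Next I would treat the reaction term $g$. The continuous embedding $X^{\gamma-1/2}\hookrightarrow X^0=X$ (again using $\gamma\ge 1/2$) shows that $u\mapsto(u,\nabla_x u)$ maps $X^\gamma$ continuously into $X\times X^n$, so the Nemytskii-type operator $u\mapsto g(\tau,\cdot,u,\nabla_x u)$ is well defined from $X^\gamma$ into $X$; its Lipschitz continuity in $u$ (uniformly in $\tau$) follows from the assumed Lipschitz continuity of $g$ in its last two arguments combined with the estimate $\Vert\nabla_x u-\nabla_x v\Vert_X\le C\Vert u-v\Vert_{X^\gamma}$, while its H\"older continuity in $\tau$ follows directly from that of $g$. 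Combining the two parts, $F(\tau,u)=f(u)+g(\tau,\cdot,u,\nabla_x u)$ meets all the hypotheses of Proposition~\ref{semilinear_general_existence_result}; applying it yields, for every $T>0$, a unique mild solution $u\in C([0,T],X^\gamma)$ with $u(0)=u_0$, $u(\tau)\in D(A)$ and $\partial_\tau u(\tau)\in X$ for $\tau\in(0,T)$, represented by the variation-of-constants formula $u(\tau)=e^{-A\tau}u_0+\int_0^\tau e^{-A(\tau-s)}F(s,u(s))\,\ud s$.

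I expect the only genuine obstacle to be the bound on $f$, i.e.\ Proposition~\ref{prop-f} (together with Proposition~\ref{prop_pointwise_est-2} and Corollary~\ref{prop_pointwise_est} on which it rests), whose proof requires the convolution inequality and the modulus-of-continuity estimate for the Bessel kernel $G_{2\gamma-1}$ together with the splitting $h=h_1h_2$ of the L\'evy density, chosen so that the H\"older-exponent condition $(2\gamma\omega-\beta)p>-n$ and the integrability condition $(\beta-\alpha)q>-n$ can be satisfied simultaneously --- this is exactly where admissibility of $\nu$ and the lower bound $\gamma>(\alpha-n)/(2\omega)$ are used. Once that estimate is granted, the remaining verification of the Henry-type hypotheses is routine, and global-in-time (rather than merely local) existence is automatic here because both $f$ and $g$ are globally Lipschitz from $X^\gamma$ to $X$, so no continuation/blow-up alternative needs to be excluded.
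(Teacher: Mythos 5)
Your proposal is correct and follows exactly the route the paper intends: the paper proves this theorem simply by ``applying Propositions~\ref{prop-f} and \ref{semilinear_general_existence_result}'', i.e.\ by using the bound $\Vert f(u)\Vert_{L^p}\le C\Vert u\Vert_{X^\gamma}$ to make $F(\tau,u)=f(u)+g(\tau,\cdot,u,\nabla_x u)$ a Lipschitz map from $X^\gamma$ to $X$ and then invoking Henry's abstract existence result, which is precisely what you do (with the routine details of the Nemytskii operator and the mapping property $\nabla_x:X^\gamma\to(X^{\gamma-1/2})^n$ correctly filled in).
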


\subsection{Maximal monotone operator technique for solving nonlinear parabolic equations}

This section presents the existence and uniqueness results of a fully nonlinear parabolic equation using the monotone operator technique. We consider the HJB equation arising from portfolio optimization selection, where the goal is to maximize the conditional expected value of the terminal utility of the portfolio. Such a fully nonlinear HJB equation presented in an abstract setting can be viewed as a PIDE in some sense. First, we employ the so-called Riccati transformation method to transform the fully nonlinear HJB equation into a quasilinear parabolic equation, which can be viewed as the porous media type of equation with source term. Then, we showed that the underlying operator is maximally monotone in some Sobolev spaces. Next, we employed the Banach's fixed point theorem and Fourier transform technique to obtain the existence and uniqueness of a solution to the general form of the transformed parabolic equation in an abstract setting in high-dimensional spaces. Furthermore, as a crucial requirement for solving the Cauchy problem, we obtain that the diffusion function to the quasilinear parabolic equation is globally Lipschitz continuous under some assumptions.

We consider the Cauchy problem for the nonlinear parabolic PDE of the following form: 
\begin{eqnarray}
\label{generalPDE}
&&\partial_{\tau}\varphi -\Delta \alpha(\tau, \varphi) = g_0(\tau, \varphi) + \nabla\cdot \bm{g}_1(\tau, \varphi),
    \\
&&	\varphi(\cdot, 0) =\varphi_{0},
\end{eqnarray}
where $\tau\in(0,T), x\in\mathbb{R}^d, d\ge 1$. The solution $\varphi=\varphi(x,\tau)$ to such a nonlinear parabolic equation is established in some Sobolev spaces in high-dimensional spaces (see, \cite{udeani2021application}). To achieve such results, we assumed that the diffusion function $\alpha=\alpha(x,\tau,\varphi)$ is globally Lipschitz continuous and strictly increasing in the $\varphi$-variable. An example of such a Lipschitz continuous function $\alpha(x,\tau,\varphi)$ is the value function of the following parametric optimization problem:
\begin{equation}
\alpha(x,\tau,\varphi) = \min_{ {\bm{\theta}} \in \triangle} 
\left(
-\mu(x,t,{\bm{\theta}}) +  \frac{\varphi}{2}\sigma(x,t,{\bm{\theta}})^2\right), \quad \tau\in(0,T), x\in\mathbb{R}^d, \varphi>\varphi_{min}\,,
\label{eq_alpha_general}
\end{equation}
where $\mu$ and $\sigma^2$ are given $C^1$ functions, and $\triangle\subset\mathbb{R}^n$ is a compact decision set. The properties of the value function depend on the structure of the decision set $\triangle$. It is smooth if $\triangle$ is a convex set; meanwhile, it can only be $C^{0,1}$ smooth if $\triangle$ is not connected.

\subsubsection{Existence and uniqueness of a solution to the Cauchy problem}

First, we define our underlying function spaces. Let $V\hookrightarrow H \hookrightarrow V^\prime$ be a Gelfand triple, where 
\[
H = L^2(\mathbb{R}^d) = \{ f:\mathbb{R}^d\to\mathbb{R}, \;\; \; \Vert f\Vert_{L^2}^2 = \int_{\mathbb{R}^d} |f(x)|^2 dx < \infty \}
\]
is a Hilbert space endowed with the inner product $(f,g)= \int_{\mathbb{R}^d} f(x) g(x) dx$. The Banach spaces $V$ and $V^\prime$ are defined as follows: 
\[
\quad V = H^1(\mathbb{R}^d), \quad V^\prime = H^{-1}(\mathbb{R}^d),
\]
where the Sobolev spaces $H^s(\mathbb{R}^d)$ are defined by means of the Fourier transform
\[
\hat f(\xi) = \frac{1}{(2\pi)^{d/2}} \int_{\mathbb{R}^d} e^{-i x\cdot\xi} f(x) dx, \quad \xi = (\xi_1, \xi_2, ..., \xi_d)^T\in\mathbb{R}^d,
\]
\[
H^s(\mathbb{R}^d) = \{ f:\mathbb{R}^d\to\mathbb{R}, (1+|\xi|^2)^{s/2} \hat f(\xi) \in L^2(\mathbb{R}^d) \}, \; s\in\mathbb{R}
\]
endowed with the norm $\Vert f\Vert_{H^s}^2 = \int_{\mathbb{R}^d} (1+|\xi|^2)^s |\hat f(\xi)|^2 d\xi$, and $ |\xi| = (\xi_1^2+ \cdots + \xi_d^2)^{1/2}$.
Let the linear operator $A : V\to V^\prime$ be defined as follows:
\[
A \psi = \psi  - \Delta \psi.
\]
It is worth nothing that the operator $A$ is self-adjoint in the Hilbert space $H=L^2(\mathbb{R}^d)$ with the following Fourier transform representation:
\[
\widehat{A\psi}(\xi) = (1+|\xi|^2)\hat\psi(\xi).
\]
The fractional powers of $A$ is defined by $\widehat{A^s\psi}(\xi) = (1+|\xi|^2)^s\hat\psi(\xi), \; s\in \mathbb{R}$. In particular, 
\[
\widehat{A^{\pm1/2}\psi}(\xi) = (1+|\xi|^2)^{\pm1/2}\hat\psi(\xi), \quad \]
and $A^{-1/2}$ is a self-adjoint operator in the Hilbert space $H=L^2(\mathbb{R}^d)$. Moreover, $A^{-1}= A^{-1/2} A^{-1/2}$.

In the sequel, we denote the duality pairing between the spaces $ V$ and $ V^\prime$ by $ \langle .,.\rangle$, i.e., the value of a functional $ F\in V^\prime$ at $u\in V$ is denoted by $ \langle F, u \rangle$. We have the following definitions.
\begin{definition}
\cite{udeani2021application, Barbu}  An operator (in general, nonlinear) $ B: V\to V^\prime $ is said to be
\begin{itemize}
    \item  [(i)] monotone if $$ \langle B(u) - B(v), u-v\rangle \geq 0, \; \forall \; u,v\in V,$$
    \item [(ii)] strongly monotone if there exists a constant $C>0$ such that
    $$ \langle B(u) - B(v), u-v\rangle \geq C\Vert u-v\Vert_V^2,  \; \forall \; u,v\in V,$$ 
    \item  [(iii)] hemicontinuous if for each $ u, v\in V$, the real-valued function $ t\mapsto B(u+tv)(v) $ is continuous.
\end{itemize}
\end{definition}

\begin{theorem}\cite{Barbu, Showalter}
\label{th:Showalter}
Let $V$ be a separable reflexive Banach space, dense, and continuous in a Hilbert space $H$, which is identified with its dual, so $V \hookrightarrow H\hookrightarrow V^\prime$. Let $p\geq 2$ and set $\mathcal{V} = L^p ((0,T); V).$ Assume a family of operators $\mathcal{A}(\tau,.): V\to V^\prime, 0\leq \tau< T$, is given such that
\begin{itemize}
\item [(i)] for each $\varphi \in V $, the function $\mathcal{A}(.,\varphi): [0,T]\to V^\prime $ is measurable,
\item [(ii)] for a.e $\tau\in [0,T]$, the operator $\mathcal{A}(\tau,.): V\to V^\prime$ is monotone, hemicontinuous, and bounded by
$\Vert \mathcal{A}(\tau,\varphi)\Vert \leq C ( \Vert \varphi\Vert^{p-1} + k(\tau)), \varphi\in V, 0\leq \tau< T,$ 
where $ k\in L^{p'}(0,T) $, 
\item [(iii)] and there exists $\lambda >0$ such that 
$\langle \mathcal{A}(\tau,\varphi), \varphi\rangle  \geq \lambda\Vert \varphi\Vert^{p} - k(\tau), \varphi\in V, 0\leq \tau< T.$
\end{itemize}
Then, for each $\hat f\in \mathcal{V^\prime}$ and $ \varphi_{0}\in H$, there exists a unique solution $\varphi\in\mathcal{V}$ of the Cauchy problem
\[
\partial_\tau\varphi(\tau) + \mathcal{A}(\tau,\varphi(\tau)) = \hat f(\tau) ~ \text{in}~ \mathcal{V^\prime},\ \  \varphi(0) =\varphi_{0}.
\]
\end{theorem}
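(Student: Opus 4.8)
The plan is to prove Theorem~\ref{th:Showalter} by the Faedo--Galerkin method combined with the Minty--Browder monotonicity trick, following the classical treatment of Lions (see also Barbu \cite{Barbu} and Showalter \cite{Showalter}). \emph{Galerkin scheme.} Since $V$ is separable I would fix a countable set $\{w_1,w_2,\dots\}\subset V$ with dense linear span, put $V_m=\mathrm{span}\{w_1,\dots,w_m\}$, and seek $\varphi_m(\tau)=\sum_{j=1}^{m} c^m_j(\tau)\,w_j$ solving the finite system $(\varphi_m'(\tau),w_j)+\langle\mathcal{A}(\tau,\varphi_m(\tau)),w_j\rangle=\langle\hat f(\tau),w_j\rangle$ for $j=1,\dots,m$, with $\varphi_m(0)$ the $H$-orthogonal projection of $\varphi_0$ onto $V_m$. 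Because the Gram matrix $((w_i,w_j))_{i,j}$ is invertible this is a genuine first-order ODE system; assumption (i) makes its right-hand side measurable in $\tau$, while monotonicity together with hemicontinuity forces continuity in $c$ on the finite-dimensional space, and the growth bound (ii) supplies a Carath\'eodory $L^1$-majorant, so a local absolutely continuous solution exists and is extended to all of $[0,T]$ by the a priori bound below.

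\emph{A priori estimates and weak limits.} Multiplying the $j$-th equation by $c^m_j(\tau)$, summing over $j$, and using the coercivity assumption (iii) gives $\tfrac12\tfrac{d}{d\tau}\Vert\varphi_m\Vert_H^2+\lambda\Vert\varphi_m\Vert_V^p\le\langle\hat f,\varphi_m\rangle+k(\tau)$. After estimating $\langle\hat f,\varphi_m\rangle\le\varepsilon\Vert\varphi_m\Vert_V^p+C_\varepsilon\Vert\hat f\Vert_{V'}^{p'}$ by Young's inequality and integrating in $\tau$, one obtains $\varphi_m$ bounded in $L^\infty((0,T);H)\cap\mathcal{V}$ by a constant depending only on $\Vert\varphi_0\Vert_H$, $\Vert\hat f\Vert_{\mathcal{V}'}$ and $\Vert k\Vert_{L^1}$. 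The growth bound (ii) then makes $\mathcal{A}(\cdot,\varphi_m)$ bounded in $\mathcal{V}'=L^{p'}((0,T);V')$, and from the equation $\varphi_m'$ is bounded in $\mathcal{V}'$ as well. Passing to a subsequence, $\varphi_m\rightharpoonup\varphi$ in $\mathcal{V}$, $\varphi_m'\rightharpoonup\varphi'$ in $\mathcal{V}'$, and $\mathcal{A}(\cdot,\varphi_m)\rightharpoonup\chi$ in $\mathcal{V}'$; since $W:=\{v\in\mathcal{V}:v'\in\mathcal{V}'\}$ embeds continuously into $C([0,T];H)$, the limit obeys $\varphi'+\chi=\hat f$ in $\mathcal{V}'$ and $\varphi(0)=\varphi_0$ (using $\varphi_m(0)\to\varphi_0$ in $H$).

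\emph{Identification $\chi=\mathcal{A}(\cdot,\varphi)$ --- the main obstacle.} This is the delicate step and is handled by Minty's device. Monotonicity yields $\int_0^T\langle\mathcal{A}(\tau,\varphi_m)-\mathcal{A}(\tau,v),\varphi_m-v\rangle\,\ud\tau\ge0$ for every $v\in\mathcal{V}$. The Galerkin energy identity $\int_0^T\langle\mathcal{A}(\tau,\varphi_m),\varphi_m\rangle\,\ud\tau=\int_0^T\langle\hat f,\varphi_m\rangle\,\ud\tau-\tfrac12\Vert\varphi_m(T)\Vert_H^2+\tfrac12\Vert\varphi_m(0)\Vert_H^2$, combined with weak convergence of $\varphi_m(T)$ in $H$ (hence lower semicontinuity of $\Vert\cdot\Vert_H^2$), $\varphi_m(0)\to\varphi_0$, and the integration-by-parts formula $\int_0^T\langle\varphi',\varphi\rangle\,\ud\tau=\tfrac12\Vert\varphi(T)\Vert_H^2-\tfrac12\Vert\varphi_0\Vert_H^2$ valid on $W$, gives $\limsup_m\int_0^T\langle\mathcal{A}(\tau,\varphi_m),\varphi_m\rangle\,\ud\tau\le\int_0^T\langle\chi,\varphi\rangle\,\ud\tau$. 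Inserting this into the monotonicity inequality produces $\int_0^T\langle\chi-\mathcal{A}(\tau,v),\varphi-v\rangle\,\ud\tau\ge0$ for all $v\in\mathcal{V}$; taking $v=\varphi-\varepsilon w$, dividing by $\varepsilon>0$, and letting $\varepsilon\to0^+$ (hemicontinuity together with the growth bound (ii) justifies passing to the limit inside the integral) gives $\int_0^T\langle\chi-\mathcal{A}(\tau,\varphi),w\rangle\,\ud\tau\ge0$ for every $w\in\mathcal{V}$, hence $\chi=\mathcal{A}(\cdot,\varphi)$ and $\varphi$ solves the Cauchy problem. Finally, uniqueness is straightforward: for two solutions $\varphi_1,\varphi_2$, testing the difference of the equations with $\varphi_1-\varphi_2$ and using monotonicity gives $\tfrac{d}{d\tau}\Vert\varphi_1-\varphi_2\Vert_H^2\le0$, so $\varphi_1(0)=\varphi_2(0)=\varphi_0$ forces $\varphi_1\equiv\varphi_2$.
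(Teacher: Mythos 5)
The paper does not prove this statement: Theorem~\ref{th:Showalter} is quoted verbatim as a known result of the monotone-operator theory and is attributed to the references \cite{Barbu, Showalter}, so there is no in-paper argument to compare yours against. Your Faedo--Galerkin plus Minty--Browder outline is precisely the classical proof found in those references (and in Lions), and it is essentially correct: the finite-dimensional solvability, the coercivity-driven a priori bounds in $L^\infty((0,T);H)\cap\mathcal{V}$, the boundedness of $\mathcal{A}(\cdot,\varphi_m)$ in $\mathcal{V}'$ via the growth condition, the identification of the weak limit of the nonlinearity through the energy identity, weak lower semicontinuity of $\Vert\cdot\Vert_H$ at $\tau=T$, and the hemicontinuity argument with $v=\varphi-\varepsilon w$, followed by the standard monotonicity-based uniqueness. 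One technical point deserves care: the Galerkin equation only determines the projection of $\varphi_m'$ onto $V_m$, so the claim that $\varphi_m'$ is uniformly bounded in $\mathcal{V}'$ is not automatic unless the projections onto $V_m$ are uniformly bounded on $V$ (special basis). The standard workaround, which does not change your conclusion, is to skip that bound entirely and pass to the limit in the time-integrated weak formulation against test functions $\psi(\tau)w_j$ with $\psi(T)=0$; this yields $\varphi'+\chi=\hat f$ distributionally, whence $\varphi'=\hat f-\chi\in\mathcal{V}'$ and $\varphi\in W\hookrightarrow C([0,T];H)$, after which your Minty argument proceeds unchanged.
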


\medskip
Consider the spaces $\mathcal{V} = L^2 ((0,T);V)$, $\mathcal{H} = L^{2}((0,T);H) $, and  $\mathcal{V^\prime} = L^2 ((0,T);V^\prime)$, i.e., $p=2$. Thus, we have that these spaces satify the Gelfand triple, i.e., $\mathcal{V}\hookrightarrow \mathcal{H} \hookrightarrow \mathcal{V^\prime}$, where $ \mathcal{H}$ is a Hilbert space endowed with the norm 
\[
\Vert \varphi\Vert^{2}_{\mathcal{H}} = \int_{0}^{T}\Vert \varphi(\tau)\Vert^{2}_{H}d\tau, \; \forall\varphi\in\mathcal{H}.
\]  
For a given value $\varphi_{min}$, we denote ${\mathcal D}= \mathbb{R}^d\times(0,T)\times (\varphi_{min},\infty)$. 

\begin{theorem}\cite[Theorem 2]{udeani2021application}
\label{th:alpha-existence}
Assume that the above settings on $H$ and $V$ hold. Let $ g_0, g_{1j}:[0,T]\times H\to H, j=1,\cdots, n,$ be globally Lipschitz continuous functions. Suppose $\alpha\in C^{0,1}(\mathcal{D})$ is such that there exist constants $\omega, L, L_0>0$ such that $0 < \omega\le  \alpha^\prime_\varphi(x,\tau,\varphi)\le L$, $|\nabla_x \alpha(x,\tau,\varphi)|\le p(x,\tau) + L_0  |\varphi|$, $\alpha(x,\tau,0)=h(x,\tau)$ for  a.e. $(x,\tau,\varphi)\in\mathcal{D}$ and $p, h\in L^{\infty}((0,T);H)$.
Then, for any $T>0 \; \text{and} \;\varphi_{0}\in H,$ there exists a unique solution $\varphi\in{\mathcal V}$ of the Cauchy problem 
\begin{equation}
	\partial_{\tau}\varphi + A\alpha(\cdot,\tau, \varphi) = g_0(\tau, \varphi) + \nabla \cdot \bm{g}_1(\tau, \varphi), \qquad 
	\varphi(0) =\varphi_{0}. 
	\label{equ:g_0,g_1}
\end{equation}

\end{theorem}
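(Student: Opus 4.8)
The plan is to recast \eqref{equ:g_0,g_1} as an abstract semilinear evolution equation with a \emph{monotone} principal part and apply Theorem~\ref{th:Showalter}, followed by a contraction argument for the semilinear terms and an a priori estimate for the regularity. The essential structural point is that one must apply Theorem~\ref{th:Showalter} \textbf{not} in the triple $H^1\hookrightarrow L^2\hookrightarrow H^{-1}$ used above --- in that triple $\varphi\mapsto A\alpha(\cdot,\tau,\varphi)$ fails to be monotone from $H^1$ into $H^{-1}$, because of the cross term in $\nabla[\alpha(\cdot,\tau,\varphi)]=(\nabla_x\alpha)(\cdot,\tau,\varphi)+\alpha'_\varphi(\cdot,\tau,\varphi)\nabla\varphi$ --- but in the triple $L^2(\mathbb{R}^d)\hookrightarrow H^{-1}(\mathbb{R}^d)\hookrightarrow H^{-2}(\mathbb{R}^d)$, with $H^{-1}$ as pivot space carrying the inner product $(f,g)_{-1}=(A^{-1/2}f,A^{-1/2}g)_{L^2}=(A^{-1}f,g)_{L^2}$. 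Here the choice $A=I-\Delta$ rather than $-\Delta$ is exactly what is needed: $A$ is boundedly invertible and bounded below on $L^2(\mathbb{R}^d)$ even on the unbounded domain, so the pivot norm is well defined and the duality pairing between $H^{-2}$ and $L^2$ is $\langle w,f\rangle=(A^{-1}w,f)_{L^2}$; in particular $\langle Aw,f\rangle=(w,f)_{L^2}$ for $w\in L^2$.

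With this triple I would verify the hypotheses of Theorem~\ref{th:Showalter} with $p=2$, $V=L^2$, $V'=H^{-2}$, for the operator $\mathcal{A}_0(\tau,\varphi):=A\alpha(\cdot,\tau,\varphi)$. \emph{Strong monotonicity}: $\langle\mathcal{A}_0(\tau,u)-\mathcal{A}_0(\tau,v),u-v\rangle=(\alpha(\cdot,\tau,u)-\alpha(\cdot,\tau,v),u-v)_{L^2}\ge\omega\|u-v\|_{L^2}^2$, using $\alpha'_\varphi\ge\omega$. \emph{Boundedness}: $\|\mathcal{A}_0(\tau,\varphi)\|_{H^{-2}}=\|\alpha(\cdot,\tau,\varphi)\|_{L^2}\le\|h(\cdot,\tau)\|_{L^2}+L\|\varphi\|_{L^2}$, using $\alpha(\cdot,\tau,0)=h\in L^\infty((0,T);H)$ and $\alpha'_\varphi\le L$. \emph{Coercivity}: $\langle\mathcal{A}_0(\tau,\varphi),\varphi\rangle=(\alpha(\cdot,\tau,\varphi),\varphi)_{L^2}\ge\tfrac{\omega}{2}\|\varphi\|_{L^2}^2-\tfrac{1}{2\omega}\|h(\cdot,\tau)\|_{L^2}^2$. \emph{Hemicontinuity} follows from continuity of $\alpha$ in $\varphi$ and dominated convergence, and measurability in $\tau$ from $\alpha\in C^{0,1}(\mathcal{D})$. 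Theorem~\ref{th:Showalter} then yields, for every $\hat f\in L^2((0,T);H^{-2})$ and $\varphi_0\in H^{-1}$ (in particular for $\varphi_0\in H=L^2$), a unique $\varphi\in L^2((0,T);L^2)$, with $\partial_\tau\varphi\in L^2((0,T);H^{-2})$, solving $\partial_\tau\varphi+A\alpha(\cdot,\tau,\varphi)=\hat f$, $\varphi(0)=\varphi_0$.

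Next I would bring in the semilinear terms by Banach's fixed point theorem. Define $\Phi:L^2((0,T);L^2)\to L^2((0,T);L^2)$ by letting $\Phi(\psi)$ be the solution just obtained with source $\hat f=g_0(\tau,\psi)+\nabla\cdot\bm{g}_1(\tau,\psi)$; this source lies in $L^2((0,T);H^{-2})$ because the $g_0,g_{1j}$ are globally Lipschitz (with $g_0(\cdot,0),\bm{g}_1(\cdot,0)\in L^2((0,T);H)$) and $\|\nabla\cdot\bm v\|_{H^{-1}}\le\|\bm v\|_{L^2}$. For $\varphi_i=\Phi(\psi_i)$, subtracting the equations and pairing with $\varphi_1-\varphi_2$, using strong monotonicity and the identity $\langle g_0(\tau,\psi_1)-g_0(\tau,\psi_2)+\nabla\cdot(\bm{g}_1(\tau,\psi_1)-\bm{g}_1(\tau,\psi_2)),\varphi_1-\varphi_2\rangle=(A^{-1/2}(\cdots),A^{-1/2}(\varphi_1-\varphi_2))_{L^2}$, gives
\[
\tfrac12\tfrac{d}{d\tau}\|\varphi_1-\varphi_2\|_{H^{-1}}^2+\omega\|\varphi_1-\varphi_2\|_{L^2}^2\le (L_{g_0}+L_{g_1})\,\|\psi_1-\psi_2\|_{L^2}\,\|\varphi_1-\varphi_2\|_{H^{-1}}.
\]
Young's inequality and integration against the weight $e^{-\gamma\tau}$ (with $\varphi_1(0)=\varphi_2(0)$) yield $\int_0^Te^{-\gamma\tau}\|\varphi_1-\varphi_2\|_{L^2}^2\,d\tau\le\tfrac{(L_{g_0}+L_{g_1})^2}{2\omega\gamma}\int_0^Te^{-\gamma\tau}\|\psi_1-\psi_2\|_{L^2}^2\,d\tau$; for $\gamma$ large the constant is $<1$, and since this weighted norm is equivalent to the norm of $L^2((0,T);L^2)$, $\Phi$ is a contraction. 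Its fixed point solves \eqref{equ:g_0,g_1}, and the same estimate with $\psi_i=\varphi_i$ gives uniqueness in $L^2((0,T);L^2)$.

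Finally I would upgrade the regularity to $\varphi\in\mathcal{V}=L^2((0,T);H^1)$. Testing the equation (at the Galerkin level underlying Theorem~\ref{th:Showalter}, then passing to the limit) with $\alpha(\cdot,\tau,\varphi)$ and using the primitive $\Psi(x,\tau,s)=\int_0^s\alpha(x,\tau,r)\,dr$ produces a bound of the form $\tfrac{d}{d\tau}\int_{\mathbb{R}^d}\Psi(x,\tau,\varphi)\,dx+\|\nabla\alpha(\cdot,\tau,\varphi)\|_{L^2}^2+\|\alpha(\cdot,\tau,\varphi)\|_{L^2}^2\le(\text{controlled terms})$, so $\alpha(\cdot,\tau,\varphi)\in L^2((0,T);H^1)$; then, since $\nabla\varphi=(\alpha'_\varphi(\cdot,\tau,\varphi))^{-1}\bigl(\nabla[\alpha(\cdot,\tau,\varphi)]-(\nabla_x\alpha)(\cdot,\tau,\varphi)\bigr)$ with $\alpha'_\varphi\ge\omega>0$ and $|\nabla_x\alpha|\le p+L_0|\varphi|$, $p\in L^\infty((0,T);H)$, one gets $\nabla\varphi\in L^2((0,T);L^2)$, i.e. $\varphi\in\mathcal{V}$; uniqueness in $\mathcal{V}$ follows since any such solution is a fixed point of $\Phi$. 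The main obstacle, and the conceptual heart of the proof, is the first step --- recognizing that monotonicity of the diffusion operator is available only after passing to the $H^{-1}$-pivot triple, where invertibility of $A=I-\Delta$ on all of $\mathbb{R}^d$ is precisely what makes the argument work; the regularity bootstrap, while technically careful, is then routine.
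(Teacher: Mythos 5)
Your proposal is correct and follows, at its core, the same route as the paper: strong monotonicity of $\varphi\mapsto\alpha(\cdot,\tau,\varphi)$ in $L^2$ measured against the $H^{-1}$ norm of the time derivative, Theorem~\ref{th:Showalter} for the problem with frozen right-hand side, and a Banach fixed point argument in $L^2((0,T);L^2)$ for the semilinear terms. The differences are worth recording. First, where you pass explicitly to the shifted triple $L^2\hookrightarrow H^{-1}\hookrightarrow H^{-2}$, the paper keeps the notation $H^1\hookrightarrow L^2\hookrightarrow H^{-1}$ but silently replaces the duality pairing by $\langle F,\psi\rangle=(A^{-1}F,\psi)_{L^2}$, so that $\langle\mathcal{A}(\tau,\varphi),\psi\rangle=(\alpha(\cdot,\tau,\varphi),\psi)_{L^2}$; this is the same device, but your formulation makes visible why it is needed — with the standard pairing one gets $(\alpha(\cdot,\tau,\varphi_1)-\alpha(\cdot,\tau,\varphi_2),\varphi_1-\varphi_2)_{H^1}$, whose gradient cross terms involving $\nabla_x\alpha$ destroy monotonicity, and coercivity in the $H^1$ norm is likewise unavailable. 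The paper asserts boundedness, coercivity and hemicontinuity without verification; your check of these in the shifted triple (where coercivity is only required in the $L^2$ norm) is the honest version of that step. Second, your weighted-norm contraction with the factor $e^{-\gamma\tau}$ gives a global-in-time contraction in one stroke, whereas the paper obtains a contraction only for $T$ small (constant $\tilde\beta^2T/(2\omega)<1$) and then continues the solution; both work, yours is slightly cleaner. Third, and most substantively, the paper's final upgrade to $\varphi\in\mathcal{V}=L^2((0,T);H^1)$ is a one-line re-invocation of Theorem~\ref{th:Showalter}, which in the shifted triple only delivers $\varphi\in L^2((0,T);L^2)$; your bootstrap — testing with $\alpha(\cdot,\tau,\varphi)$ at the Galerkin level to get $\alpha(\cdot,\tau,\varphi)\in L^2((0,T);H^1)$ and then recovering $\nabla\varphi$ from $\nabla[\alpha(\cdot,\tau,\varphi)]$ using $\alpha'_\varphi\ge\omega$ and the bound $|\nabla_x\alpha|\le p+L_0|\varphi|$ — is exactly the argument needed to make that conclusion rigorous, and it is precisely here that the hypotheses on $\nabla_x\alpha$ and $p\in L^\infty((0,T);H)$ earn their keep (the paper uses them only to show $\alpha(\cdot,\tau,\cdot)$ maps $H^1$ to $H^1$).
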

 \noindent We remark here that the above result and its proof are contained in our recent paper \cite[Theorem 2]{udeani2021application}.\\
\noindent P r o o f: Recall that $ H = L^{2}(\mathbb{R}^d) $ and $ V = H^{1}(\mathbb{R}^d)$, its dual space being $ V^\prime = H^{-1}(\mathbb{R}^d)$. Let the scalar products in $V$ and $ V^{'}$ be respectively defined by
\[
(f, g)_V =(A^{1/2}f, A^{1/2} g)_{H}=(Af, g)_{H}, \ 
(f, g)_{V^\prime} =(A^{-1/2}f, A^{-1/2} g)_{H}=(A^{-1}f, g)_{H}.
\] Let us define the operator $\mathcal{A}(\tau, \cdot) : V\to V^\prime$ by 
\[
\langle\mathcal{A}(\tau, \varphi), \psi\rangle = ( A^{-1}A\alpha(\cdot,\tau, \varphi), \psi)_{H} =(\alpha(\cdot, \tau, \varphi), \psi)_{H}.
\]
Therefore, we conclude that the mapping $\varphi\mapsto\alpha(\cdot,\tau,\varphi)$ maps $V$ into $V$ under the assumption made on the function $\alpha$. Indeed, if $\varphi\in V$ and $\eta=\alpha(\cdot,\tau,\varphi)$, then $\eta(x) = \alpha(x,\tau,\varphi(x)) - \alpha(x,\tau,0) + \alpha(x,\tau,0)$, and so
\[
|\eta(x)| \le (\max_\varphi \alpha^\prime_\varphi(x,\tau,\varphi)) |\varphi(x)|  + |h(x,\tau)| 
\le L |\varphi(x)| + |h(x,\tau)|. 
\]
Thus, $\int_{\mathbb{R}^d} |\eta(x)|^2 dx \le  2 \int_{\mathbb{R}^d} L^2 |\varphi(x)|^2  +  |h(x,\tau)|^2 dx \le 2L^2\Vert\varphi\Vert^2_H + 2 \Vert h(\cdot,\tau)\Vert^2_H$. Since $\nabla\eta(x) = \nabla_x\alpha(x,\tau, \varphi(x)) + \alpha^\prime_\varphi(x,\tau, \varphi(x))\nabla\varphi(x)$, we have
 \begin{equation*}
\begin{split}
 \Vert \eta\Vert_V^2=&\int_{\mathbb{R}^d} |\eta(x)|^2 + |\nabla\eta(x)|^2 dx 
\\ \leq & 2 \int_{\mathbb{R}^d} L^2 |\varphi(x)|^2  +  |h(x,\tau)|^2 dx + 2 \int_{\mathbb{R}^d} |p(x,\tau)|^2 + L_0^2|\varphi(x)|^2 dx  +  2\int_{\mathbb{R}^d} L^2|\nabla \varphi(x)|^2 dx \\ \leq & 2(L^2 \Vert \varphi \Vert_V^2 + \Vert h(\cdot, \tau)\Vert_H^2 + \Vert p(\cdot ,\tau)\Vert_H^2 + L_0^2 \Vert \varphi\Vert_H^2)<\infty,
\end{split}
\end{equation*}
because $p,h\in L^{\infty}((0,T);H)$. Consequently, $\eta\in V$, as claimed.  

Next, we show that the operator $\mathcal{A}$ is monotone in the space $ V^\prime$. According to (\ref{lipschitz}), we have $(\alpha(x,\tau,\varphi_{1}) - \alpha(x,\tau,\varphi_{2}))(\varphi_{1} - \varphi_{2})\geq \omega (\varphi_{1} - \varphi_{2})^2$, for any $\varphi_1, \varphi_2\ge\varphi_{min}, x\in\mathbb{R}, \tau\in[0,T]$. 
 \begin{equation*}
\begin{split}
\langle\mathcal{A}(\tau,\varphi_{1}) -\mathcal{A}(\tau,\varphi_{2}), \varphi_{1} -\varphi_{2}\rangle & = (\alpha (\cdot,\tau,\varphi_{1}) - \alpha (\cdot,\tau, \varphi_{2}), \varphi_{1} - \varphi_{2}) 
\\ & = \int_{\mathbb{R}^d}(\alpha (x,\tau,\varphi_{1}(x)) - \alpha (x,\tau, \varphi_{2}(x)))( \varphi_{1}(x) - \varphi_{2}(x)) dx 
\\ & \geq \int_{\mathbb{R}^d}\omega|\varphi_{1}(x) - \varphi_{2}(x)|^{2} dx  = \omega \Vert \varphi_{1} - \varphi_{2}\Vert^{2}_{H}.
\end{split}
\end{equation*}
This implies that the operator $ \mathcal{A}(\tau,\cdot)$ is strongly monotone. 

For a given $\tilde\varphi\in {\mathcal H}$, we have $\hat{f}\in\mathcal{V}^\prime$, where $\hat{f}(\tau) =  g_0(\tau,\tilde \varphi(\cdot, \tau)) + \nabla\cdot \bm{g}_1(\tau, \tilde\varphi(\cdot, \tau))$, because $g_0, g_{1j}: [0,T]\times H\to H $ are globally Lipschitz continuous, $H\hookrightarrow V^\prime$, and the operator $\nabla$ maps $H$ into $V^\prime$. The hemicontinuity, boundedness, and coercivity of the operator $\mathcal{A}$ follow from the assumption that the function $\alpha $ is globally Lipschitz continuous and strictly increasing.

Applying Theorem~\ref{th:Showalter}, we deduce the existence of a solution $ \varphi \in {\mathcal V} $ such that 
\begin{equation}
	\partial_{\tau}\varphi + \mathcal{A}(\tau,\varphi) = \hat{f}(\tau), \qquad \varphi_{0}\in H, 
	\label{eq:hatf}
\end{equation}
where $ \mathcal{A}(\tau,\varphi) = A\alpha(\cdot,\tau,\varphi)$. Next, we multiply (\ref{eq:hatf}) by $A^{-1}$ to obtain
\begin{equation}
\partial_{\tau}A^{-1}\varphi + \alpha(\cdot,\tau, \varphi) = f,
\label{eq:f}
\end{equation}
where $f=f(\tau,\tilde\varphi) = A^{-1}\hat{f}(\tau)$. For $\tau\in[0,T]$, we denote $\tilde{f}(\tilde\varphi) = A^{-1/2} \hat{f}(\tau) =  A^{-1/2} g_0(\tau,\tilde\varphi) + A^{-1/2}\sum_{j=1}^d \partial_{x_j}g_{1j}(\tau,\tilde\varphi)$. The Fourier transform of $\tilde f$ is defined by
\[
\widehat{\tilde{f}(\tilde\varphi)}(\xi) = \frac{1}{(1+ |\xi|^2)^{1/2}}\widehat{g_0(\tau, \tilde\varphi})(\xi) + \sum_{j=1}^d \frac{(-i\xi_j)}{(1+ |\xi|^2)^{1/2}}\widehat{g_{1j}(\tau, \tilde\varphi})(\xi).
\]
Let $\beta>0$ be the Lipschitz constant of the mappings $g_0, g_{1j}, j=1, \cdots, d$. Using Parseval's identity and Lipschitz continuity of $g_0, g_{1j}$ in $H$, we obtain, for $\tilde\varphi_1, \tilde\varphi_2\in\mathcal{H}$, 
\begin{equation*}
\begin{split}
\Vert\tilde{f}(\tilde\varphi_{1}) - \tilde{f}(\tilde\varphi_{2})\Vert^{2}_{H} & = \Vert\widehat{\tilde{f}(\tilde\varphi_{1})} - \widehat{\tilde{f}(\tilde\varphi_{2})}\Vert^{2}_{H} = \int_{\mathbb{R}^d} |\widehat{\tilde{f}(\tilde\varphi_{1})}(\xi)- \widehat{\tilde{f}(\tilde\varphi_{2})}(\xi)|^{2}d\xi \\& 
\le
2 \int_{\mathbb{R}^d} \frac{1}{1+|\xi|^{2}}|\widehat{g_0(\tau,\tilde\varphi_{1})}(\xi) - \widehat{g_0(\tau,\tilde\varphi_{2})}(\xi)|^{2} 
\\ &
\quad + \sum_{j=1}^d \frac{|\xi|^2}{1+|\xi|^{2}}|\widehat{g_{1j}(\tau,\tilde\varphi_{1})}(\xi) - \widehat{g_{1j}(\tau,\tilde\varphi_{2})}(\xi)|^{2}
d\xi \\ & 
\leq 2\Vert \widehat{g_0(\tau,\tilde\varphi_{1})} - \widehat{g_0(\tau,\varphi_{2})}\Vert^{2}_{H}  
+ 2\sum_{j=1}^d \Vert \widehat{g_{1j}(\tau,\tilde\varphi_{1})} - \widehat{g_{1j}(\tau,\varphi_{2})}\Vert^{2}_{H}
\\ &
= 2 \Vert g_0(\tau,\tilde\varphi_{1}) - g_0(\tau,\tilde\varphi_{2})\Vert^{2}_{H}
+ 2 \sum_{j=1}^d \Vert g_{1j}(\tau,\tilde\varphi_{1}) - g_{1j}(\tau,\tilde\varphi_{2})\Vert^{2}_{H}  
\\ &
\leq \tilde \beta^{2} \Vert \tilde\varphi_{1} - \tilde\varphi_{2}\Vert^{2}_{H},
\end{split}
\end{equation*} where $\tilde \beta^2= 2(1+d)\beta^2$. Hence, we obtain 
\begin{equation}
\Vert \tilde{f}(\tilde\varphi_{1}) - \tilde{f}(\tilde\varphi_{2})\Vert_{H} \leq \tilde\beta \Vert\tilde\varphi_{1} - \tilde\varphi_{2}\Vert_{H}.
	\label{eq:liphatf}
\end{equation}
\noindent 
Suppose $\varphi_{1}, \varphi_{2} \in \mathcal{H} $ are such that $\varphi_{1}=F(\tilde{\varphi_{1}})$ and $\varphi_{2} = F(\tilde{\varphi_{2}}).$ Here, the map  
$ F : \mathcal{H} \to \mathcal{H} $ is defined by $ \varphi = F(\tilde{\varphi}), $ where $ \varphi$ is a solution to the Cauchy problem
\begin{equation*}
\partial_{\tau}A^{-1}\varphi + \alpha(\cdot,\tau,\varphi) = f(\tau,\tilde{\varphi}), \qquad \varphi(0) =\varphi_{0}. 
\end{equation*}
Letting $ \varphi = \varphi_{1} -\varphi_{2} = F(\tilde{\varphi_{1}}) - F(\tilde{\varphi_{2}})$, we obtain
\begin{equation}
\partial_{\tau}A^{-1}(\varphi_{1} - \varphi_{2}) + \alpha(\cdot,\tau,\varphi_{1}) -\alpha(\cdot,\tau,\varphi_{2}) = f(\tilde{\varphi}_{1}) -f(\tilde{\varphi}_{2}).
	\label{eq:dsol}
\end{equation}
Next, multiplying (\ref{eq:dsol})  by $ \varphi_{1} -\varphi_{2}$ and taking the scalar product in the space $H$, we obtain
\begin{eqnarray}
 (\partial_{\tau}A^{-1}(\varphi_{1} - \varphi_{2}), \varphi_{1} - \varphi_{2}) &+& (\alpha(\cdot,\tau,\varphi_{1}) -\alpha(\cdot,\tau,\varphi_{2}), \varphi_{1} - \varphi_{2}) \nonumber 
\\
&=& ( f(\tau, \tilde\varphi_{1}) -f(\tau, \tilde{\varphi}_{2}), \varphi_{1} - \varphi_{2}).
\label{eq:scph}
\end{eqnarray}
Using (\ref{eq:liphatf}) and the fact that $ A^{-1/2}$ is self-adjoint in $H$, then (\ref{eq:scph}) gives

\begin{equation*}
\begin{split}
\frac{1}{2}\frac{d}{d\tau} & \Vert A^{-1/2}(\varphi_{1}-\varphi_{2})\Vert^{2}_{H} + \omega\Vert\varphi_{1}- \varphi_{2}\Vert^{2}_{H}
\\
& \leq \langle f(\tau, \tilde{\varphi}_{1})-f(\tau, \tilde{\varphi}_{2}), \varphi_{1}-\varphi_{2}\rangle  = \langle A^{1/2}(f(\tau, \tilde\varphi_{1}) - f(\tau, \tilde\varphi_{2})), A^{-1/2}(\varphi_{1} -\varphi_{2})\rangle \\& \leq \Vert A^{1/2}(f(\tau, \tilde\varphi_{1}) - f(\tau, \tilde\varphi_{2}))\Vert_{H} \Vert \varphi_{1} -\varphi_{2}\Vert_{V^\prime}  = \Vert \tilde{f}(\tilde\varphi_{1}) - \tilde{f}(\tilde\varphi_{2})\Vert_{H} \Vert \varphi_{1} -\varphi_{2}\Vert_{V^\prime} \\& \leq \tilde \beta \Vert\tilde{\varphi_{1}} - \tilde{\varphi_{2}}\Vert_{H} \Vert \varphi_{1} -\varphi_{2}\Vert_{V^\prime}.  
\end{split}
\end{equation*}

\noindent This implies 
\[
\frac{1}{2}\frac{d}{d\tau}\Vert \varphi_{1}-\varphi_{2}\Vert^{2}_{V^\prime} + \omega\Vert\varphi_{1}- \varphi_{2}\Vert^{2}_{H} \leq \tilde\beta \Vert\tilde{\varphi_{1}} - \tilde{\varphi_{2}}\Vert_{H} \Vert \varphi_{1} -\varphi_{2}\Vert_{V^\prime}.
\]
Then, integrating on a small time interval $ [0,T]$ from $ 0$ to $t$ and noting that $ \varphi_{1}(0) = \varphi_{2}(0) =\varphi_{0}$, we obtain

\begin{equation*}
\begin{split}
\frac{1}{2}\Vert \varphi_{1}(\tau)-\varphi_{2}(\tau)\Vert^{2}_{V^\prime} & + \omega \int_{0}^{\tau} \Vert\varphi_{1}(s)- \varphi_{2}(s)\Vert^{2}_{H}ds 
\\ & 
\leq \tilde\beta \int_{0}^{\tau}\Vert\tilde{\varphi_{1}}(s) - \tilde{\varphi_{2}}(s)\Vert_{H} \Vert \varphi_{1}(s) -\varphi_{2}(s)\Vert_{V^\prime}ds 
\\ & 
\leq \tilde\beta \max\limits_{\tau\in[0,T]} ~ \Vert\varphi_{1}(\tau) - \varphi_{2}(\tau)\Vert_{V^\prime}\int_{0}^{T}\Vert \tilde{\varphi_{1}}(\tau) - \tilde{\varphi_{2}}(\tau)\Vert_{H}d\tau.
\end{split}
\end{equation*}
Taking the maximum over $ \tau \in [0,T]$ and using the fact that for any $ a,b\in\mathbb{R}, ~ ab \leq \frac{1}{2} a^{2} + \frac{1}{2}b^{2}$, we obtain 
\begin{equation*}
\begin{split}
\frac{1}{2}(\max\limits_{\tau\in[0,T]} ~ &\Vert\varphi_{1}(\tau) - \varphi_{2}(\tau)\Vert_{V^\prime})^{2}  + \omega \int_{0}^{T} \Vert\varphi_{1}(\tau)- \varphi_{2}(\tau)\Vert^{2}_{H}d\tau \\& \leq \tilde\beta \max\limits_{\tau\in[0,T]} ~ \Vert\varphi_{1}(\tau) - \varphi_{2}(\tau)\Vert_{V^\prime}  \int_{0}^{T}\Vert \tilde{\varphi_{1}}(\tau) - \tilde{\varphi_{2}}(\tau)\Vert_{H}d\tau
\\& 
\leq \frac{1}{2}(\max\limits_{\tau\in[0,T]} ~ \Vert\varphi_{1}(\tau) - \varphi_{2}(\tau)\Vert_{V^\prime})^{2}  + \frac{\tilde\beta^{2}}{2} (\int_{0}^{T}\Vert \tilde{\varphi_{1}}(\tau) - \tilde{\varphi_{2}}(\tau)\Vert_{H}d\tau)^{2}.
\end{split}
\end{equation*}
Using the Cauchy--Schwartz inequality, we obtain 
$\omega \int_{0}^{T} \Vert\varphi_{1}(\tau)- \varphi_{2}(\tau)\Vert^{2}_{H}d\tau \leq \frac{\tilde\beta^{2}}{2} \int_{0}^{T}d\tau \int_{0}^{T}\Vert \tilde{\varphi_{1}}(\tau) - \tilde{\varphi_{2}}(\tau)\Vert^{2}_{H}d\tau = \frac{\tilde\beta^{2} T}{2}\int_{0}^{T}\Vert \tilde{\varphi_{1}}(\tau) - \tilde{\varphi_{2}}(\tau)\Vert^{2}_{H}d\tau.
$
This implies that 
\[
\Vert F(\tilde{\varphi_{1}}) - F(\tilde{\varphi_{2}})\Vert^{2}_{\mathcal{H}} \leq \frac{\tilde\beta^{2} T}{2\omega}\Vert \tilde{\varphi_{1}} -\tilde{\varphi_{2}}\Vert^{2}_{\mathcal{H}}. 
\]
Thus, for sufficiently small value of $T$ such that $\frac{\tilde\beta^{2} T}{2\omega} <1$, the operator $F$ is a contraction on the space $ \mathcal{H}$. Therefore, by the Banach fixed point theorem, $F$ has a unique fixed point in $\mathcal{H}$. It is worth noting that $\tilde\beta$ and $ \omega$ are given such that they are independent of $T$. If $T>0$ is arbitrary, then we can apply a simple continuation argument. In other words, if the solution exists in $(0,T_0)$ interval with $\frac{\tilde \beta^{2} T_0}{2 \omega} <1$, then starting from the initial condition $\varphi_0=\varphi(T_0/2)$, we can continue the solution $\varphi$ from the interval $(0,T_0)$ over the interval $(0, T_0)\cup (T_0/2, T_0/2 +T_0) \equiv (0, 3T_0/2)$. Continuing in this manner, we obtain the existence and uniqueness of a solution $\varphi\in{\mathcal H}$ defined on the time interval $(0,T)$. 

Finally, the solution belongs to the space $\mathcal V$ because the right-hand side, i.e., the function  $\hat f(\tau)=g_0(\tau, \varphi(\cdot, \tau)) + \nabla\cdot\bm{g}_1(\tau, \varphi(\cdot, \tau))$ belongs to ${\mathcal V}'$. Applying Theorem~\ref{th:Showalter} we conclude  $\varphi\in{\mathcal V}$, as claimed.
\hfill
$\diamondsuit$

\medskip

The following result demonstrates the absolute continuity and a-priori energy estimate property of the solution. Based on the assumption of the previous theorem, we have $\alpha(\cdot, 0), g_0(\cdot, 0), g_{1j}(\cdot, 0) \in \mathcal{H}$. Here, the space $\mathcal{X} = L^{\infty} ((0,T);V^\prime)$ is endowed with the norm 
\[
\Vert \varphi\Vert^{2}_{\mathcal{X}} = \sup_{\tau\in[0,T]}\Vert \varphi(\tau)\Vert^{2}_{V^\prime}, \; \forall\varphi\in\mathcal{X}.
\] 
Again, the following result and its are contained in our recent paper \cite[Theorem 3]{udeani2021application}.
\begin{theorem}\cite[Theorem 3]{udeani2021application}
\label{cor:abs-continuous}
Suppose that the functions $\alpha, g_0, g_{1j}$ satisfy the assumptions of Theorem~\ref{th:alpha-existence}. Then, the unique solution $\varphi \in\mathcal{V}$ to the Cauchy problem $(\ref{eq:hatf})$ is absolutely continuous, i.e., $ \varphi\in C([0,T];H)$. Moreover, there exists a constant $\tilde{C}>0$, such that the unique solution satisfies the following inequality:
\begin{equation}
\Vert \varphi \Vert^2_{\mathcal{X}} 
+ \Vert \varphi\Vert^2_{\mathcal{H}} 
\leq \tilde{C} \bigl(
\Vert \varphi_0\Vert^2_{V^\prime}
+ \Vert \alpha(\cdot, 0)\Vert^2_{\mathcal{H}}
+ \Vert g_0(\cdot, 0)\Vert^2_{\mathcal{H}}
+ \sum_{j=1}^d \Vert g_{1j}(\cdot, 0)\Vert^2_{\mathcal{H}}
\bigr).
\label{prior_est}
\end{equation}
\end{theorem}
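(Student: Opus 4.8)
The plan is to derive a differential energy inequality for $\tau\mapsto\Vert\varphi(\tau)\Vert_{V^\prime}^2$ by testing the equation, in its $A^{-1}$-multiplied form $\partial_\tau A^{-1}\varphi + \alpha(\cdot,\tau,\varphi) = f$ with $f=A^{-1}\hat f$, $\hat f = g_0(\tau,\varphi)+\nabla\cdot\bm g_1(\tau,\varphi)$, against the solution $\varphi$ itself in $H$, and then to close the estimate by Young's and Gronwall's inequalities. First I would settle the absolute continuity. From the equation one has $\partial_\tau\varphi = \hat f - A\alpha(\cdot,\tau,\varphi)$; the bounds established in the proof of Theorem~\ref{th:alpha-existence} show $\alpha(\cdot,\tau,\varphi)\in V$ with $\Vert\alpha(\cdot,\tau,\varphi)\Vert_V\le C(\Vert\varphi\Vert_V+\Vert h(\cdot,\tau)\Vert_H+\Vert p(\cdot,\tau)\Vert_H)$, while Lipschitz continuity of $g_0,g_{1j}$ together with boundedness of $\nabla:H\to V^\prime$ gives $\hat f\in \mathcal V^\prime$; hence $\partial_\tau\varphi\in\mathcal V^\prime=L^2((0,T);V^\prime)$. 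Since also $\varphi\in\mathcal V=L^2((0,T);V)$, the classical Lions--Showalter integration-by-parts lemma for the Gelfand triple $V\hookrightarrow H\hookrightarrow V^\prime$ yields $\varphi\in C([0,T];H)$ and (applied to $A^{-1/2}\varphi$) the absolute continuity of $\tau\mapsto\Vert\varphi(\tau)\Vert_{V^\prime}^2$ with $\tfrac{d}{d\tau}\Vert\varphi(\tau)\Vert_{V^\prime}^2 = 2(\partial_\tau A^{-1}\varphi(\tau),\varphi(\tau))_H$.

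Next I would estimate the two terms that arise from testing with $\varphi$ in $H$, namely $\tfrac12\tfrac{d}{d\tau}\Vert\varphi\Vert_{V^\prime}^2 + (\alpha(\cdot,\tau,\varphi),\varphi)_H = (f,\varphi)_H$. Writing $\alpha(\cdot,\tau,\varphi)=[\alpha(\cdot,\tau,\varphi)-\alpha(\cdot,\tau,0)]+\alpha(\cdot,\tau,0)$ and using $\omega\le\alpha^\prime_\varphi$, which gives the pointwise bound $(\alpha(x,\tau,\varphi)-\alpha(x,\tau,0))\varphi\ge\omega\varphi^2$ exactly as in the proof of Theorem~\ref{th:alpha-existence}, one gets $(\alpha(\cdot,\tau,\varphi),\varphi)_H\ge\omega\Vert\varphi\Vert_H^2-\Vert\alpha(\cdot,\tau,0)\Vert_H\Vert\varphi\Vert_H$. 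For the right-hand side, $(f,\varphi)_H=(A^{-1/2}\hat f,A^{-1/2}\varphi)_H\le\Vert A^{-1/2}\hat f\Vert_H\Vert\varphi\Vert_{V^\prime}$, and the Fourier-multiplier bounds $(1+|\xi|^2)^{-1/2}\le1$ and $|\xi_j|(1+|\xi|^2)^{-1/2}\le1$ combined with Lipschitz continuity of $g_0,g_{1j}$ (with constant $\beta$) give $\Vert A^{-1/2}\hat f\Vert_H\le(1+d)\beta\Vert\varphi\Vert_H+\Vert g_0(\tau,0)\Vert_H+\sum_{j=1}^d\Vert g_{1j}(\tau,0)\Vert_H$. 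Collecting these bounds and using Young's inequality $ab\le\varepsilon a^2+b^2/(4\varepsilon)$ to absorb all $\Vert\varphi\Vert_H$-factors into $\omega\Vert\varphi\Vert_H^2$, I obtain a differential inequality of the form
\[
\frac{d}{d\tau}\Vert\varphi(\tau)\Vert_{V^\prime}^2 + \frac{\omega}{2}\Vert\varphi(\tau)\Vert_H^2
\le C_1\Vert\varphi(\tau)\Vert_{V^\prime}^2 + C_2\Big(\Vert\alpha(\cdot,\tau,0)\Vert_H^2 + \Vert g_0(\tau,0)\Vert_H^2 + \sum_{j=1}^d\Vert g_{1j}(\tau,0)\Vert_H^2\Big),
\]
with constants $C_1,C_2$ depending only on $\omega,\beta,d$.

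Finally, Gronwall's lemma applied to this inequality bounds $\sup_{\tau\in[0,T]}\Vert\varphi(\tau)\Vert_{V^\prime}^2$, i.e.\ the $\mathcal X$-part of (\ref{prior_est}), by $e^{C_1T}\big(\Vert\varphi_0\Vert_{V^\prime}^2 + C_2\int_0^T(\cdots)\,d\tau\big)$; integrating the same inequality over $(0,T)$ and inserting this sup-bound controls $\int_0^T\Vert\varphi(\tau)\Vert_H^2\,d\tau$, the $\mathcal H$-part. Adding the two gives (\ref{prior_est}) with $\tilde C$ depending on $\omega,\beta,d,T$. I expect the principal obstacle to be the rigorous justification of the chain rule $\tfrac{d}{d\tau}\Vert\varphi\Vert_{V^\prime}^2 = 2(\partial_\tau A^{-1}\varphi,\varphi)_H$ — that is, verifying $\partial_\tau\varphi\in\mathcal V^\prime$ and legitimately invoking the abstract integration-by-parts lemma (an alternative is to run a Galerkin approximation, obtain the estimate for the approximants, and pass to the limit by weak lower semicontinuity); once this is in place the energy estimate is a routine computation. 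A minor technical point is to extend $\alpha(x,\tau,\cdot)$ from $(\varphi_{min},\infty)$ to all of $\R$ preserving the bounds $\omega\le\alpha^\prime_\varphi\le L$ and the gradient bound, so that the pointwise inequality for the $\alpha$-term is available for arbitrary $\varphi\in V$.
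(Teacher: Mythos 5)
Your proposal is correct and follows essentially the same route as the paper: absolute continuity via $\partial_\tau\varphi\in\mathcal{V}'$ and the embedding $W=\{\varphi\in\mathcal{V},\ \partial_\tau\varphi\in\mathcal{V}'\}\hookrightarrow C([0,T];H)$, followed by testing the $A^{-1}$-multiplied equation against $\varphi$ in $H$, splitting $\alpha$ and $g_0, g_{1j}$ about zero, using the strong monotonicity lower bound $\omega\Vert\varphi\Vert_H^2$, Young's inequality, and Gronwall. Your added remarks on rigorously justifying the chain rule for $\Vert\varphi\Vert_{V'}^2$ and on extending $\alpha$ beyond $(\varphi_{min},\infty)$ are sensible refinements but do not change the argument.
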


\begin{proof}
Since $\hat{f}\in\mathcal{V}^\prime$, where $\hat{f} = g_0 + \nabla \cdot \bm{g}_{1}$ and $\mathcal{A}(\tau,\varphi)\in\mathcal{V}^\prime$, then $\partial_{\tau}\varphi\in\mathcal{V}^\prime $. Therefore, for each $\varphi_0 \in H$, we have $\varphi\in W$, where $W$ is the Banach space  $W=\{\varphi,  \varphi\in\mathcal{V}, \partial_\tau\varphi \in \mathcal{V}'\}$. According to \cite[Proposition \;1.2]{Showalter}, we have $W \hookrightarrow C([0,T];H)$.
Hence, the unique solution $\varphi$ to the Cauchy problem  (\ref{th:alpha-existence}) belongs to the space $ C([0,T];H)$, as claimed.

\medskip

Next, we show that the unique solution satisfies a-priori energy estimate (\ref{prior_est}). Let $\varphi$ be a unique solution to the Cauchy problem (\ref{equ:g_0,g_1}). Multiply (\ref{eq:f}) by $\varphi$ and take the scalar product in $H$ to obtain
\begin{equation}
(\partial_{\tau}A^{-1}\varphi, \varphi)_{H} + (\alpha(\cdot,\tau,\varphi), \varphi)_{H} =(A^{-1}g_0(\tau, \varphi) + A^{-1} \nabla\cdot \bm{g}_1(\tau, \varphi), \varphi).
\label{equ:scalarH}
\end{equation}
Using the Lipschitz continuity of $g_0, \bm{g}_1$, and strong monotonicity of $\alpha$,  we obtain
\begin{eqnarray*}
\frac{1}{2}\frac{d}{d\tau} \Vert \varphi\Vert^2_{V'} + \omega \Vert \varphi\Vert^2_{H} 
&=& 
(\partial_{\tau}A^{-1}\varphi, \varphi) +  \omega \Vert \varphi\Vert^2_{H} 
\\
&\le&
(\partial_{\tau}A^{-1}\varphi, \varphi) + (\alpha(\cdot,\varphi) - \alpha(\cdot, 0), \varphi)  
\\
&=& 
(A^{-1} ( g_0(\cdot, \varphi) + \nabla\cdot \bm{g}_1(\tau, \varphi) ) - \alpha(\cdot, 0), \varphi)
\\
&=& 
(A^{-1} ( g_0(\cdot, \varphi) - g_0(\cdot, 0)  + \nabla\cdot \bm{g}_1(\cdot, \varphi) - \nabla\cdot \bm{g}_1(\cdot, 0) ), \varphi)
\\
&& + (A^{-1} ( g_0(\cdot, 0)  + \nabla\cdot \bm{g}_1(\cdot, 0) ), \varphi) - (\alpha(\cdot, 0), \varphi)
\\
&=& (A^{-1/2} ( g_0(\cdot, \varphi) - g_0(\cdot, 0)  + \nabla\cdot \bm{g}_1(\cdot, \varphi) - \nabla\cdot \bm{g}_1(\cdot, 0) ), A^{-1/2} \varphi)
\\
&& + (A^{-1/2} ( g_0(\cdot, 0)  + \nabla\cdot \bm{g}_1(\cdot, 0) ), A^{-1/2} \varphi) - (\alpha(\cdot, 0), \varphi)
\\
&\le& 
\beta (1+d) \Vert \varphi\Vert_H \Vert\varphi\Vert_{V'}
+ \Vert A^{-1/2} ( g_0(\cdot, 0)  + \nabla\cdot \bm{g}_1(\cdot, 0) )\Vert_H \Vert\varphi\Vert_{V'} 
\\
&& + \Vert\alpha(\cdot,0)\Vert_{H}\Vert\varphi\Vert_{H}
\\
&\le& 
\frac{\omega}{4} \Vert \varphi\Vert^2 _H + \frac{\beta^2 (1+d)^2}{\omega} \Vert\varphi\Vert^2_{V'}
+ \frac{1}{2}\Vert A^{-1/2} ( g_0(\cdot, 0)  + \nabla\cdot \bm{g}_1(\cdot, 0) )\Vert^2_H  
\\&& 
+ \frac{1}{2}\Vert\varphi\Vert^2_{V'}  
+ \frac{1}{\omega}\Vert\alpha(\cdot,0)\Vert^2_{H} + \frac{\omega}{4}\Vert\varphi\Vert^2_{H}.
\end{eqnarray*}
Hence, there exist constants $C_0, C_1>0$ such that 
\begin{eqnarray*}
\frac{d}{d\tau} \Vert \varphi\Vert^2_{V'} + \omega \Vert \varphi\Vert^2_{H} 
&\le& 
 C_1 \Vert\varphi\Vert^2_{V'} + C_0 \bigl( \Vert  g_0(\cdot, 0)\Vert^2_H + \sum_{j=1}^d  \Vert  g_{1j}(\cdot, 0)\Vert^2_H  +  \Vert \alpha(\cdot, 0)\Vert^2_H \bigr).
\end{eqnarray*}
Solving the differential inequality $y'(\tau) \le C_1 y(\tau) + r(\tau) $, where $y(\tau) = \Vert\varphi(\cdot,\tau)\Vert^2_{V'}$ and $r(\tau) = C_0\bigl( \Vert  g_0(\cdot, \tau, 0)\Vert^2_H + \sum_{j=1}^d  \Vert  g_{1j}(\cdot, \tau, 0)\Vert^2_H +  \Vert \alpha(\cdot,\tau, 0)\Vert^2_H \bigr)$, we obtain 
\[
y(\tau) \le e^{C_1 T} \bigl(y(0) + \int_0^T r(s)ds \bigr),
\]
and the proof of the Theorem follows. 
\end{proof}

\section{Applications to Option Pricing}

The classical linear Black--Scholes model and its multidimensional generalizations have been widely used in the financial market analysis. It is well-known that the price $V=V(t,S)$ of an option on an underlying asset price $S$ at time $t\in[0,T]$ can be obtained as a solution to the linear Black--Scholes parabolic equation of the form (\ref{eqBS}). Generally, th underlying asset price is assumed to follow the geometric Brownian motion $dS/S = \mu dt + \sigma dW$. Here, $\{W_t, t\ge 0\}$ is the standard Wiener process. The terminal condition $\Phi(S)$ represents the payoff diagram at maturity $t = T$, $\Phi(S)= (S-K)^+$ (call option case) or $\Phi(S)= (K-S)^+$ (put option case). 

For the multidimensional case, where the option price $V(t, S_1, \cdots, S_n)$ depends on the vector of $n$ underlying stochastic assets $S=(S_1, \cdots, S_n)$ with the volatilities $\sigma_i$ and mutual correlations $\varrho_{ij}, i,j=1,\cdots, n$, the Black--Scholes pricing equation can be expressed as follows:
\begin{equation}
\frac{\partial V}{\partial t} + \frac{1}{2} \sum_{i = 1}^n
\sum_{j = 1}^n \rho_{ij} \sigma_i \sigma_j S_i S_j
\frac{\partial^2 V}{\partial S_i \partial S_j} + r \sum_{i = 1}^n
S_i \frac{\partial V}{\partial S_i} - rV =0,\quad V(T,S)=\Phi(S).
\label{index-bs}
\end{equation}
Equations (\ref{eqBS}) and (\ref{index-bs}) can be transformed into equation (\ref{PDE-u}) defined on the whole space $\R^n$ (c.f.,  \v{S}ev\v{c}ovi\v{c}, Stehl\'{\i}kov\'a, Mikula \cite[Chapter 4, Section 5]{SSMbook}).

According to stock markets observations, the models (\ref{eqBS}) and (\ref{index-bs}) were derived under some restrictive assumptions, e.g., completeness and frictionless of the financial market, perfect replication of a portfolio and its liquidity, and absence of transaction costs. However, these assumptions are often violated in financial markets. In the past few decades, several attempts have been made to investigate the effects of nontrivial transaction costs \cite{NBS5,NBS7, leland1985option,sevcoviczitnanska}. For instance, Sch\"onbucher and Willmott \cite{NBS13}, Frey and Patie \cite{NBS11}, Frey and Stremme \cite{NBS10} investigated the feedback and illiquid market effects due to large traders choosing given stock-trading strategies. Janda\v{c}ka and \v{S}ev\v{c}ovi\v{c} recently investigated the effects of the risk arising from an unprotected portfolio. Barles and Soner \cite{barles} analyzed the option pricing models based on utility maximization. The common feature of these generalizations of the linear Black--Scholes equation (\ref{eqBS}) is that the constant volatility $\sigma$ is replaced by a nonlinear function depending on the second derivative $\partial _S^2 V$ of the option price $V$. Among these generalizations, Frey and Stremme \cite{NBS1} derived a nonlinear Black--Scholes model by assuming that the underlying asset dynamics takes into account the presence of feedback effects due to the influence of a large trader choosing particular stock-trading strategy (see also \cite{NBS13, NBS11, Frey98}). 

Recently, Cruz and \v{S}ev\v{c}ovi\v{c} \cite{NBS19} generalized the Black--Scholes equation in two important directions. First, they employed the ideas of Frey and Stremme \cite{NBS1} to incorporate the effect of a large trader into the model. Second, they relaxed the assumption on liquidity of market by assuming that the underlying asset price follows a L\'evy stochastic process with jumps to obtain the following nonlinear PIDE: 

\begin{eqnarray}
0 &=&\frac{\partial V}{\partial t}+\frac{1}{2}\frac{\sigma^2 S^2}{\left(1-\varrho S\partial_S \phi\right)^{2}} \frac{\partial^2 V}{\partial S^2 } +r S\frac{\partial V}{\partial S}-rV
\nonumber
\\
&+&\int_{\mathbb{R}} \left( V(t,S+H)-V(t,S)-H \frac{\partial V}{\partial S}\right) \nu(\ud z),
\label{nonlinearPIDE_one-intro-nonlin}
\end{eqnarray}
where the shift function $H=H(\phi,S,z)$ depends on the large investor stock-trading strategy function $\phi=\phi(t,S)$. Moreover, this shift function is a solution to the following implicit algebraic equation:
\begin{equation}
H =\rho S ( \phi(t, S+H)-\phi(t,S) ) + S(e^{z}-1).
\label{Hfunction}
\end{equation}
The large trader strategy function $\phi$ may depend on the derivative $\partial_S V$ of the option price $V$, e.g., $\phi(t,S)=\partial_S V(t,S)$. However, in our application, we assume the trading strategy function $\phi(t,S)$ is prescribed and globally H\"older continuous. Next, we present the analysis of this equation depending the behavior of the parameter $\rho=0$.

If $\rho=0$, then $H =S(e^{z}-1)$. Thus, equation \eqref{nonlinearPIDE_one-intro-nonlin} can be reduced to a linear PIDE of the form (\ref{PDE-u}) in the one-dimensional space ($n=1$). This is obtained using the standard transformation $\tau=T-t,x=\ln(\frac{S}{K})$ and setting $V(t,S)=e^{-r\tau} u(\tau,x)$.

However, if $\rho >0$, then \eqref{nonlinearPIDE_one-intro-nonlin} can be transformed into a nonlinear parabolic PIDE. Indeed, suppose that the transformed large trader stock-trading strategy $\psi(\tau,x)=\phi(t,S)$. Then, $V(t,S)$ solves equation \eqref{nonlinearPIDE_one-intro-nonlin} if and only if the transformed function $u(\tau,x)$ is a solution to the following nonlinear parabolic equation:
\begin{eqnarray}
\frac{\partial u}{\partial \tau}&=&\frac{\sigma^2}{2}\frac{1}{(1-\rho \partial_x\psi)^2}
\frac{\partial^2 u}{\partial^2 x}
+\left(r-\frac{\sigma^2}{2}\frac{1}{(1-\rho \partial_x\psi)^2} -  \delta(\tau, x) \right)\frac{\partial u}{\partial x}
\nonumber
\\
&+&\int_{\mathbb{R}} \left( u(\tau,x+\xi)-u(\tau,x)- \xi \frac{\partial u}{\partial x}(\tau,x) \right) \nu(\ud z), 
\quad u(0,x) = \Phi(K e^{x}),
\label{nonlinearPIDEaltsimplified}
\end{eqnarray}
$\tau\in [0,T], x\in \mathbb{R}$. The shift function $\xi(\tau, x, z)$ is a solution to the following algebraic equation: 
\begin{equation}
e^\xi  = e^z + \rho (\psi(\tau, x+\xi)-\psi(\tau, x)),
\label{xifunction}
\end{equation}
and $\delta(\tau, x) =\int_{\mathbb{R}} (e^\xi -1 -\xi ) \nu(\ud z) 
=\int_{\mathbb{R}} (e^z -1 - \xi +\rho  (\psi(\tau, x+\xi)-\psi(\tau, x)) ) \nu(\ud z)$.

For small values of $0<\rho\ll 1$, we can construct the first order asymptotic expansion $\xi(\tau, x, z) = \xi_0(\tau, x, z) + \rho \xi_1(\tau, x, z)$. For $\rho=0$, we obtain $\xi_0(\tau, x, z)=z$. Hence,
\[
e^{z+\rho\xi_1}  = e^z + \rho (\psi(\tau, x+z+ \rho \xi_1)-\psi(\tau, x)).
\]
Taking the first derivative of the above implicit equation with respect to $\rho$ and evaluating it at the origin $\rho=0$, we obtain $\xi_1=e^{-z}(\psi(\tau, x+z)-\psi(\tau, x))$, i.e.,  
\begin{equation}
\xi(\tau, x, z) = z + \rho e^{-z} (\psi(\tau, x+z)-\psi(\tau, x)).
\label{xifunctionexplicit}
\end{equation}

Consequently, we obtain the following lemma. 

\begin{lemma}\cite[Lemma 1]{vsevvcovivc2021multidimensional}
\label{holdercontinuityxi}
Assume that the stock-trading strategy $\phi=\phi(t,S)$ is a globally $\omega$-H\"older continuous function, $0<\omega\le 1$. Then, the transformed function $\psi(\tau,x)=\phi(t,S)$ is $\omega$-H\"older continuous, and the first order asymptotic expansion $\xi(\tau, x, z)$ of the nonlinear algebraic equation (\ref{xifunction}) is $\omega$-H\"older continuous in all variables. Furthermore, there exists a constant $C_0>0$ such that  $\sup_{\tau,x}|\xi(\tau,x,z)| \le C_0 |z|^\omega(1+e^{|z|})$ for any $z\in\R$.
\end{lemma}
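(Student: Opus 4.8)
The plan is to avoid any implicit-function argument for (\ref{xifunction}) and instead work directly with the explicit first-order expansion derived just above the statement, namely $\xi(\tau,x,z) = z + \rho\, e^{-z}\bigl(\psi(\tau,x+z)-\psi(\tau,x)\bigr)$ from (\ref{xifunctionexplicit}); so the lemma becomes a routine regularity statement about that closed-form expression. First I would record that $\psi$ inherits $\omega$-H\"older continuity from $\phi$: since $\psi(\tau,x)=\phi(T-\tau,Ke^{x})$, the time change $\tau\mapsto T-\tau$ is an isometry and the space change $x\mapsto Ke^{x}$ is smooth, so H\"older regularity is transferred. The delicate point, which I would isolate and state carefully at the outset, is that the H\"older constant of $\psi$ in the $x$ variable is \emph{uniform in} $\tau$ and in the base point, since all the subsequent estimates rest on this: one needs $|\psi(\tau,x+z)-\psi(\tau,x)| = |\phi(T-\tau,Ke^{x}e^{z})-\phi(T-\tau,Ke^{x})| \le L\,|z|^{\omega}$ for all $x,\tau$ and all $z\in\R$.

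I expect this transfer step to be the main obstacle. The map $x\mapsto Ke^{x}$ is only locally Lipschitz, so a bare $\omega$-H\"older bound for $\phi$ in $S$ gives a local constant proportional to $e^{\omega x}$, which destroys the required uniformity in $x$. One must therefore use the way $\phi$ is assumed H\"older — i.e.\ H\"older continuity of $S\mapsto\phi(t,S)$ measured against $\ln S$, which is the natural hypothesis for a trading strategy and is what makes $|\phi(t,S e^{z})-\phi(t,S)|\le L|z|^\omega$ hold uniformly in $S$. Everything downstream of this uniform bound is elementary.

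Granted the uniform bound, the $\omega$-H\"older continuity of $\xi$ in all three variables follows from closure of the class of (locally) $\omega$-H\"older functions under sums, products with locally bounded locally Lipschitz factors, and composition with smooth maps, applied to (\ref{xifunctionexplicit}): $z\mapsto z$ and $z\mapsto e^{-z}$ are locally Lipschitz, the differences $\psi(\tau,x+z)-\psi(\tau,x)$ are $\omega$-H\"older in each of $\tau,x,z$ with uniform moduli, and on every bounded set the product estimate $|a(z_{1})b(z_{1})-a(z_{2})b(z_{2})|\le |a(z_{1})|\,|b(z_{1})-b(z_{2})|+|b(z_{2})|\,|a(z_{1})-a(z_{2})|$ with $a(z)=e^{-z}$ and $b(z)=\psi(\tau,x+z)-\psi(\tau,x)$ keeps the exponent $\omega$. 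Combining the partial H\"older moduli in $\tau$, $x$, $z$ in the usual way gives joint $\omega$-H\"older continuity on bounded sets. Finally, for the growth bound I would estimate, using $e^{-z}\le e^{|z|}$ and the uniform H\"older bound,
\[
|\xi(\tau,x,z)| \le |z| + \rho\, e^{-z}\,|\psi(\tau,x+z)-\psi(\tau,x)| \le |z| + \rho L\, e^{|z|}\,|z|^{\omega},
\]
and absorb the leading term via the elementary inequality $|z|\le |z|^{\omega}(1+e^{|z|})$ valid for all $z$ (for $|z|\le 1$ because $\omega\le 1$; for $|z|\ge 1$ because $|z|^{1-\omega}\le |z|\le e^{|z|}$). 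Taking $C_{0}=\max\{1,\rho L\}$ then yields $\sup_{\tau,x}|\xi(\tau,x,z)|\le C_{0}|z|^{\omega}(1+e^{|z|})$, which is precisely the shape of the hypothesis required by Proposition~\ref{prop-f} and Theorem~\ref{semilinear_existence_result} with $D_{0}=1$.
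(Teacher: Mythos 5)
Your proof is correct and follows the route the paper itself sets up: the lemma is read off from the explicit first-order expansion (\ref{xifunctionexplicit}), using a uniform H\"older modulus for $\psi$ together with the elementary inequality $|z|\le |z|^{\omega}(1+e^{|z|})$, exactly as in the derivation preceding the statement. The subtlety you isolate is genuine and worth flagging: literal global $\omega$-H\"older continuity of $S\mapsto\phi(t,S)$ only yields $|\psi(\tau,x+z)-\psi(\tau,x)|\le L K^{\omega}e^{\omega x}|e^{z}-1|^{\omega}$, whose constant degenerates as $x\to\infty$, so the $\sup_{\tau,x}$ bound genuinely requires the hypothesis to be uniform $\omega$-H\"older continuity of $\psi$ in $x$ --- which is precisely the form of the assumption invoked later in Theorem~\ref{existence_linear_PIDE} --- and your reading is the intended one. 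The only cosmetic slip is the final constant: adding the two terms gives $C_{0}=1+\rho L$ rather than $\max\{1,\rho L\}$, which changes nothing.
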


\subsection{Linearization of PIDE}

In what follows, we consider a simplified linear approximation of  (\ref{nonlinearPIDE_one-intro-nonlin}) by setting $\rho=0$ in the diffusion function, but we keep the shift function $H$ depending on the parameter $\rho$. Then, the transformed Cauchy problem for the solution $u$  with the first order approximation of the shift function $\xi$ is given as follows:
\begin{eqnarray}
\frac{\partial u}{\partial \tau}&=&\frac{\sigma^2}{2} \frac{\partial^2 u}{\partial^2 x}
+\left(r-\frac{\sigma^2}{2} + \delta(\tau, x) \right)\frac{\partial u}{\partial x}
\nonumber 
\\
&&+\int_{\mathbb{R}} \left( u(\tau,x+\xi)-u(\tau,x)- \xi \frac{\partial u}{\partial x}(\tau,x) \right) \nu(\ud z),
\label{transformedeq}
\end{eqnarray}
$\tau\in [0,T], x\in \mathbb{R}$, where $\xi(\tau, x, z) = z + \rho (\psi(\tau, x+z))-\psi(\tau, x))$. 

Note that the call/put option payoff functions  $\Phi(S)=\Phi(K e^x)=(S-K)^+ = K (e^x-1)^+$ / $\Phi(S)=\Phi(K e^x)=(K-S)^+=K(1-e^x)^+$ do not belong to the Banach space $X^\gamma$. According to \cite{CruzSevcovic2020}, the procedure on how to overcome this problem and formulate existence and uniqueness of a solution to the PIDE  (\ref{transformedeq}) is based on shifting the solution $u$ by $u^{BS}$. Here, $u^{BS}(\tau,x) =e^{r\tau} V^{BS}(T-\tau, Ke^x)$ is an explicitly given solution to the linear Black--Scholes equation without the PIDE part. In other words, $u^{BS}$ solves the following linear parabolic equation:
\begin{equation}
\frac{\partial u^{BS}}{\partial \tau} - \frac{\sigma^2}{2} \frac{\partial^2 u^{BS}}{\partial x^2} 
-  \left(r-\frac{\sigma^2}{2}\right)\frac{\partial u^{BS}}{\partial x} =0,
\quad u^{BS}(0,x)=\Phi(K e^{x}), \ \tau\in(0,T), x\in \mathbb{R}.
\label{PDE-uBS}
\end{equation}
Recall that $u^{BS}(\tau,x) = K e^{x+r \tau} N(d_1) - K N(d_2)$ (call option case), where $d_{1,2} = ( x+ (r\pm\sigma^2/2)\tau) /(\sigma\sqrt{\tau})$ (c.f.,  \cite{NBS5, SSMbook}). Here,  $N(d)=\frac{1}{\sqrt{2\pi}}\int_{-\infty}^d e^{-\xi^2/2} \ud \xi$ is the cumulative density function of the normal distribution. 
The next result and its proof are based on the recent paper by \v{S}ev\v{c}ovi\v{c} and Udeani \cite{vsevvcovivc2021multidimensional}
\begin{theorem}\cite[Theorem 2]{vsevvcovivc2021multidimensional}
\label{existence_linear_PIDE}
Assume the transformed stock-trading strategy function $\psi(\tau,x)$ is globally $\omega$-H\"older continuous in both variables. Suppose that $\nu$ is a L\'evy  measure with the shape parameters $\alpha<3, D\in\R$, where either $\mu>0$, or $\mu=0$ and $D>1$. Let $X^\gamma$ be the space of Bessel potentials space ${\mathscr L}^p_{2\gamma}(\mathbb{R})$, where  $\frac{\alpha-1}{2\omega}<\gamma < \frac{p+1}{2p}$ and  $\frac12 \le \gamma<1$. Let $T>0$. Then, the linear PIDE (\ref{transformedeq}) has a unique mild solution $u$ with the property that the difference $U=u-u^{BS}$ belongs to the space $C([0,T],X^{\gamma})$. 
\end{theorem}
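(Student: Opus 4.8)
The plan is to subtract from (\ref{transformedeq}) the explicit Black--Scholes profile $u^{BS}$ solving (\ref{PDE-uBS}) with the same payoff and to show that the difference $U=u-u^{BS}$ solves an abstract semilinear parabolic problem covered by Proposition~\ref{semilinear_general_existence_result}. Subtracting (\ref{PDE-uBS}) from (\ref{transformedeq}) and using the linearity of $f$ (see (\ref{functional_f_def})) and of the multiplication by $\delta(\tau,x)=\int_{\mathbb{R}}(e^\xi-1-\xi)\,\nu(\ud z)$, one finds that $U$ formally satisfies
\[
\partial_\tau U + A U = F(\tau,U) + h(\tau), \qquad U(0)=0,
\]
where $A=-(\sigma^2/2)\,\partial_x^2$, $F(\tau,U)=f(U)+\bigl(r-\tfrac{\sigma^2}{2}-\delta(\tau,\cdot)\bigr)\partial_x U$, and the forcing term is $h(\tau,\cdot)=f(u^{BS})(\tau,\cdot)-\delta(\tau,\cdot)\,\partial_x u^{BS}(\tau,\cdot)$; the initial datum vanishes because $u$ and $u^{BS}$ carry the same payoff. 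The operator $A$ is sectorial on $X=L^p(\mathbb{R})$ and $-A$ generates an analytic semigroup, its fractional power spaces being the Bessel potential spaces $X^\gamma={\mathscr L}^p_{2\gamma}(\mathbb{R})$ recalled above; since $\tfrac12\le\gamma<1$, the derivative $\partial_x$ maps $X^\gamma$ boundedly into $X^{\gamma-1/2}\hookrightarrow X^0=X$.

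I would then verify that $F:[0,T]\times X^\gamma\to X$ is a bounded linear operator — hence globally Lipschitz in $U$ — and H\"older continuous in $\tau$. By Lemma~\ref{holdercontinuityxi} the shift function satisfies $\sup_x|\xi(\tau,x,z)|\le C_0|z|^\omega(1+e^{|z|})$, so it fits the hypotheses of Proposition~\ref{prop-f} with $D_0=1$; since $\nu$ is admissible with $\alpha<3$ and either $\mu>0$ or $\mu=0,\ D>1$, and $\tfrac{\alpha-1}{2\omega}<\gamma$, $\tfrac12\le\gamma<1$, Proposition~\ref{prop-f} (with $n=1$) yields $\|f(U)\|_{L^p}\le C_0\|U\|_{X^\gamma}$, i.e. $f:X^\gamma\to X$ is bounded linear. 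The function $\delta$ lies in $L^\infty((0,T)\times\mathbb{R})$: for $|z|<1$ one has $|e^\xi-1-\xi|\le C|z|^{2\omega}$, which is $\nu$-integrable since $\alpha<2\omega+1$ (a consequence of $\tfrac{\alpha-1}{2\omega}<\gamma<1$), while for $|z|\ge 1$ one has $|\xi|\le(1+C)|z|$, so $e^\xi$ is $\nu$-integrable under the stated tail hypotheses on $\nu$. Consequently $\bigl\|(r-\tfrac{\sigma^2}{2}-\delta(\tau,\cdot))\partial_x U\bigr\|_{L^p}\le C\|U\|_{X^\gamma}$ with $C$ independent of $\tau$, and, since $\psi$ is $\omega$-H\"older in $\tau$ and $s\mapsto e^s-1-s$ is locally Lipschitz, $\tau\mapsto\delta(\tau,\cdot)$ is $\omega$-H\"older into $L^\infty$; the $\tau$-H\"older continuity of $\tau\mapsto f(U)(\tau,\cdot)$ follows from Proposition~\ref{prop_pointwise_est-2} applied to $\xi_1=\xi(\tau,\cdot,z)$, $\xi_2=\xi(\tau',\cdot,z)$ together with the $z$-integration exactly as in the proof of Proposition~\ref{prop-f} (the bound $\gamma>(\alpha-1)/(2\omega)$ being what absorbs the singularity of $\nu$ at the origin).

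The decisive step is the forcing term: a priori $h$ does not lie in $L^p(\mathbb{R})$, because $u^{BS}$ and $\partial_x u^{BS}$ grow like $e^x$ as $x\to+\infty$. Here one uses the identity
\[
h(\tau,x)=\int_{\mathbb{R}}\Bigl[u^{BS}(\tau,x+\xi)-u^{BS}(\tau,x)-(e^\xi-1)\,\partial_x u^{BS}(\tau,x)\Bigr]\nu(\ud z),
\]
and the fact that the integrand above vanishes identically for every affine--exponential profile $aKe^{x+r\tau}+b$. Writing $u^{BS}=\bigl(aKe^{x+r\tau}+b\bigr)+w$ with the affine--exponential part chosen to be a solution of (\ref{PDE-uBS}) and such that the remainder $w$ is bounded (for a call one subtracts $Ke^{x+r\tau}-K$, leaving the put-type profile $0\le w\le K$; for a put one simply takes $w=u^{BS}$), the growing contributions cancel and $h(\tau,\cdot)=f(w)(\tau,\cdot)-\delta(\tau,\cdot)\,\partial_x w(\tau,\cdot)$. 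Now $w(\tau,\cdot)$ is smooth with exponentially decaying $x$-derivatives for $\tau>0$, while $\partial_x w(0,\cdot)$ — which solves (\ref{PDE-uBS}) as well — is bounded, decays exponentially, and has a single jump, so that $\partial_x w(0,\cdot)\in{\mathscr L}^p_{s}$ precisely for $s<1/p$; since the solution semigroup of (\ref{PDE-uBS}) is a contraction on every $X^s$, this gives $\sup_{\tau\in[0,T]}\|\partial_x w(\tau,\cdot)\|_{X^{\gamma-1/2}}<\infty$ exactly when $2\gamma-1<1/p$, i.e. $\gamma<\tfrac{p+1}{2p}$ — this is where the upper bound on $\gamma$ enters. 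By Proposition~\ref{prop-f}, $\|f(w)(\tau,\cdot)\|_{L^p}\le C_0\|\partial_x w(\tau,\cdot)\|_{X^{\gamma-1/2}}$, and $\|\delta(\tau,\cdot)\partial_x w(\tau,\cdot)\|_{L^p}\le\|\delta\|_\infty\|\partial_x w(\tau,\cdot)\|_{L^p}$; hence $h\in L^\infty((0,T);X)\subset L^1((0,T);X)$, and $h$ is H\"older in $\tau$ by the parabolic regularity of $w$, the $\omega$-H\"older dependence of $\xi$ on $\tau$, and estimates of the type in Proposition~\ref{prop-f}.

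With these ingredients Proposition~\ref{semilinear_general_existence_result} applies with $U_0=0\in X^\gamma$, yielding a unique mild solution $U\in C([0,T],X^\gamma)$; then $u=U+u^{BS}$ is the asserted mild solution of (\ref{transformedeq}), and its uniqueness within the class $\{u:\ u-u^{BS}\in C([0,T],X^\gamma)\}$ follows from that of $U$. I expect the main obstacle to be exactly the cancellation argument of the third step: one must recognise that the a priori non-$L^p$ terms $f(u^{BS})$ and $\delta\,\partial_x u^{BS}$ combine into the genuinely $L^p$ function $f(w)-\delta\,\partial_x w$, and then control the regularity of the kinked payoff finely enough to get the uniform-in-$\tau$ $X^{\gamma-1/2}$ bound on $\partial_x w$, which together with Proposition~\ref{prop-f} is what pins down the admissible window $\tfrac{\alpha-1}{2\omega}<\gamma<\tfrac{p+1}{2p}$, $\tfrac12\le\gamma<1$.
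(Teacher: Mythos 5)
Your proposal is correct and follows essentially the same route as the paper: subtract the explicit Black--Scholes profile, exploit the cancellation $\tilde f(e^x)=0$ (your affine--exponential kernel observation), control $f(U)$ and the forcing term through Proposition~\ref{prop-f} with the window $\frac{\alpha-1}{2\omega}<\gamma<\frac{p+1}{2p}$ entering exactly where you place it, and conclude via Proposition~\ref{semilinear_general_existence_result} with $U_0=0$. The only difference is cosmetic: the paper obtains the time regularity of $h(\tau,\cdot)=\tilde f(u^{BS}(\tau,\cdot))$ by citing the singular-but-integrable H\"older estimates of \cite[Lemma 4.1]{CruzSevcovic2020}, whereas you derive a uniform-in-$\tau$ bound directly from the $X^{\gamma-1/2}$ membership of the kinked payoff derivative propagated by the constant-coefficient semigroup.
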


\begin{proof}
We first outline the idea of the proof. The initial condition $u(0,\cdot)\not\in X^\gamma$ because of two reasons. It is not smooth for $x=0$, and it grows exponentially for $x\to\infty$ (call option) or $x\to-\infty$ (put option). The shift function $U=u-u^{BS}$ satisfies $U(0,\cdot)\equiv 0$, and so the initial condition $U(0,\cdot)$ belongs to $X^\gamma$. However, the shift function $u^{BS}$ enters the governing PIDE as it includes the term $f(u^{BS}(\tau,\cdot))$ in the right-hand side. Since $u^{BS}(0,x)$ is not sufficiently smooth for $x=0$, the shift term $f(u^{BS}(\tau,\cdot))$ is singular for $\tau\to 0^+$. Following the ideas of \cite{CruzSevcovic2020}, for the shift term $f(u^{BS}(\tau,\cdot))$, we can provide H\"older estimates, which are sufficient for proving the main result of this theorem (c.f.,  \cite[Lemma 4.1]{CruzSevcovic2020}). Furthermore, the exponential growth of the function $u^{BS}$ will be overcome since $\tilde f(e^x) = 0$, where $\tilde f(u) = f(u) - \delta(\tau,\cdot) \partial_x u$, i.e., 
\[
\tilde f(u)(x) =
\int_{\mathbb{R}}\left( u(x+\xi)-u(x)- (e^\xi -1) \partial_x u(x)\, \right) \nu(\ud z).
\]

Next, we present more details of the proof. The function $u^{BS}$ solves the linear PDE (\ref{PDE-uBS}). Thus, the difference $U=u-u^{BS}$ of the solution $u$ to (\ref{transformedeq}) and $u^{BS}$ satisfies the PIDE with the right-hand side:
\begin{eqnarray*}
\frac{\partial U}{\partial \tau}
&=&\frac{\sigma^2}{2} \frac{\partial^2 U}{\partial x^2} 
+ \left(r-\frac{\sigma^2}{2} - \delta(\tau,x)\right)\frac{\partial U}{\partial x} + f(U) + f(u^{BS}) - \delta(\tau,x) \frac{\partial u^{BS}}{\partial x}
\\
&=&\frac{\sigma^2}{2} \frac{\partial^2 U}{\partial x^2} + f(U) + g(\tau, x, \partial_x U) + h(\tau,\cdot),
\end{eqnarray*}
$U(0,x)= 0, \ x\in \mathbb{R}, \tau\in(0,T)$. Here $g(\tau, x, \partial_x U) = (r-\sigma^2/2 - \delta(\tau,x) )\partial_x U$, and $h(\tau,\cdot)=\tilde f(u^{BS}(\tau,\cdot))$. According to Proposition~\ref{prop-f}, $f:X^\gamma\to X$ is a bounded linear mapping. Consequently, it is  Lipschitz continuous, provided that $1/2\le \gamma<1$ and $\gamma>(\alpha-1)/(2\omega)$. Clearly, $\tilde f(e^x)=0$. Hence,
\[
\tilde f(u^{BS}) = \tilde f(u^{BS} - K e^{r\tau+x}), \quad \hbox{and}\ \ 
\partial_\tau \tilde f(u^{BS}) = \tilde f(\partial_\tau(u^{BS} - K e^{r\tau+x})).
\]
Now, it follows from \cite[Lemma 4.1]{CruzSevcovic2020} that the following estimate holds true: 
\[
\Vert  h(\tau_1, \cdot) - h(\tau_2, \cdot) \Vert_{L^p} 
= \Vert \tilde f(u^{BS}(\tau_1, \cdot)) - \tilde f(u^{BS}(\tau_2, \cdot)) \Vert_{L^p} 
\le C_0 |\tau_1-\tau_2|^{-\gamma +\frac{p+1}{2p}}, 
\]
\[
\Vert  h(\tau, \cdot)  \Vert_{L^p} 
= \Vert \tilde f(u^{BS})(\tau, \cdot)) \Vert_{L^p} 
\le C_0 |\tau^{-(2\gamma-1)\left(\frac{1}{2} - \frac{1}{2p}\right)}, 
\]
for any $0<\tau_1,\tau_2,\tau\le T$. The function $h:[0,T]\to X\equiv L^p(\mathbb{R})$ is $((p+1)/(2p)-\gamma)$-H\"older continuous because  $\gamma<\frac{p+1}{2p}$. Moreover, 
\[
\int_0^T \Vert h(\tau, \cdot ) \Vert_{L^p} d\tau=
\int_0^T \Vert \tilde f(u^{BS}(\tau, \cdot)) \Vert_{L^p}d\tau 
\le C_0 \int_0^T\tau^{-(2\gamma-1)\left(\frac{1}{2} - \frac{1}{2p}\right)}d\tau <\infty,
\]
because $(2\gamma-1)\left(\frac{1}{2} - \frac{1}{2p}\right)<1$. Recall that the crucial part of the proof of \cite[Lemma 4.1]{CruzSevcovic2020} was based on the estimates:
\[
\Vert \tilde f(u^{BS}(\tau, \cdot)) \Vert_{L^p} 
\le C_0 \Vert v(\tau,\cdot) \Vert_{X^{\gamma-1/2}},
\quad\text{and}\ 
\Vert \partial_\tau \tilde f(u^{BS}(\tau, \cdot)) \Vert_{L^p} 
\le C_0 \Vert \partial_\tau v(\tau,\cdot) \Vert_{X^{\gamma-1/2}}, 
\]
where $v(\tau,x)= \partial_x \left(u^{BS}(\tau,x) - K e^{r\tau+x}\right)
= K e^{r\tau+x}( N(d_1(\tau,x)) -1)$. This estimate is satisfied because of Proposition~\ref{prop-f} under the assumptions made on $\gamma$. The proof for the case of a put option is similar. The final estimate on the H\"older continuity of the mapping $h$ follows from careful estimates of the solution $u^{BS}$ derived in the proof of \cite[Lemma 4.1]{CruzSevcovic2020}. Now, the proof follows from Theorem~\ref{semilinear_existence_result} and Proposition~\ref{semilinear_general_existence_result}. 
\end{proof}

\section{Feedback effects under jump-diffusion asset price dynamics}

Suppose that a large trader uses a stock-holding strategy $\alpha_{t}$ and $S_{t}$ is a cadlag process (right continuous with limits to the left). In what follows, we will identify $S_{t}$ with $S_{t^-}$. We assume $S_t$ has the following  dynamics:
\begin{eqnarray}
&&\ud S_{t}=\mu S_{t}\ud t+\sigma S_{t} \ud W_{t}+\rho S_{t} \ud \alpha_{t}+ \int_{\mathbb{R}} S_{t}(e^{x}-1) J_{X}(\ud t, \ud x),
\label{Sdynamicsimplicit}
\end{eqnarray}
which can seen as a perturbation of the classical jump-diffusion model. For instance, if a large trader does not trade, then $\alpha_{t}=0$ or the market liquidity parameter $\rho$ is set to zero, then the stock price $S_t$ follows the classical jump-diffusion model.

We will assume the following structural hypothesis in this chapter:
\begin{assumption}\label{assumptionone}\cite[Assumption 1]{}
Assume the trading strategy $\alpha_{t}=\phi(t,S_{t})$ and the parameter $\rho\ge 0$ satisfy $\rho L<1$, where $L=\sup_{S>0} |S\frac{\partial \phi}{\partial S}|$.
\end{assumption}

Next, we show an explicit formula for the dynamics of $S_{t}$ satisfying (\ref{Sdynamicsimplicit}) under certain regularity assumptions made on the stock-holding function $\phi(t,S)$.
\begin{proposition}
Suppose that the stock-holding strategy $\alpha_{t}=\phi(t,S_{t})$ satisfies Assumption \ref{assumptionone}, where $\phi\in C^{1,2}([0,T]\times \mathbb{R^{+}})$. If the process $S_{t}, t\ge 0,$ satisfies the implicit stochastic equation  (\ref{Sdynamicsimplicit}), then the process $S_{t}$ is driven by the following SDE:
\begin{eqnarray}
\ud S_{t}=b(t,S_{t})S_{t}\ud t+v(t,S_{t})S_{t} \ud W_{t}+\int_{\mathbb{R}} H(t,x,S_{t}) J_{X}(\ud t, \ud x)
\label{Sdynamicsexplicit},
\end{eqnarray}
where
\begin{eqnarray}
&&b(t,S)=\frac{1}{1-\rho S\frac{\partial \phi}{\partial S}(t,S)}\left(\mu+\rho\left(\frac{\partial \phi}{\partial t}+\frac{1}{2}v(t,S)^2S^2 \frac{\partial^2 \phi}{\partial S^2}\right)\right),
\label{coeffcondb}
\\
&&v(t,S)=\frac{\sigma}{1-\rho S\frac{\partial \phi}{\partial S}(t,S)},
\label{coeffcondv}
\\
&&H(t,x,S)=S_{}(e^{x}-1)+\rho S\left[\phi(t,S+H(t,x,S))-\phi(t,S)\right].
\label{coeffcondone}
\end{eqnarray}

\end{proposition}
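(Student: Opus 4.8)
The plan is to substitute the dynamics of the strategy $\alpha_t=\phi(t,S_t)$, obtained from It\^o's lemma, into the implicit equation (\ref{Sdynamicsimplicit}), and then to identify $b$, $v$, $H$ by matching the drift, diffusion and jump components against those in (\ref{Sdynamicsexplicit}). As a preliminary step I would record that the algebraic equation (\ref{coeffcondone}) for the jump amplitude is uniquely solvable: for fixed $(t,x,S)$ the map $H\mapsto S(e^{x}-1)+\rho S[\phi(t,S+H)-\phi(t,S)]$ is a contraction by the mean value theorem and Assumption~\ref{assumptionone} (the relevant Lipschitz constant being controlled by $\rho L<1$), so the Banach fixed point theorem yields a unique $H=H(t,x,S)$, which is moreover $C^{1}$ in $S$ with $S\mapsto S+H(t,x,S)$ increasing; in particular the jump keeps the price positive.

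Next I would posit that the solution of (\ref{Sdynamicsimplicit}) is an It\^o--L\'evy process of the form $\ud S_t=b(t,S_t)S_t\,\ud t+v(t,S_t)S_t\,\ud W_t+\int_{\mathbb{R}}H(t,x,S_{t^-})\,J_X(\ud t,\ud x)$ with coefficients still to be determined, and apply It\^o's lemma for L\'evy-type stochastic integrals (cf. \cite{NBS19}) to $\phi(t,S_t)$. Since $\phi\in C^{1,2}([0,T]\times\mathbb{R}^{+})$ this gives
\[
\ud\alpha_t=\Bigl(\partial_t\phi+bS_t\partial_S\phi+\tfrac12 v^2 S_t^2\partial_S^2\phi\Bigr)\ud t+vS_t\partial_S\phi\,\ud W_t+\int_{\mathbb{R}}\bigl[\phi(t,S_{t^-}+H)-\phi(t,S_{t^-})\bigr]J_X(\ud t,\ud x),
\]
with the derivatives of $\phi$ evaluated at $(t,S_t)$. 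Substituting this expression for $\ud\alpha_t$ into (\ref{Sdynamicsimplicit}) and collecting terms, I then match coefficients, using uniqueness of the canonical semimartingale decomposition and of the integrands against $\ud W_t$ and $J_X$. The $\ud W_t$-coefficient gives $vS=\sigma S+\rho S\,vS\partial_S\phi$, i.e. $v(1-\rho S\partial_S\phi)=\sigma$, which is (\ref{coeffcondv}); Assumption~\ref{assumptionone} guarantees $1-\rho S\partial_S\phi\ge 1-\rho L>0$, so the division is legitimate. The $J_X$-term gives $H=S(e^{x}-1)+\rho S[\phi(t,S+H)-\phi(t,S)]$, i.e. (\ref{coeffcondone}). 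Finally the $\ud t$-coefficient gives $bS=\mu S+\rho S\bigl(\partial_t\phi+bS\partial_S\phi+\tfrac12 v^2S^2\partial_S^2\phi\bigr)$, hence $b(1-\rho S\partial_S\phi)=\mu+\rho\bigl(\partial_t\phi+\tfrac12 v^2S^2\partial_S^2\phi\bigr)$, which is (\ref{coeffcondb}). Reading the computation in reverse shows that a solution of (\ref{Sdynamicsexplicit}) with these coefficients solves (\ref{Sdynamicsimplicit}), closing the equivalence.

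The main obstacle is not the algebra but the rigorous justification of the two steps where regularity is used: that (\ref{Sdynamicsimplicit}) admits a (strong) positive semimartingale solution, so that $\phi$ is evaluated inside its domain $\mathbb{R}^{+}$ and It\^o's lemma applies, and that the coefficient matching is legitimate. A clean way to sidestep the apparent circularity is to \emph{define} $b,v,H$ by (\ref{coeffcondb})--(\ref{coeffcondone}) from the start — these are well posed precisely because $\rho L<1$ — to establish existence and uniqueness for the explicit SDE (\ref{Sdynamicsexplicit}) with its (locally Lipschitz, sublinear) coefficients $bS$, $vS$, $H$ on the state space $(0,\infty)$, and then to verify directly, via It\^o's lemma applied to $\phi(t,S_t)$, that this solution also satisfies (\ref{Sdynamicsimplicit}). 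One should also keep track of the small-jump compensation if the precise statement of It\^o's lemma is used with the compensated jump measure, but since both (\ref{Sdynamicsimplicit}) and (\ref{Sdynamicsexplicit}) are written with the uncompensated measure $J_X$, the jump integrands are matched directly and no extra drift correction appears.
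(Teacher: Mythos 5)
Your proposal is correct and follows essentially the same route as the paper: apply It\^o's lemma to $\phi(t,S_t)$, substitute the resulting $\ud\alpha_t$ into (\ref{Sdynamicsimplicit}), rearrange to isolate $(1-\rho S\partial_S\phi)\,\ud S_t$, and read off $b$, $v$, $H$ by matching the drift, diffusion, and jump terms. The additional points you raise --- unique solvability of (\ref{coeffcondone}) by a contraction argument under $\rho L<1$, and resolving the apparent circularity by defining the coefficients first and verifying --- go beyond what the paper writes down but do not change the argument.
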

\begin{proof}
First, the SDE \eqref{Sdynamicsexplicit} can be expressed for $S_{t}$ as follows:
\begin{eqnarray}
\ud S_{t}&=&\left(b(t,S_{t}) S_{t} +\int_{|x|<1} H(t,x,S_{t})\nu(\ud x)\right)\ud t + v(t,S_{t}) S_{t} \ud W_{t}\nonumber
\\
&&+\int_{|x|\ge 1} H(t,x,S_{t}) J_{X}(\ud t, \ud x)+\int_{|x|<1} H(t,x,S_{t}) \tilde{J}_{X}(\ud t, \ud x).
\nonumber
\end{eqnarray}
Since the function $\phi(t,S)$ is smooth, by applying It\^{o} formula \eqref{itolemma} to the process $\phi(t,S_{t})$, we obtain
\begin{eqnarray}
\ud \alpha_{t}&=&\left(\frac{\partial \phi}{\partial t}+\frac{1}{2}v(t,S_{t})^2S_{t}^2 \frac{\partial^2 \phi}{\partial S^2}\right)\ud t+\frac{\partial \phi}{\partial S}\ud S_{t}
\\
&&+\int_{\mathbb{R}} \phi(t,S_{t}+H(t,x,S_{t}))-\phi(t,S_{t})-H(t,x,S_{t}) \frac{\partial \phi}{\partial S}(t,S_t) J_{X}(\ud t, \ud x).
\nonumber
\end{eqnarray}
Substituting the differential  $\ud \alpha_{t}$ into \eqref{Sdynamicsimplicit}, we obtain
\begin{eqnarray}
\ud S_{t}&=&\mu S_{t}\ud t+\sigma S_{t} \ud W_{t}+\int_{\mathbb{R}} S_{t}(e^{x}-1) J_{X}(\ud t, \ud x)+\rho S_{t}\frac{\partial \phi}{\partial S}\ud S_{t}\nonumber
\\
&&+\rho S_{t} \left(\frac{\partial \phi}{\partial t}+\frac{1}{2}v(t,S_{t})^2S_{t}^2 \frac{\partial^2 \phi}{\partial S^2}\right)\ud t\label{Sdynamicsexplicitone}
\\
&&+\rho S_{t}\int_{\mathbb{R}} \phi(t,S_{t}+H(t,x,S_{t}))-\phi(t,S_{t})- H(t,x,S_{t}) \frac{\partial \phi}{\partial S}(t,S_t) J_{X}(\ud t, \ud x).
\nonumber
\end{eqnarray}
Next, rearrange terms in (\ref{Sdynamicsexplicitone}) to obtain 
\begin{eqnarray}
&&(1- \rho S_{t}\frac{\partial \phi}{\partial S}(t,S_{t}))\ud S_{t}=(\mu S_{t}+\rho S_{t}(\frac{\partial \phi}{\partial t}+\frac{1}{2}v(t,S_{t})^2 S_{t}^2 \frac{\partial^2 \phi}{\partial S^2}))\ud t 
\nonumber
\\
&&+\sigma S_{t} \ud W_{t}-\rho S_{t}\int_{\mathbb{R}} H(t,x,S_{t}) \frac{\partial \phi}{\partial S}(t, S_t)  J_{X}(\ud t, \ud x)
\label{Sdynamicsexplicittwo}
\\
&&+\int_{\mathbb{R}} S_{t}(e^{x}-1)+\rho S_{t}\left( \phi(t,S_{t}+H(t,x,S_{t}))-\phi(t,S_{t})\right) J_{X}(\ud t, \ud x).
\nonumber
\end{eqnarray}
By comparing terms in (\ref{Sdynamicsexplicit}) and  (\ref{Sdynamicsexplicittwo}), we obtain the expressions \eqref{coeffcondb}, \eqref{coeffcondv}, and the implicit equation for the function $H$:
\begin{eqnarray}
H(t,x,S)&=&\frac{1}{1-\rho S\frac{\partial \phi}{\partial S}(t,S)}\left(S(e^{x}-1)+\rho S\left(\phi(t,S+H(t,x,S))-\phi(t,S)\right)\right)\nonumber
\\
&&-\frac{1}{1-\rho S\frac{\partial \phi}{\partial S}(t,S)}\rho S\frac{\partial \phi}{\partial S}(t,S) H(t,x,S).
\label{coeffcondH}
\end{eqnarray}
Simplifying this expression for $H$, we establish \eqref{coeffcondone}, as claimed.
\end{proof}
Since the function $H$ is given implicitly by equation \eqref{coeffcondone}, we can expand its solution in terms of a small parameter $\rho$ as follows:
\[H(t,x,S) = H^0(t,x,S) + \rho H^1(t,x,S) + O (\rho^2)~ as ~ \rho\to0,\]
to deduce the following proposition:

\begin{proposition}\label{prop-Happrox}
Assume $\rho$ is sufficiently small. Then, the first-order approximation of the function $H(t,x,S)$ is given as follows:
\begin{eqnarray}
&&H(t,x,S)= S(e^{x}-1)+\rho S\left(\phi(t,Se^{x})-\phi(t,S)\right)+O (\rho^2)\quad\text{as}\ \rho\to0.
\label{Hsmall}
\end{eqnarray}
\end{proposition}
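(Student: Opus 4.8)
The plan is to treat the implicit relation \eqref{coeffcondone} as a fixed-point equation for $H$ with $\rho$ playing the role of a small parameter. Fixing $(t,x,S)$, define $\mathcal{T}_\rho(H) := S(e^x-1) + \rho S\bigl[\phi(t,S+H)-\phi(t,S)\bigr]$. Since $\phi\in C^{1,2}$, the map $\mathcal{T}_\rho$ is Lipschitz in $H$ with a constant proportional to $\rho$ (comparable to $\rho L$ with $L=\sup_{S>0}|S\partial_S\phi|$ from Assumption~\ref{assumptionone}, up to corrections that vanish as $\rho\to0$), hence a contraction for $\rho$ small; thus \eqref{coeffcondone} has a unique solution $H=H(t,x,S;\rho)$ near $\rho=0$ with $H(t,x,S;0)=S(e^x-1)=:H^0(t,x,S)$. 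Writing $\Psi(H,\rho):=H-\mathcal{T}_\rho(H)$ and noting $\partial_H\Psi(H^0,0)=1\neq0$, the implicit function theorem gives smooth dependence of $H$ on $\rho$ in a neighbourhood of $\rho=0$.

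Next I would extract the first-order coefficient by differentiating $H=\mathcal{T}_\rho(H)$ with respect to $\rho$ at $\rho=0$: the explicit $\rho$-prefactor annihilates the contribution of $\partial_H$, leaving $H^1(t,x,S)=\partial_\rho H|_{\rho=0}=S\bigl[\phi(t,S+H^0)-\phi(t,S)\bigr]=S\bigl[\phi(t,Se^x)-\phi(t,S)\bigr]$, which is precisely the $O(\rho)$ term appearing in \eqref{Hsmall}.

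To pin down the remainder as $O(\rho^2)$ it is cleanest to argue directly rather than through higher-order regularity of $\rho\mapsto H$. Let $\bar H := S(e^x-1)+\rho S\bigl[\phi(t,Se^x)-\phi(t,S)\bigr]$ be the candidate. Substituting $\bar H$ into the right-hand side of \eqref{coeffcondone} and subtracting, the residual equals $\rho S\bigl[\phi(t,Se^x)-\phi(t,S+\bar H)\bigr]$; since $S+\bar H=Se^x+O(\rho)$, the mean value theorem together with $\phi\in C^{1,2}$ bounds this by $C\,\rho\cdot O(\rho)=O(\rho^2)$. Subtracting the exact equation for $H$ from the approximate one for $\bar H$ then yields $|H-\bar H|\le \rho S\,\|\partial_S\phi(t,\cdot)\|_\infty\,|H-\bar H|+O(\rho^2)$, and absorbing the first term on the left for $\rho$ small gives $|H-\bar H|=O(\rho^2)$, which is the assertion \eqref{Hsmall}.

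The main obstacle is essentially bookkeeping: keeping the constants in the contraction and the mean value estimates under control, in particular the factor $S$ multiplying $\phi$, which one wants to absorb cleanly --- either by restricting to compact ranges of $S$, by passing to the logarithmic variable $x=\ln S$ so that $S\partial_S\phi=\partial_x\phi$ and Assumption~\ref{assumptionone} enters with the right normalization, or simply by letting the $O(\rho^2)$ symbol carry the $(t,x,S)$-dependence. None of this is deep, but it is where the care is required.
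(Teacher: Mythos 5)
Your proof is correct, but it does considerably more work than the paper, which offers no proof of this proposition at all: it simply posits the formal ansatz $H = H^0 + \rho H^1 + O(\rho^2)$, substitutes it into the implicit equation \eqref{coeffcondone}, and reads off $H^0(t,x,S)=S(e^x-1)$ and $H^1(t,x,S)=S[\phi(t,Se^x)-\phi(t,S)]$ by matching powers of $\rho$ (exactly the computation in your second paragraph). What your argument adds is the justification that this formal expansion is legitimate: the contraction-mapping step establishes that \eqref{coeffcondone} actually has a unique solution $H(t,x,S;\rho)$ for small $\rho$ (which the paper takes for granted, implicitly relying on Assumption~\ref{assumptionone} with $\rho L<1$), and your direct residual estimate --- inserting the candidate $\bar H$ into \eqref{coeffcondone}, bounding the defect by $O(\rho^2)$ via the mean value theorem, and absorbing the Lipschitz term --- is the cleanest way to turn the formal $O(\rho^2)$ into a genuine one. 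Your closing caveat about the factor $S$ versus the normalization $L=\sup_{S>0}|S\,\partial_S\phi|$ is well taken; the paper itself glosses over the same point in the proof of its Proposition~3 (bounding $S|\phi(t,S+H)-\phi(t,S)|$ by $L|H|$), so you are being more careful than the source, and letting the $O(\rho^2)$ constant depend on $(t,x,S)$ (or working on compact ranges of $S$) is an acceptable resolution consistent with how the asymptotics are used later.
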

\noindent The next proposition and its proof is based on the recent paper by Cruz and \v{S}ev\v{c}ovi\v{c} \cite{NBS19}.
\begin{proposition} \cite[Proposition 3.4] {NBS19}
Assume that the asset price process $S_{t}=e^{X_t +r t}$ satisfies SDE \eqref{Sdynamicsexplicit}, where the L\'evy measure $\nu$ is such that $\int_{|x|\ge 1}  e^{2x} \nu\left(\ud x\right)<\infty$. Let the price of a derivative security $V(t,S)$ be define by 
\begin{equation}
V(t,S)=\mathbb{E}\left[e^{-r(T-t)}\Phi(S_{T})|S_{t}=S\right]=e^{-r(T-t)}\mathbb{E}\left[\Phi(Se^{r(T-t)+X_{T-t}})\right].
\end{equation}
Suppose that the payoff function $\Phi$ is Lipschitz continuous and the function $\phi$ has a bounded derivative. Then, $V(t,S)$ is a solution to the PIDE of the form:
\begin{eqnarray}
&&0=\frac{\partial V}{\partial t}+\frac{1}{2}v(t,S)^2 S^2\frac{\partial^2 V}{\partial S^2}+rS\frac{\partial V}{\partial S}-rV\nonumber
\\
&&\qquad +\int_{\mathbb{R}} V(t,S+H(t,x,S))-V(t,S)-H(t,x,S)\frac{\partial V}{\partial S}(t,S)\nu(\ud x),
\label{nonlinearPIDE}
\end{eqnarray}
where $v(t,S)$ and $H(t,x,S)$ are given by $\eqref{coeffcondv}$ and $\eqref{coeffcondone}$, respectively.
\end{proposition}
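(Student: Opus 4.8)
The plan is a Feynman--Kac argument: exhibit the discounted price as a martingale, apply the L\'evy-type It\^o formula \eqref{itolemma} to it, and match the PIDE \eqref{nonlinearPIDE} with the vanishing of the drift. First I would observe that $S_t=e^{X_t+rt}$ is a Markov process solving \eqref{Sdynamicsexplicit}; the moment condition $\int_{|x|\ge1}e^{2x}\nu(\ud x)<\infty$ yields $\mathbb{E}[S_T^2]<\infty$, so from the Lipschitz bound $|\Phi(S)|\le|\Phi(0)|+L_\Phi S$ we get $\mathbb{E}|\Phi(S_T)|<\infty$, and the tower property together with the Markov property give that
\[
M_t:=e^{-rt}V(t,S_t)=\mathbb{E}\bigl[e^{-rT}\Phi(S_T)\bigm|\mathcal{F}_t\bigr]
\]
is a martingale on $[0,T]$. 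The same exponential-L\'evy structure (with the martingale condition \eqref{expcond}) shows that the discounted asset $e^{-rt}S_t=e^{X_t}$ is itself a martingale, which after compensating the jumps of \eqref{Sdynamicsexplicit} is equivalent to the identity $b(t,S)S+\int_{\R}H(t,x,S)\,\nu(\ud x)=rS$; this identity is the mechanism that converts the $b$-drift of \eqref{Sdynamicsexplicit} into the $rS\,\partial_S V$ term plus the compensated integral appearing in \eqref{nonlinearPIDE}.

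Next I would secure the regularity needed to apply It\^o, namely $V\in C^{1,2}\bigl((0,T)\times(0,\infty)\bigr)$. Since the diffusion is non-degenerate ($\sigma>0$, and under Assumption~\ref{assumptionone} the coefficient $v(t,S)=\sigma/(1-\rho S\partial_S\phi)$ from \eqref{coeffcondv} is bounded between two positive constants), the transition law of $S_t$ admits a smooth density and $V$ inherits interior smoothness; alternatively one proves the claim first for smooth payoffs, where $V\in C^{1,2}$ is classical, and then passes to a general Lipschitz $\Phi$ by approximation, using the Lipschitz continuity of $S\mapsto V(t,S)$. I expect this regularity step --- together with the domination estimates (based on $\mathbb{E}[S_t^2]<\infty$ and the linear growth of $H$ near the origin from \eqref{coeffcondone}) that turn the stochastic integrals produced by It\^o, in particular $\int_{\R}[V(t,S_t+H)-V(t,S_t)]\widetilde{J}_X(\ud t,\ud x)$, into genuine rather than merely local martingales --- to be the main obstacle; the remainder is routine bookkeeping.

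Finally, with $V\in C^{1,2}$, I would apply \eqref{itolemma} to $g(s,S_s)$ with $g(s,S)=e^{-rs}V(s,S)$ on a small interval $[t,t+h]$ for the process started at $S_t=S$, take conditional expectations to annihilate the $\ud W$-integral and the compensated jump integral, divide by $h$ and let $h\to0^+$. Using continuity of the integrand this shows that the drift of $M$ at $(t,S)$, namely
\[
e^{-rt}\Bigl[-rV+\partial_t V+b(t,S)S\,\partial_S V+\tfrac12 v(t,S)^2S^2\,\partial_S^2V+\int_{\R}\bigl(V(t,S+H)-V(t,S)\bigr)\nu(\ud x)\Bigr],
\]
vanishes for every $(t,S)\in(0,T)\times(0,\infty)$. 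Substituting the identity $b(t,S)S=rS-\int_{\R}H\,\nu(\ud x)$ from the first step recombines $b(t,S)S\,\partial_S V$ with $\int_{\R}(V(t,S+H)-V(t,S))\,\nu(\ud x)$ into $rS\,\partial_S V+\int_{\R}(V(t,S+H)-V(t,S)-H\,\partial_S V)\,\nu(\ud x)$, which is exactly \eqref{nonlinearPIDE}. The terminal condition $V(T,S)=\Phi(S)$ then follows by dominated convergence as $t\uparrow T$.
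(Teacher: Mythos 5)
Your proposal is correct and follows essentially the same route as the paper: both arguments rest on the martingale property of $e^{-rt}V(t,S_t)$, the L\'evy-type It\^o formula applied to the risk-neutral dynamics $\ud S_t = rS_t\,\ud t + v(t,S_t)S_t\,\ud W_t + \int_{\mathbb{R}}H\,\tilde J_X(\ud t,\ud x)$, and the square-integrability estimates (via the Lipschitz continuity of $V$, the bound $|H(t,x,S)|\le S|e^x-1|/(1-\rho L)$, $\sup_t\mathbb{E}[S_t^2]<\infty$, and $\int_{\mathbb{R}}(e^x-1)^2\nu(\ud x)<\infty$) that upgrade the local martingale part to a genuine martingale so that the drift must vanish. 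The only notable differences are cosmetic --- you extract the drift by an infinitesimal $h\to 0^+$ argument where the paper invokes the uniqueness of the semimartingale decomposition, and you explicitly flag the $C^{1,2}$ regularity of $V$ needed to apply It\^o, a point the paper's proof leaves implicit.
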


\begin{proof}
 Recall that the asset price dynamics of $S_{t}$ under the measure $\mathbb{Q}$ is given by
\begin{eqnarray}
\ud S_{t}=rS_{t}\ud t + v(t,S_{t}) S_{t} \ud W_{t}+\int_{\mathbb{R}} H(t,x,S_{t}) \tilde{J}_{X}(\ud t, \ud x).\label{SQdynamics}
\end{eqnarray}
Applying the It\^{o}'s lemma to $V(t,S_{t})$, we obtain $\ud(V(t,S_{t})e^{-rt})=a(t)\ud t+\ud M_{t}$, where 
\begin{eqnarray*}
a(t)& = &\frac{\partial V}{\partial t}+\frac{1}{2}v(t,S_t)^2S_{t}^2 \frac{\partial^2 V}{\partial S^2}+rS_{t}\frac{\partial V}{\partial S}-rV\nonumber
\\
&&+\int_{\mathbb{R}} V(t,S_{t}+H(t,x,S_{t}))-V(t,S_{t})-H(t,x,S_{t})\frac{\partial V}{\partial S}(t,S_t)\nu(\ud x),
\end{eqnarray*}
and 
\[
\ud M_{t} = e^{-rt}S_{t}v(t,S_{t})\frac{\partial V}{\partial S}\ud W_{t}+e^{-rt}\int_{\mathbb{R}} V(t,S_{t}+H(t,x,S_{t}))-V(t,S_{t})\tilde{J}_{X}(\ud t, \ud x).
\]
Our aim is to show that $M_{t}$ is a martingale. Consequently, we have $a\equiv 0$ a.s., and  $V$ is a solution to \eqref{nonlinearPIDE} (\cite[Proposition 8.9]{ConTan03}). To show that the term  $\int_{0}^{T}e^{-rt}\int_{\mathbb{R}} V(t,S_{t}+H(t,x,S_{t}))-V(t,S_{t})\tilde{J}_{S}(\ud t, \ud y)$ is a martingale, it is suffices to show that 
\begin{eqnarray}
\mathbb{E}\left[\int_{0}^{T}e^{-2rt} \left(\int_{\mathbb{R}}V(t,S_{t}+H(t,x,S_{t}))-V(t,S_{t})\nu(\ud x)\right)^2\ud t\right]<\infty.
\end{eqnarray}
Since $\sup_{0\le t\le T}\mathbb{E}\left[e^{X_{T-t}}\right] <\infty$, and the payoff function $\Phi$ is Lipschitz continuous, $V(t,S)$ is Lipschitz continuous with the Lipschitz constant $C>0$. Moreover, since the function $\phi(t,S)$ has bounded derivatives, we have
\[
S \left|\phi(t,S+H(t,x,S))-\phi(t,S)\right|
\leq S \left|\frac{\partial \phi}{\partial S}\right| |H(t,x,S)|
\leq L |H(t,x,S)|.
\]
(see Assumption \ref{assumptionone}). Since $H(t,x,S)=S (e^{x}-1)+\rho S (\phi(t,S+H(t,x,S))-\phi(t,S) )$, we obtain $|H(t,x,S)|^{2}\leq S^{2}(e^{x}-1)^2/(1-\rho L)^2$. Again, since $V$ is Lipschitz continuous with the Lipschitz constant $C>0$, we have
\begin{eqnarray*}
&&\mathbb{E}\left[\int_{0}^{T}e^{-2rt} \left(\int_{\mathbb{R}}V(t,S_{t}+H(t,x,S_{t}))-V(t,S_{t})\nu(\ud x)\right)^2\ud t\right]
\\
&&\le \frac{C^2 }{(1-\rho L)^2} \mathbb{E}\left[\int_{0}^{T}\int_{\mathbb{R}} e^{-2rt}  |S_t|^2  (e^{x}-1)^2 \nu(\ud x)  \ud t \right] <\infty,
\end{eqnarray*}
since $\sup_{t\in \left[0,T\right]} \mathbb{E}\left[S_{t}^2\right]<\infty$. Here, $C_0 = \int_{\mathbb{R}} (e^{x}-1)^2 \nu(\ud x) <\infty$ because of the assumptions made on the measure $\nu$. Next, we show that $\int_{0}^{T}e^{-rt}S_{t}v(t,S_{t})\frac{\partial V}{\partial S}(t,S_t)\ud W_{t}$ is a martingale. Since $S \frac{\partial \phi}{\partial S}(t,S)$ is bounded, we have
\begin{eqnarray}
0<v(t,S)=\frac{\sigma}{1-\rho S\frac{\partial \phi}{\partial S}(t,S)}\leq \frac{\sigma}{1-\rho L}\equiv C_{1}<\infty .\nonumber
\end{eqnarray}
Therefore, $\mathbb{E}[\int_{0}^{T}e^{-2rt}(\frac{\partial V}{\partial S}(t,S_{t})v(t,S_{t}) S_{t})^{2}\ud t]
\leq C^{2} C_1^2 \int_{0}^{T}e^{-2rt}\mathbb{E}[S^{2}_{t}]\ud t<\infty$ because $S_t$ is a martingale. Thus, $M_t$ is also a martingale, and $a\equiv 0$. Hence, $V$ is a solution to PIDE (\ref{nonlinearPIDE}), as claimed.
\end{proof}

\begin{remark}
If $\rho=0$, then $H(t,x,S)=S (e^{x}-1)$. Moreover, equation (\ref{nonlinearPIDE}) becomes
\begin{equation}
\frac{\partial V}{\partial t}+\frac{\sigma^2}{2}S_{}^2\frac{\partial^2 V}{\partial S^2}+rS_{}\frac{\partial V}{\partial S}-rV+\int_{\mathbb{R}} V(t,S e^{x})-V(t,S_{})-S (e^{x}-1)\frac{\partial V}{\partial S}(t,S) \nu(\ud x)=0,
\label{classicalPIDE}
\end{equation}
which is the well-known classical PIDE. If there are no jumps ($\nu=0$) and a trader follows the delta hedging strategy, i.e.,  $\phi(t,S)=\partial_S V(t,S)$, then equation (\ref{nonlinearPIDE}) reduces to the Frey--Stremme option pricing model:
\begin{equation}
\frac{\partial V}{\partial t}+\frac{1}{2}\frac{\sigma^2}{\left(1-\varrho S\partial^2_S V\right)^{2}} S^2 \frac{\partial^2 V}{\partial S^2 } +r S\frac{\partial V}{\partial S}-rV =0
\label{Frey}
\end{equation}
(c.f. \cite{Frey98}). Finally, if $\rho=0$ and $\nu=0$, equation (\ref{nonlinearPIDE}) reduces to the classical linear Black--Scholes equation. 
\end{remark}

For simplicity, we assume the interest rate is zero, i.e., $r=0$. Then, the function $V(t,S)$ is a solution to the following PIDE:
\begin{eqnarray}
\frac{\partial V}{\partial t}&+&\frac{1}{2}v(t,S)^2 S^2 \frac{\partial^2 V}{\partial S^2}
\nonumber 
\\
&+&\int_{\mathbb{R}} V(t,S+H(t,x,S))-V(t,S)-H(t,x,S)\frac{\partial V}{\partial S}(t,S)\nu(\ud x)=0.
\label{equationsimple}
\end{eqnarray}
Next, let the tracking error of a trading strategy $\alpha_t=\phi(t,S_t)$ be
$e_{T}^{M}:=\Phi(S_T)-V_0=V(T, S_{T})-V_{0}-\int_{0}^{T} \alpha_{t} \ud S_{t}$. By applying It\^{o}'s formula to $V(t,S_{t})$ and using $\eqref{equationsimple}$, we obtain
\begin{eqnarray}
&&V(T,S_{T})-V_0= V(T,S_{T})-V(0,S_{0})=\int_{0}^{T} \ud V(t,S_{t})
\nonumber
\\
&&=\int_{0}^{T} \frac{\partial V}{\partial S}\ud S_{t}+\int_{0}^{T} \frac{\partial V}{\partial t}+\frac{1}{2}v(t,S_{t})^2 S_{t}^2\frac{\partial^2 V}{\partial^2 S} \ud t\nonumber
\\
&&\qquad +\int_{0}^{T}\int_{\mathbb{R}} V(t,S_{t}+H(t,x,S_{t}))-V(t,S_{t})-H(t,x,S_{t})\frac{\partial V}{\partial S} J_{X}(\ud t, \ud x)\nonumber
\\
&&=\int_{0}^{T} \frac{\partial V}{\partial S}\ud S_{t} - \int_{0}^{T} \int_{\mathbb{R}} V(t,S_{t}+H(t,x,S_{t}))-V(t,S_{t})-H(t,x,S_{t})\frac{\partial V}{\partial S}\nu(\ud x)\ud t\nonumber
\\
&&\qquad +\int_{0}^{T}\int_{\mathbb{R}} V(t,S_{t}+H(t,x,S_{t}))-V(t,S_{t})-H(t,x,S_{t})\frac{\partial V}{\partial S} J_{X}(\ud t, \ud x)\nonumber
\\
&&=\int_{0}^{T} \frac{\partial V}{\partial S}\ud S_{t}+\int_{0}^{T}\int_{\mathbb{R}} V(t,S_{t}+H(t,x,S_{t}))-V(t,S_{t})-H(t,x,S_{t})\frac{\partial V}{\partial S} \tilde{J}_{X}(\ud t, \ud x).
\nonumber
\end{eqnarray}
Using expression (\ref{SQdynamics}) for the dynamics of the asset price $S_t$ (with $r=0$), the tracking error $e_{T}^{M}$ can be expressed as follows:
\begin{eqnarray}
e_{T}^{M} &=& V(T, S_{T})-V_{0}-\int_{0}^{T} \alpha_{t} \ud S_{t} = \int_{0}^{T} \left(\frac{\partial V}{\partial S}(t,S_t) - \alpha_t\right)\ud S_{t}
\nonumber\\
&& +\int_{0}^{T}\int_{\mathbb{R}} V(t,S_{t}+H(t,x,S_{t}))-V(t,S_{t})-H(t,x,S_{t})\frac{\partial V}{\partial S} \tilde{J}_{X}(\ud t, \ud x)
\nonumber
\\
&=& \int_{0}^{T}  v(t, S_t) S_t \left(\frac{\partial V}{\partial S}-\alpha_t\right)\ud W_{t}
\label{trackingerr}
\\
&& +\int_{0}^{T}\int_{\mathbb{R}} V(t,S_{t}+H(t,x,S_{t}))-V(t,S_{t})-\alpha_t H(t,x,S_{t}) \tilde{J}_{X}(\ud t, \ud x).
\nonumber
\end{eqnarray}

\begin{remark}
For the delta hedging strategy $\alpha_t=\phi(t,S_{t})=\frac{\partial V}{\partial S}(t,S_{t})$, the tracking error function $e_{T}^{M}$ can be expressed as follows:
\[
e_{T}^{M}=\int_{0}^{T}\int_{\mathbb{R}} V(t,S_{t}+H(t,x,S_{t}))-V(t,S_{t})-H(t,x,S_{t})\frac{\partial V}{\partial S}(t, S_t) \tilde{J}_{X}(\ud t, \ud x).
\]
It is obvious that the tracking error for the delta hedging strategy need not be zero for $\nu\not\equiv0$. 
\end{remark}
The next proposition presents a criterion that can be used to find the optimal hedging strategy. The proposition and its proof is based on the recent paper by Cruz and \v{S}ev\v{c}ovi\v{c} \cite{NBS19}.

\begin{proposition} \cite[Proposition 3.5]{NBS19}
\label{prop-trackingerror-min}
The trading strategy $\alpha_t=\phi(t,S_t)$ of a large trader minimizing the variance $\mathbb{E}\left[(\epsilon_{T}^M)^{2}\right]$ of the tracking error is given by the implicit equation:
\begin{eqnarray}
\phi(t,S_t)&=&\beta^\rho(t,S_t)\bigl[v(t,S_{t})^{2}S_{t}^{2}\frac{\partial V}{\partial S}(t,S_{t})
\nonumber
\\
&& +\int_{\mathbb{R}} \left(V(t,S_{t}+H(t,x,S_{t}))-V(t,S_{t})) H(t,x,S_{t}\right)\nu(\ud x)\bigr],
\label{strategy}
\end{eqnarray}
where $\beta^\rho(t,S_t)=1/[v(t,S_{t})^{2}S_{t}^{2}+\int_{\mathbb{R}}H(t,x,S_{t})^2\nu(\ud x) ]$ and $H(t,x,S)= S (e^x-1) + \rho S [\phi(t,S + H(t,x,S) )-\phi(t,S)]$.
\end{proposition}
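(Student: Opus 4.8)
The plan is to exploit the orthogonality of the two martingale terms in the tracking error representation (\ref{trackingerr}). Since the Wiener process $W$ and the compensated Poisson random measure $\tilde J_X$ are independent, $e_T^M$ is a sum of two stochastic integrals each having zero mean, and the cross term vanishes in expectation. Hence $\mathbb{E}[e_T^M]=0$, so the variance equals the second moment, and applying the It\^o isometry to the Brownian integral and to the compensated jump integral separately yields
\begin{eqnarray*}
\mathbb{E}\left[(e_T^M)^2\right] &=& \mathbb{E}\left[\int_0^T v(t,S_t)^2 S_t^2\left(\frac{\partial V}{\partial S}(t,S_t)-\alpha_t\right)^2\ud t\right]
\\
&& + \mathbb{E}\left[\int_0^T\int_{\mathbb{R}}\left(V(t,S_t+H(t,x,S_t))-V(t,S_t)-\alpha_t H(t,x,S_t)\right)^2\nu(\ud x)\,\ud t\right].
\end{eqnarray*}
Both summands are expectations of time integrals of pointwise nonnegative integrands, so the variance is minimized provided that, for a.e. $(t,\omega)$ along the trajectory $S=S_t(\omega)$, the value $a=\alpha_t$ minimizes the one-variable function
\[
F(a) = v(t,S)^2 S^2\left(\partial_S V(t,S)-a\right)^2 + \int_{\mathbb{R}}\left(V(t,S+H)-V(t,S)-a\,H\right)^2\nu(\ud x),
\]
with $H=H(t,x,S)$.

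Next I would minimize $F$. Under Assumption~\ref{assumptionone} one has $0<v(t,S)\le\sigma/(1-\rho L)$ and $|H(t,x,S)|\le S|e^x-1|/(1-\rho L)$, so the coefficient $v(t,S)^2 S^2+\int_{\mathbb{R}}H^2\nu(\ud x)$ of $a^2$ in $F$ is finite (by the integrability condition $\int_{|x|\ge1}e^{2x}\nu(\ud x)<\infty$ on $\nu$) and strictly positive; thus $F$ is a strictly convex quadratic possessing a unique minimizer, characterized by $F'(a)=0$, i.e.
\[
v(t,S)^2 S^2\left(a-\partial_S V\right) + \int_{\mathbb{R}}\left(aH-(V(t,S+H)-V(t,S))\right)H\,\nu(\ud x) = 0 .
\]
Solving this linear equation for $a$ and recalling the notation $\beta^\rho(t,S)=1/\left[v(t,S)^2 S^2+\int_{\mathbb{R}}H^2\nu(\ud x)\right]$ produces exactly formula (\ref{strategy}) evaluated at $S=S_t$.

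It then remains to reconcile this with the feedback form $\alpha_t=\phi(t,S_t)$. The minimization above is carried out over the hedge ratio $\alpha_t$ regarded as a square-integrable $\mathcal{F}_t$-predictable process, holding the asset dynamics --- hence $v(t,S)$ and $H(t,x,S)$ --- fixed; imposing afterwards that the minimizer be of feedback type and that $H$ solve its defining relation (\ref{coeffcondone}), namely $H(t,x,S)=S(e^x-1)+\rho S[\phi(t,S+H(t,x,S))-\phi(t,S)]$, converts (\ref{strategy}) into the stated implicit equation for the optimal $\phi$. I expect the main obstacle to be precisely this coupling: the control $\phi$ enters $F$ not only through $a=\phi(t,S)$ but also, via $H$, through the values of $\phi$ at the shifted points $S+H$, so the pointwise optimality argument must be read as identifying a fixed point of the ``minimize-then-substitute'' map rather than as a one-shot optimization; that such a fixed point $\phi$ exists, and that the stochastic integrals entering the variance are genuinely square integrable, is what the condition $\rho L<1$ in Assumption~\ref{assumptionone} is designed to guarantee.
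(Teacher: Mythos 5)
Your proof is correct and follows essentially the same route as the paper: decompose $\mathbb{E}[(\epsilon_T^M)^2]$ via the It\^o isometry into the Brownian and compensated-jump contributions, then impose the first-order condition for the resulting convex quadratic in $\alpha_t$ (the paper phrases this as a variational condition against arbitrary variations $\omega_t$, which is equivalent to your pointwise minimization of $F$). Your closing remark about the implicit coupling of $\phi$ through $H$ is a fair observation about why the result is only an implicit equation, a point the paper leaves unstated.
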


\begin{proof}
Using equation (\ref{trackingerr}) for the tracking error $\epsilon_{T}^M$ and It\^{o}'s isometry, we have 
\begin{eqnarray*}
\mathbb{E}\left[(\epsilon_{T}^M)^{2}\right]
&=&\mathbb{E}\left[\int_{0}^{T}v(t,S_{t})^{2}S_{t}^{2}\left(\frac{\partial V}{\partial S}(t,S_t) - \alpha_{t}\right)^{2}\ud t\right]\nonumber
\\
&&+\mathbb{E}\left[\int_{0}^{T}\int_{\mathbb{R}}\left(V(t,S_{t}+H(t,x,S_{t}))-V(t,S_{t})-\alpha_{t}H(t,x,S_{t})\right)^{2}\nu(\ud x)\ud t\right].
\end{eqnarray*}
The minimizer $\alpha_{t}$ of the above convex quadratic minimization problem satisfies the first-order necessary conditions 
$\ud (\mathbb{E}\left[\epsilon_{T}^{2}\right],\alpha_{t})=0$, that is, 
\begin{eqnarray}
0&=&-2\mathbb{E}\left[\int_{0}^{T}\left(v(t,S_{t})^{2}S_{t}^{2}\left(\frac{\partial V}{\partial S}(t,S_t) -\alpha_{t}\right)\right.\right.\nonumber
\\
&&\left.\left.+\int_{\mathbb{R}}H(t,x,S_{t})\bigl(V(t,S_{t}+H(t,x,S_{t}))-V(t,S_{t})-\alpha_{t}H(t,x,S_{t})\bigr)\nu(\ud x)\right)\omega_{t}\ud t\right]\nonumber
\end{eqnarray}
for any variation $\omega_{t}$. Thus, the tracking error minimizing strategy $\alpha_{t}$ is given by \eqref{strategy}.
\end{proof}

\begin{remark}
It is worth noting that the optimal trading strategy minimizing the variance of the tracking error need not satisfy the structural Assumption~\ref{assumptionone}. For instance, if $\nu=0$, then the tracking error minimizer is the delta hedging strategy $\phi=\partial_S V$. For a call or put option, its gamma, i.e., $\partial^2_S V(t,S)$, becomes infinite as $t\to T$ and $S=K$. However, given a level $L>0$, we can minimize the tracking error $\mathbb{E}\left[\epsilon_{T}^{2}\right]$ under the additional constraint $\sup_{S>0} |S\frac{\partial \phi}{\partial S}(t,S)|\le L$. In other words, we can solve the following convex constrained nonlinear optimization problem
\[
\min_{\phi} \ \ \mathbb{E}\left[\epsilon_{T}^{2}\right] \ \ s.t. \ \ |S\partial_S \phi|\le L
\]
instead of the unconstrained minimization problem proposed in Proposition~\ref{prop-trackingerror-min}.
\end{remark}

\begin{remark}
Note that if $\nu=0$ and $\rho\ge 0$, the trading strategy $\alpha_t$ reduces to the Black--Scholes delta hedging strategy, i.e., $\alpha_{t}=\frac{\partial V}{\partial S}(t,S_t)$. Meanwhile, if $\nu\not\equiv0$ and $\rho=0$, then the optimal trading strategy becomes $\alpha_t=\phi^{0}(t,S_t)$, where
\[
\phi^{0}(t,S_t)=\beta^0(t,S_t) \left(\sigma^{2}S_{t}^{2}\frac{\partial V}{\partial S}(t,S_t) + \int_{\mathbb{R}}S_{t}(e^{x}-1)\left(V(t,S_{t}e^{x})-V(t,S_{t})\right)\nu(\ud x)\right),
\]
where $\beta^0(t,S_t)=1/[\sigma^{2}S_{t}^{2}+\int_{\mathbb{R}}S^2_{t}(e^{x}-1)^2\nu(\ud x)]$. 
\end{remark}
We conclude this section with the following proposition providing the first-order approximation of the tracking error minimizing trading strategy for small values of the parameter $\rho\ll 1$. In what follows, we derive the first-order approximation of $\phi^\rho(t,S_{t})$ as  $\phi^\rho(t,S_{t})=\phi^{0}(t,S_t) + \rho\phi^1(t,S_t)+O(\rho^{2})$ as $\rho\to 0$. Clearly, the first-order Taylor expansion of the volatility function $v(t,S)$ is given by
\[
v(t,S)^2 = \frac{\sigma^2}{(1-\rho S\partial_S \phi)^2} = \sigma^2 + 2 \rho  \sigma^2 S \frac{\partial\phi^0}{\partial S}(t,S) + O(\rho^{2}), \ \ \text{as}\ \rho\to 0.
\]
According to Proposition~\ref{prop-Happrox} (see \eqref{Hsmall}), we have  $H(t,x,S) = H^0(t,x,S) + \rho H^1(t,x,S) + O(\rho^{2})$, where
\begin{equation}
H^0(t,x,S) = S (e^x-1), \qquad H^1(t,x,S) = S [\phi^0(t,S e^x) - \phi^0(t,S)].
\label{Hexpand}
\end{equation}
The function $\beta^\rho$ can be expanded as follows: $\beta^\rho(t,S) = \beta^0(t,S) + \rho\beta^{(1)}(t,S) + O(\rho^{2})$,
\begin{eqnarray}
\beta^0(t,S) &=& 1/[\sigma^{2}S^{2}+ S^2 \int_{\mathbb{R}}(e^{x}-1)^2\nu(\ud x)], \quad 
\label{betaexpand}
\\
\beta^{(1)}(t,S) &=& - (\beta^0(t,S))^2 
\left[2 \sigma^{2}S^{3} \frac{\partial\phi^0}{\partial S}(t,S) + 2 S^2\int_{\mathbb{R}}(e^{x}-1) [\phi^0(t,S e^x) - \phi^0(t,S)]\nu(\ud x)\right].
\nonumber
\end{eqnarray}
Using the first-order expansions of the functions $v^2, \beta^\rho$, and $H$, we obtain the following results.

\begin{proposition}\cite[Proposition 3.6]{NBS19}
For small values of the parameter $\rho\ll 1$, the tracking error variance minimizing strategy $\alpha_{t}=\phi^\rho(t,S_{t})$ is given by 
\begin{eqnarray}
&&\phi^\rho(t,S_{t})=\phi^{0}(t,S_t) + \rho\phi^{(1)}(t,S_t)+O(\rho^{2}), \ \ \text{as}\ \rho\to 0, 
\end{eqnarray}
where 
\begin{eqnarray*}
\phi^{(1)}(t,S) &=&\beta^0(t,S) \left[ 2 \sigma^2 S^3 \frac{\partial V}{\partial S}(t,S) \frac{\partial \phi^0}{\partial S}(t,S)\right.
\\
&& + \left. \int_{\mathbb{R}}\left(V(t,S e^{x})-V(t,S) +  \frac{\partial V}{\partial S}(t,S e^x) H^0(t,x,S) \right) H^1(t,x,S) \nu(\ud x)
\right] 
\\
&& + \beta^{(1)}(t,S) \left[ \sigma^2 S^2 \frac{\partial V}{\partial S}(t,S)  
 +  \int_{\mathbb{R}}\left(V(t,S e^{x})-V(t,S)\right) H^0(t,x,S) \nu(\ud x)
\right],
\end{eqnarray*}
and the functions $H^0, H^1, \beta^0$, and $\beta^{(1)}$ are defined as in \eqref{Hexpand} and \eqref{betaexpand}.
\end{proposition}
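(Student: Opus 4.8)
The plan is to insert the first-order expansions that have already been recorded — namely $v(t,S)^2 = \sigma^2 + 2\rho\,\sigma^2 S\,\partial_S\phi^0(t,S) + O(\rho^2)$, the shift-function expansion $H = H^0 + \rho H^1 + O(\rho^2)$ of Proposition~\ref{prop-Happrox} with $H^0,H^1$ as in \eqref{Hexpand}, and $\beta^\rho = \beta^0 + \rho\beta^{(1)} + O(\rho^2)$ with $\beta^0,\beta^{(1)}$ as in \eqref{betaexpand} — directly into the exact implicit characterisation \eqref{strategy} of the variance-minimising strategy from Proposition~\ref{prop-trackingerror-min}, and then to match the coefficients of $\rho^0$ and $\rho^1$.

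Concretely, I would first expand the option-price increment appearing in the integrand of \eqref{strategy}. Since $S + H^0(t,x,S) = Se^x$, a Taylor expansion in $\rho$ gives
\[
V(t,S+H) - V(t,S) = \bigl(V(t,Se^x) - V(t,S)\bigr) + \rho\,H^1(t,x,S)\,\frac{\partial V}{\partial S}(t,Se^x) + O(\rho^2),
\]
and multiplying by $H = H^0 + \rho H^1 + O(\rho^2)$ produces the first-order expansion of the product $\bigl(V(t,S+H)-V(t,S)\bigr)H$. Adding the term $v^2 S^2\partial_S V = \sigma^2 S^2\partial_S V + 2\rho\,\sigma^2 S^3\,\partial_S V\,\partial_S\phi^0 + O(\rho^2)$, the bracketed quantity in \eqref{strategy} becomes $B^0(t,S) + \rho\,B^1(t,S) + O(\rho^2)$, where $B^0 = \sigma^2 S^2\partial_S V + \int_{\mathbb{R}}(V(t,Se^x)-V(t,S))H^0\,\nu(\ud x)$ is exactly the bracket that defines the leading-order strategy $\phi^0$ recorded in the Remark above, and $B^1 = 2\sigma^2 S^3\partial_S V\,\partial_S\phi^0 + \int_{\mathbb{R}}\bigl((V(t,Se^x)-V(t,S))H^1 + \partial_S V(t,Se^x)H^0 H^1\bigr)\nu(\ud x)$. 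Multiplying by $\beta^\rho = \beta^0 + \rho\beta^{(1)} + O(\rho^2)$ and reading off powers of $\rho$, the $\rho^0$-coefficient reproduces $\phi^0 = \beta^0 B^0$ (a consistency check), and the $\rho^1$-coefficient is $\phi^{(1)} = \beta^0 B^1 + \beta^{(1)} B^0$, which upon writing $B^0$ and $B^1$ out in full is precisely the claimed expression for $\phi^{(1)}$.

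To turn this formal computation into a proof, I would justify that $\rho \mapsto H(t,x,S)$ is smooth near $\rho = 0$: under Assumption~\ref{assumptionone} the map $H \mapsto S(e^x-1) + \rho S[\phi(t,S+H) - \phi(t,S)]$ that defines $H$ through \eqref{coeffcondone} is a uniform contraction (parameter $\rho L < 1$), so the implicit function theorem yields a smooth branch together with the a priori bound $|H(t,x,S)|^2 \le S^2(e^x-1)^2/(1-\rho L)^2$ already exploited in Proposition~\ref{prop-trackingerror-min}. This bound, the integrability properties of the admissible L\'evy measure (in particular $\int_{|x|\ge 1}e^{2x}\nu(\ud x)<\infty$), and the boundedness of $\partial_S\phi^0$ together with the Lipschitz continuity of $\Phi$ (hence of $V$ and $\partial_S V$) allow differentiation under the integral sign and guarantee that the remainders are genuinely $O(\rho^2)$, uniformly on compact sets in $(t,S)$. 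The only real obstacle is this bookkeeping — controlling the $O(\rho^2)$ integral remainders over the unbounded jump range and checking that differentiation in $\rho$ commutes with the integral — since once the expansions of $v^2$, $H$, and $\beta^\rho$ are available, collecting the $\rho^0$ and $\rho^1$ terms in \eqref{strategy} is entirely routine algebra.
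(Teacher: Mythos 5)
Your proposal is correct and follows essentially the same route as the paper: the paper derives the expansions of $v^2$, $H$ (Proposition~\ref{prop-Happrox}), and $\beta^\rho$ in the displayed formulas \eqref{Hexpand}--\eqref{betaexpand} and then obtains the proposition by substituting them into \eqref{strategy} and collecting the $\rho^0$ and $\rho^1$ coefficients, exactly as you do, and your identification of $B^0$, $B^1$ and of $\phi^{(1)}=\beta^0 B^1+\beta^{(1)}B^0$ reproduces the stated formula. Your added justification of the smoothness of $\rho\mapsto H$ and of the uniform $O(\rho^2)$ remainders via the bound $|H|^2\le S^2(e^x-1)^2/(1-\rho L)^2$ and the integrability of the L\'evy measure goes beyond what the paper records, which presents the result as an immediate consequence of the expansions.
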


\subsection{Numerical simulation for the underlying PIDE}

In this section, we illustrate the behavior of the solutions to the linear PIDE with various L\'evy measures. Specifically, we consider European put options, i.e., $\Phi(S)=(K-S)^+$. The goal of the numerical simulation is to compare the solution to the linear Black--Scholes equation with solutions to the Merton and variance gamma PIDE models. The common model parameters were chosen as follows: $\sigma=0.23, K=100, T=1$, and $r\in\{0, 0.1\}$. For the underlying L\'evy process, we consider the variance gamma process with parameters $\theta=-0.43, \kappa=0.27$ and the Merton processes with parameters $\lambda=0.1$, $m=-0.2$, and $\delta=0.15$. First, we employ the finite difference discretization scheme proposed and analyzed by Cruz and \v{S}ev\v{c}ovi\v{c} \cite{NBS19} to calculate the numerical solution to the equation. Their scheme is based on a uniform spatial finite difference discretization with a spatial step $\Delta x = 0.01$, and implicit time discretization with a step $\Delta t = 0.005$. Then, we set the total number of spatial discretization steps as $N=400$ and the number of time discretization steps as $M=200$. We restricted the spatial computational domain to $x\in[-L,L ]$, where $L=4$. For more details about the discretization scheme, see the recent paper by Cruz and \v{S}ev\v{c}ovi\v{c} \cite{NBS19}.

Figure~\ref{fig-1} shows the comparison of European put option prices between PIDE and linear Black--Scholes models. Figures~\ref{fig-1} (a) and (b) show the plots of put option prices $V(0,S)$ for $S\in[80,125]$ for the interest rates $r=0$ and $r=0.1$, respectively. Table~\ref{tab1} summarizes the numerical values of option prices for two different values of the interest rate $r=0.1$ and $r=0$. The option price for the Merton and variance gamma models are higher than that of the classical Black-Scholes model. This is based on the idea prices of put or call options should be higher on underlying assets following stochastic processes with jumps than those following a continuous geometric Brownian motion.

\begin{figure}[!ht]
\centering
    \includegraphics[width=0.48\textwidth]{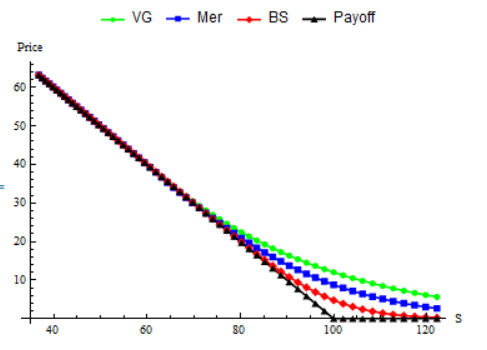}
\quad
    \includegraphics[width=0.48\textwidth]{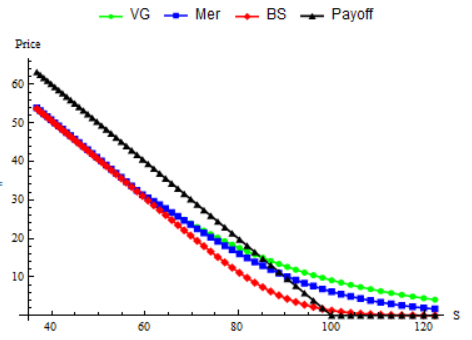}

    a) \hskip 7truecm b)

\caption{
Comparison of European put option prices for the Black--Scholes (BS) model and PIDE models under variance gamma (VG) and Merton (Mer) PIDE processes. Source: own computations based on the paper \cite{NBS19} by J. Cruz and D. \v{S}ev\v{c}ovi\v{c}.
} 
\label{fig-1}
\end{figure}

\begin{table}
\small

\centering
\caption{European put option prices $V(0,S)$ for the Black-Scholes and PIDE models under Variance Gamma and Merton processes for $r=0$ and $r=0.1$. Source: J. Cruz and D. \v{S}ev\v{c}ovi\v{c}, \cite{NBS19, CruzPhD2021}.}
\label{tab1}
\small

\medskip
\begin{tabular}{l|ll|ll|ll|l}
         & \multicolumn{2}{c|}{BS} &    \multicolumn{2}{c|}{PIDE-VG} & \multicolumn{2}{c|}{PIDE-Merton}  &  Payoff  \\ 
 $S$     &  $r=0$  & $r=0.1$ & $r=0$   & $r=0.1$ & $r=0$   & $r=0.1$ & \\
\hline\hline
 85.2144 & 15.2547 & 7.35166 & 19.2687 & 14.9855 & 17.1692 & 12.9056 & 14.7856\\
 88.692  & 12.2484 & 5.24145 & 17.2948 & 13.3899 & 14.8335 & 10.9901 & 11.308\\
 92.3116 & 9.42895 & 3.51944 & 15.428  & 11.8822 & 12.6423 & 9.21922 & 7.68837\\
 96.0789 & 6.90902 & 2.21106 & 13.674  & 10.4691 & 10.6201 & 7.61307 & 3.92106\\
 100.    & 4.78444 & 1.29196 & 12.0372 & 9.15576 & 8.78655 & 6.18483 & 0.\\
 104.081 & 3.1099  & 0.69843 & 10.52   & 7.94499 & 7.155   & 4.94044 & 0.\\
 108.329 & 1.88555 & 0.34773 & 9.12343 & 6.83762 & 5.73137 & 3.87864 & 0.\\
 112.75  & 1.0604  & 0.15881 & 7.84623 & 4.51403 & 5.83246 & 2.99166 & 0.\\
\hline
\end{tabular}

\end{table}

\newpage
\section{Hamilton-Jacobi-Bellman equation}
In this section, we present the motivation for studying the fully nonlinear parabolic equation (\ref{generalPDE}), which can be viewed as a parabolic PIDE in some sense. We also present the relationship between the nonlinear parabolic equation and the transformed equation using the so-called Ricatti transform. 

The motivation for studying the nonlinear parabolic equation of the form (\ref{generalPDE}) arises from dynamic stochastic programming for $d=1$. The fully nonlinear HJB equation describing optimal portfolio selection strategy is represented by the following nonlinear parabolic equation:
\begin{eqnarray}
&& \partial_t V + \max_{ {\bm{\theta}} \in \triangle} 
\left(
\mu(x,t,{\bm{\theta}})\, \partial_x V 
+ \frac{1}{2} \sigma(x,t,{\bm{\theta}})^2\, \partial_x^2 V \right) = 0\,,  
\label{eq_HJB}
\\
&& V(x,T)=u(x),  \label{init_eq_HJB}
\end{eqnarray}
where $x\in\mathbb{R}, t\in [0,T)$. A solution $V=V(x,t)$ to the parabolic equation (\ref{eq_HJB}) is subject to the terminal condition $V(x,T)=u(x)$. According to Kilianov\'a and {\v S}ev{\v c}ovi{\v c} \cite{KilianovaSevcovicANZIAM, KilianovaSevcovicKybernetika, KilianovaSevcovicJIAM}, such HJB equation of the form (\ref{eq_HJB}) arises from the dynamic stochastic programming, where the goal of an investor is to maximize the conditional expected value of the terminal utility of the portfolio:
\begin{equation}
\max_{{\bm{\theta}}|_{[0,T)}} \mathbb{E}
\left[u(x_T^{\bm{\theta}})\, \big| \, x_0^{\bm{\theta}}=x_0 \right],
\label{maxproblem}
\end{equation}
on a finite time horizon $[0,T]$. Here, $u: \mathbb{R} \to \mathbb{R}$ is an increasing terminal utility function, and $x_0$ is a given initial state condition of the process $\{x_t^{\bm{\theta}}\}$ at $t=0$. The underlying stochastic process $\{x_t^{{\bm{\theta}}}\}$ with a drift $\mu(x,t,{\bm{\theta}})$ and volatility $\sigma(x,t,{\bm{\theta}})$ is assumed to satisfy the following It\^{o}'s SDE:
\begin{equation}
\label{process_x}
d x_t^{\bm{\theta}} = \mu(x_t^{\bm{\theta}}, t, {\bm{\theta}}_t) dt + \sigma(x_t^{\bm{\theta}}, t, {\bm{\theta}}_t) dW_t\,,
\end{equation}
where the control process $\{{\bm{\theta}}_t\}$ is adapted to the process $\{x_t\}$, and $\{W_t\}$ is the standard one-dimensional Wiener process. The control parameter ${\bm{\theta}}$ belongs to a given compact subset $\triangle$ in $\mathbb{R}^n$. An example of such a subset is the compact convex simplex $\triangle\equiv\mathcal{S}^n = \{{\bm{\theta}} \in \mathbb{R}^n\  |\  {\bm{\theta}} \ge \mathbf{0}, \mathbf{1}^{T} {\bm{\theta}} = 1\} \subset \mathbb{R}^n$, where $\mathbf{1} = (1,\cdots,1)^{T} \in \mathbb{R}^n$. 

Consider the value function
\begin{equation}
V(x,t):= \sup_{  {\bm{\theta}}|_{[t,T)}} 
\mathbb{E}\left[u(x_T^{\bm{\theta}}) | x_t^{\bm{\theta}}=x \right].
\end{equation}
Then, according to Bertsekas \cite{Bertsekas}, the value function $V=V(x,t)$ solves the fully nonlinear HJB parabolic equation (\ref{eq_HJB}) and $V(x,T):=u(x)$. 

Several attempts have been made for solving the HJB equation (\ref{eq_HJB}). Macov{\'a} and {\v{S}}ev{\v{c}}ovi{\v{c}} \cite{MS} analyzed solutions to a fully nonlinear parabolic equation modeling the problem of optimal portfolio construction. Consequently, they formulated the problem of optimal stock to bond proportion in the management of a pension fund portfolio could be formulated in terms of the solutions to the HJB equation. Federicol et al. \cite{Federico} investigated the utility maximization problem for an investment-consumption portfolio when the current utility depends on the wealth process, regularity of solutions to the HJB equation. They defined a dual problem and treated it by means of dynamic programming to show that the viscosity solutions of the associated HJB equation belong to a class of smooth function. Recently, Ishimura and \v{S}ev\v{c}ovi\v{c} \cite{IshSev} constructed and analyzed solutions to the class of HJB equation (\ref{eq_HJB}) with range bounds on the optimal response variable. They constructed monotone traveling wave solutions and identified parametric regions for which the traveling wave solutions have positive or negative wave speed. Abe and Ishimura \cite{AI} employed the Riccati transformation method for solving the full nonlinear HJB equations, which was later studied and generalized by Kilianov\'a and \v{S}ev\v{c}ovi\v{c} \cite{KilianovaSevcovicANZIAM}. They investigated solutions of a fully nonlinear HJB equation for a constrained dynamic stochastic optimal allocation problem. However, no attempt has been made in solving the fully nonlinear HJB parabolic equation arising from portfolio optimization in high-dimensional space using the monotone operator technique. The monotone operator method is essential because it does not only give constructive proof for existence theorems, but it also leads to various comparison results, which are effective tools for studying qualitative properties of solutions. In this chapter, we consider the case when the utility function $u$ is increasing, as a consequence, $\partial_x V(x,\tau) >0$.

\subsection{Static Markowitz model for portfolio optimization}

This subsection presents the motivation for studying such nonlinear parabolic equation  (\ref{eq_HJB}). It describes the mathematical formulation of the classical Markowitz model for portfolio optimization. In the static portfolio optimization, this model aims to maximize the mean return of the set of stochastic returns $X^i, i=1,\ldots, n$, under the constraint that variance of the portfolio is bounded by a given constant $\sigma_0^2/2$. 
Given a vector $\bm{\theta}=(\theta^1, \ldots, \theta^n)^T $ of weights, we construct a portfolio $X =\sum_{i=1}^n \theta^i X^i$. Let $\bm{\mu}\in\mathbb{R}^n, \mu^i=\mathbb{E}(X^i)$ be the vector of mean returns of stochastic asset returns and $\bm{\Sigma}$ be their covariance matrix, $\bm{\Sigma}_{ij} = cov(X^i X^j)$, then 
$\mathbb{E}(X) = \bm{\mu}^T \bm{\theta}$, and the variance $\mathbb{D}(X) = \bm{\theta}^T\bm{\Sigma} \bm{\theta}$. The Markowitz optimal portfolio optimization problem can be formulated as the following convex optimization problem:
\begin{eqnarray*}
&&\max_{\bm{\theta}\in\R^n} \bm{\mu}^T \bm{\theta} \qquad \hbox{-- maximize the mean return},
\\
&& s.t.\ \ \frac12 \bm{\theta}^T\bm{\Sigma} \bm{\theta} \le  \frac12\sigma_0^2 \qquad \hbox{-- the variance is prescribed},
\\
&& \ \ \ \ \ \ \sum_{i=1}^n \theta^i = 1 \qquad \hbox{-- weights sum up to 100\%},
\\
&& \ \ \ \ \ \ \bm{\theta}\ge 0 \qquad \hbox{-- no short positions allowed}.
\end{eqnarray*}
The corresponding Lagrange function for the minimization of $-\bm{\mu}^T\bm{\theta}$ has the following form:
\begin{equation}
    {\mathcal L}(\bm{\theta}, \varphi, \lambda, \xi) = - \bm{\mu}^T \bm{\theta} 
+ { \varphi \frac12 \bm{\theta}^T\bm{\Sigma} \bm{\theta}}
+ \lambda 1^T\bm{\theta} + \xi^T \bm{\theta},
\label{lagrange}
\end{equation}
where $\varphi\in\R, \lambda\in\mathbb{R}, \xi\in\mathbb{R}^n$, and $\xi\ge0$ are Lagrange multipliers. It is worth noting that the same Lagrange function corresponds to the minimization problem:
\begin{eqnarray*}
&&\alpha(\varphi) := \min_{\bm{\theta}\in \triangle } - \bm{\mu}^T \bm{\theta} + { \varphi \frac12 \bm{\theta}^T\bm{\Sigma} \bm{\theta}}
\\
&& \ \ \text{where} \ \ \triangle \equiv  \{ \bm{\theta}\in\mathbb{R}^n, \  \sum_{i=1}^n \theta^i = 1, \ \  \bm{\theta}\ge 0 \},
\end{eqnarray*}
provided the Lagrange multiplier $\varphi>0$ is known and fixed. Figure \ref{fig:pies} shows the optimal asset selection for German DAX30 stock index for various values of $\varphi>0$. The optimal value is denoted by $\alpha(\varphi)$. The value of the Lagrange multiplier $\varphi$ can be viewed as a measure of investor's risk-aversion (see Fig.~\ref{fig:pies}). Therefore, the higher the value of the risk aversion, the more portfolio is diversified among less risky assets with smaller mean returns.  

\begin{figure}
    \centering
\includegraphics[width=0.22\textwidth]{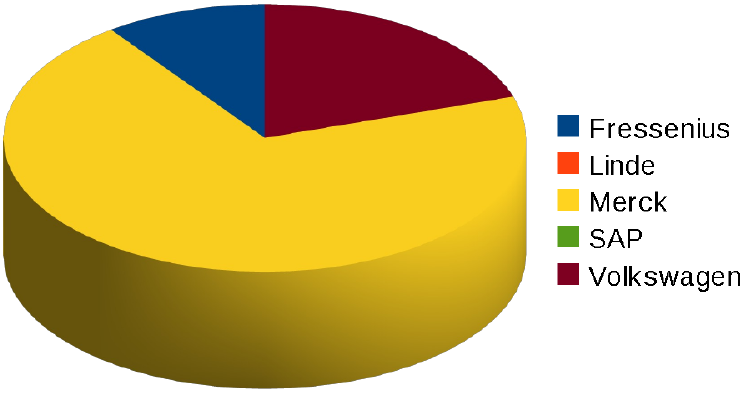}
\ \includegraphics[width=0.22\textwidth]{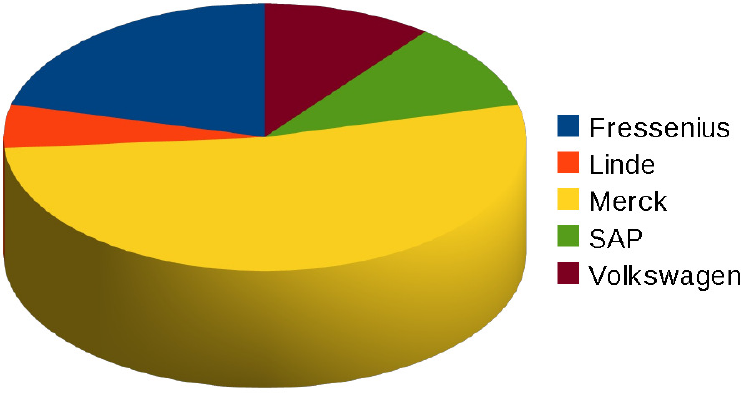}
\ \includegraphics[width=0.22\textwidth]{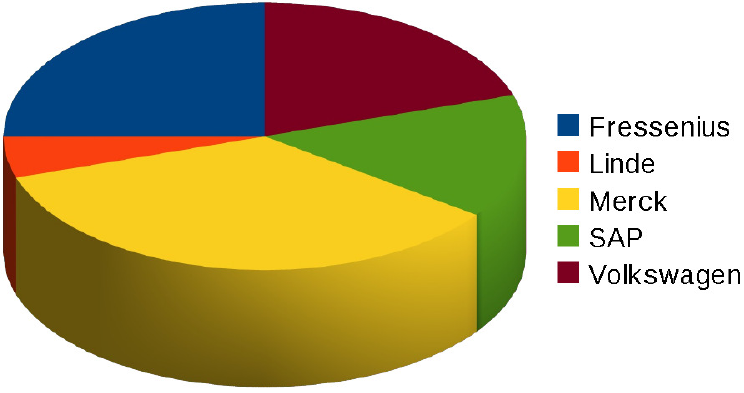}
\ \includegraphics[width=0.22\textwidth]{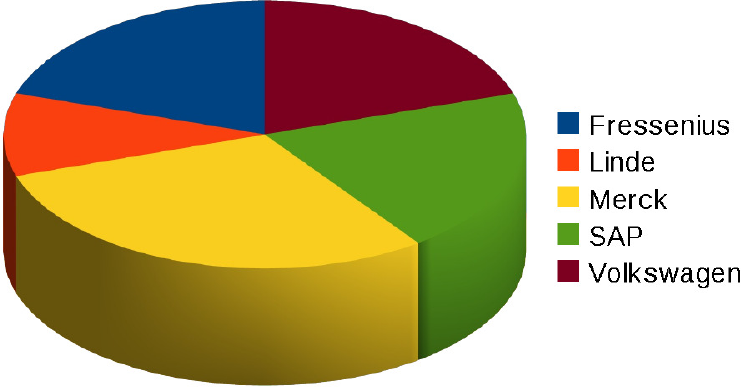}

$\varphi \approx 1$ \hglue2truecm $\varphi \approx 4$ \hglue 2truecm $\varphi \approx 6$ \hglue2truecm $\varphi \approx 8$

\caption{\small Optimal asset selection for German DAX30 stock index for various values of $\varphi>0$.}
\label{fig:pies}
\end{figure}

\subsection{Riccati transformation of the HJB equation and application to optimal portfolio selection problem}
\label{sec:HJB}

\subsubsection{Riccati transformation}

This section presents how the HJB equation (\ref{eq_HJB}) can be transformed into a quasilinear PDE using the so-called Ricatti transformation techniques. Such transformed parabolic equation correspond to the Cauchy problem (\ref{generalPDE}), which is obtained after some pertubation in the main operator.

The Riccati transformation $\varphi$ of the value function $V$ can be defined based on the approached introduced by Abe and Ishimura \cite{AI}, Ishimura and \v{S}ev\v{c}ovi\v{c} \cite{IshSev}, \v{S}ev\v{c}ovi\v{c} and Macov\'a \cite{MS}, and Kilianov\'a and \v{S}ev\v{c}ovi\v{c} \cite{KilianovaSevcovicANZIAM} as follows:

\begin{equation}
\varphi(x,\tau) = - \frac{\partial_x^2 V(x,t)}{\partial_x V(x,t)}, \quad\hbox{where}\ \ \tau=T-t.
\label{eq_varphi}
\end{equation}

Suppose the value function $V(x,t)$ is increasing in the $x$-variable. In other words, assume that the terminal utility function $u(x)$ is increasing. Then, the HJB equation (\ref{eq_HJB}) can be rewritten as follows:
\begin{equation}
\partial_t V - \alpha(\cdot,\varphi) \partial_x V = 0, \qquad V(\cdot ,T)=u(\cdot),
\label{eq_HJBtransf}
\end{equation}
where $\alpha(x,\tau,\varphi)$ is the value function of the following parametric optimization problem:
\begin{equation}
\alpha(x,\tau,\varphi) = \min_{ {\bm{\theta}} \in \triangle} 
\left(
-\mu(x,t,{\bm{\theta}}) +  \frac{\varphi}{2}\sigma(x,t,{\bm{\theta}})^2\right), \quad \tau=T-t\,.
\label{eq_alpha_def}
\end{equation}

\begin{remark}
It is worth noting that the optimization problem (\ref{eq_alpha_def}) is related to the classical Markowitz model on optimal portfolio selection problem formulated as maximization of the mean return $\mu({\bm{\theta}})\equiv {\bm{\mu}}^T {\bm{\theta}}$ under the volatility constraint $\frac12\sigma({\bm{\theta}})^2\equiv\frac12 {\bm{\theta}}^T{\bm{\Sigma}} {\bm{\theta}} \le  \frac12\sigma^2_0$, i.e.,
\begin{eqnarray*}
&&\max_{{\bm{\theta}}\in\triangle} {\bm{\mu}}^T {\bm{\theta}},
\quad s.t.\ \ \frac12 {\bm{\theta}}^T{\bm{\Sigma}} {\bm{\theta}} \le  \frac12\sigma^2_0,
\end{eqnarray*}
where the decision set is the simplex
$\triangle= \{{\bm{\theta}} \in \mathbb{R}^n\  |\  {\bm{\theta}} \ge \mathbf{0}, \mathbf{1}^{T} {\bm{\theta}} = 1\}$. The Lagrange multiplier for the volatility constraint can be viewed as the parameter $\varphi$ entering the parametric optimization problem (\ref{eq_alpha_def}).
\end{remark}

Next, let $\partial_x\alpha$ be the total differential of the function $\alpha(x,\tau,\varphi)$, where $\varphi=\varphi(x,\tau)$, i.e.,
\[
\partial_x\alpha (x,\tau,\varphi) = \alpha^\prime_x(x,\tau,\varphi) + \alpha^\prime_\varphi(x,\tau,\varphi)\, \partial_x \varphi.
\]
Here, $\alpha^\prime_x$ and $\alpha^\prime_\varphi$ are partial derivatives of $\alpha$ with respect to  $x$- and $\varphi$- variables, respectively.

Kilianov\'a and \v{S}ev\v{c}ovi\v{c} \cite[Theorem 4.2]{KilianovaSevcovicKybernetika} recently established the relationship between the transformed function $\varphi$ and the value function $V$. The reported that an increasing value function $V(x,t)$ in the $x$-variable is a solution to the HJB equation (\ref{eq_HJB}) if and only if the transformed function $\varphi(x,\tau) = -\partial_x^2 V(x,t)/\partial_x V(x,t),\,\ t=T-\tau$, is a solution to the Cauchy problem for the quasilinear parabolic PDE:
\begin{eqnarray}
&&\partial_\tau \varphi - \partial^2_x \alpha(\cdot,\varphi) = - \partial_x \left( \alpha(\cdot,\varphi)\varphi\right),
\label{eq_PDEphi-cons}
\\
&&\varphi(x,0) = \varphi_0(x) \equiv -u''(x)/u'(x),\quad (x,\tau)\in\mathbb{R}\times(0,T). 
\label{init_PDEphi-cons}
\end{eqnarray}

Note that the Cauchy problem for the quasilinear parabolic PDE (\ref{eq_PDEphi-cons}) is equivalent to the nonlinear parabolic equation (\ref{generalPDE}) in one-dimensional space. This is obtainable after some shift/perturbation in the main operator of the transformed equation (\ref{eq_PDEphi-cons}). This nonlinear parabolic equation (\ref{generalPDE}) presented in an abstract setting can be viewed as nonlinear PIDE in some sense.

\subsubsection{Properties of the value function as a diffusion function}

In this subsection, we presents the qualitative properties of the value function and sufficient conditions on the decision set $\triangle$ and functions $\mu$ and $\sigma$ that guarantee higher smoothness of the value function $\alpha$. Let $C^{k,1}(\mathcal{D})$ be the space consisting of all $k$-differentiable functions defined on the domain $\mathcal{D}\subset \mathbb{R}^{d+2}$, whose $k$-th derivative is globally Lipschitz continuous. The next proposition shows (under certain assumptions) that the value function $\alpha$ belongs to $C^{0,1}(\mathcal{D})$, where $\mathcal{D}=\mathbb{R}^d\times (0,T)\times (\varphi_{min},\infty)$. The following proposition and its proof are contained in our recent paper (\cite[Theorem 3]{udeani2021application}).

\begin{proposition}\cite[Proposition 1]{udeani2021application}
\label{smootheness0}
Let $\triangle\subset\mathbb{R}^n$ be a given compact decision set. Assume that the functions $\mu(x,t,{\bm{\theta}})$ and $\sigma(x,t,{\bm{\theta}})^2$ are globally Lipschitz continuous in $x\in\mathbb{R}^d, t\in[0,T]$ and ${\bm{\theta}}\in\triangle$ variables and there exist positive constants $\omega, L>0$ such that $\omega\le \frac12\sigma(x,t,{\bm{\theta}})^2\le L$ for any $x\in\mathbb{R}^d, t\in[0,T]$, and ${\bm{\theta}}\in\triangle$. 

Then, $\alpha\in C^{0,1}(\mathcal{D})$. Moreover, the function $\alpha$ is strictly increasing, and
\begin{equation}
0<\omega\le \frac{\alpha(x,\tau, \varphi_2)-\alpha(x,\tau, \varphi_1)}{\varphi_2-\varphi_1} \le L, 
\quad \text{for any}\ (x,\tau,\varphi_i)\in\mathcal{D}, 
\label{lipschitz}
\end{equation}
i.e., $\omega\le\alpha^\prime_\varphi(x,\tau,\varphi)\le L$, and
\begin{equation}
|\nabla_x\alpha(x,\tau,\varphi)| \le  p(x,\tau) + L_{0} |\varphi|, 
\label{lipschitz-x}
\end{equation}
for a.e., $(x,\tau,\varphi)\in\mathcal{D}$, where $p(x,\tau) := \max_{ {\bm{\theta}} \in \triangle} |\nabla_x\mu(x,t,{\bm{\theta}})|$
and $ L_0 := \max_{{\bm{\theta}}\in\triangle, t\in[0,T], x\in\mathbb{R}^d} |\nabla_x \sigma^2(x, t, \theta)|$, where $t=T-\tau$.
\end{proposition}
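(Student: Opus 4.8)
The plan is to regard
\[
\alpha(x,\tau,\varphi)=\min_{\bm{\theta}\in\triangle} F(x,t,\varphi,\bm{\theta}),
\qquad
F(x,t,\varphi,\bm{\theta}):=-\mu(x,t,\bm{\theta})+\frac{\varphi}{2}\sigma(x,t,\bm{\theta})^2,\ \ t=T-\tau,
\]
as the value function of a parametric minimization over the \emph{compact} set $\triangle$. Since $F$ is continuous in $\bm{\theta}$, the minimum is attained at every $(x,\tau,\varphi)\in\mathcal{D}$; fix a selection $\bm{\theta}^*=\bm{\theta}^*(x,\tau,\varphi)$ of minimizers. Each asserted property of $\alpha$ will then be obtained from a ``comparison of minimizers'' estimate together with the envelope (Danskin) theorem.

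First I would prove monotonicity and $\varphi$-regularity. Let $\varphi_1<\varphi_2$ and let $\bm{\theta}_i$ minimize at $\varphi_i$. Using that $\bm{\theta}_2$ is admissible (but possibly not optimal) at $\varphi_1$, and $\bm{\theta}_1$ at $\varphi_2$, one gets
\begin{equation*}
\frac{\varphi_2-\varphi_1}{2}\sigma(x,t,\bm{\theta}_2)^2
\le \alpha(x,\tau,\varphi_2)-\alpha(x,\tau,\varphi_1)
\le \frac{\varphi_2-\varphi_1}{2}\sigma(x,t,\bm{\theta}_1)^2,
\end{equation*}
and the bound $\omega\le\frac12\sigma^2\le L$ yields $\omega(\varphi_2-\varphi_1)\le\alpha(x,\tau,\varphi_2)-\alpha(x,\tau,\varphi_1)\le L(\varphi_2-\varphi_1)$, which is exactly (\ref{lipschitz}). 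This already gives strict monotonicity of $\alpha$ in $\varphi$, Lipschitz continuity in $\varphi$ with constant $L$, and, since a monotone Lipschitz function is differentiable a.e., the pointwise bound $\omega\le\alpha^\prime_\varphi\le L$ almost everywhere.

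Next I would establish joint Lipschitz regularity. Global Lipschitz continuity of $\mu$ and $\sigma^2$ in $(x,t)$, uniform in $\bm{\theta}\in\triangle$, makes $\bm{\theta}\mapsto F(\cdot,\bm{\theta})$ Lipschitz in $(x,t)$ with a constant of the form $C(1+|\varphi|)$ uniformly in $\bm{\theta}$; the elementary fact that a pointwise minimum of a uniformly Lipschitz family is Lipschitz with the same constant then gives local Lipschitz continuity of $\alpha$ in $(x,t)$, hence in $\tau$. Combined with the previous step, $\alpha$ is locally Lipschitz on $\mathcal{D}$, i.e. $\alpha\in C^{0,1}(\mathcal{D})$, and by Rademacher's theorem $\alpha$ is differentiable at a.e. point of $\mathcal{D}$. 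Finally, the gradient estimate (\ref{lipschitz-x}): at a point where $\alpha(\cdot,\tau,\varphi)$ is differentiable in $x$, the envelope theorem identifies $\nabla_x\alpha(x,\tau,\varphi)$ with $\nabla_x F(x,t,\varphi,\bm{\theta}^*)=-\nabla_x\mu(x,t,\bm{\theta}^*)+\frac{\varphi}{2}\nabla_x\sigma^2(x,t,\bm{\theta}^*)$ for any minimizer $\bm{\theta}^*$, so $|\nabla_x\alpha|\le\max_{\bm{\theta}}|\nabla_x\mu|+\frac{|\varphi|}{2}\max_{\bm{\theta}}|\nabla_x\sigma^2|=p(x,\tau)+\frac{|\varphi|}{2}L_0\le p(x,\tau)+L_0|\varphi|$.

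The step I expect to be the main obstacle is making this last envelope argument fully rigorous under the \emph{Lipschitz-only} hypothesis on $\mu$ and $\sigma^2$ in $x$. If $\mu,\sigma^2$ are moreover $C^1$ in $x$ (as in the setting of (\ref{eq_alpha_general})) the formula is the classical Danskin theorem; in the merely Lipschitz case one must argue through Clarke's generalized gradient, using the inclusion $\partial_x\alpha(x,\cdot)\subseteq\overline{\mathrm{co}}\,\bigcup_{\bm{\theta}^*}\partial_x F(x,\cdot,\bm{\theta}^*)$ (the exceptional null sets depending on $\bm{\theta}$ require care because the minimizer set need not be countable), or by a mollification/limiting argument in the $x$-variable, and then check that the resulting bound still matches $p(x,\tau)+L_0|\varphi|$ a.e. Once the comparison estimate and this envelope step are in place, all remaining assertions are routine.
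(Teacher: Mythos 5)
Your proposal is correct, and for the decomposition $\alpha=\min_{\bm{\theta}}\alpha^{\bm{\theta}}$ and the two-sided bound (\ref{lipschitz}) it matches the paper: the paper likewise observes that each $\alpha^{\bm{\theta}}$ satisfies the increment inequality and that the pointwise minimum inherits it, which is exactly your comparison-of-minimizers computation made explicit. The genuine divergence is in the gradient estimate (\ref{lipschitz-x}). You route it through the Danskin/envelope theorem, identifying $\nabla_x\alpha$ with $\nabla_x F(\cdot,\bm{\theta}^*)$ at a minimizer, and you correctly flag that under Lipschitz-only hypotheses this needs Clarke subdifferentials or mollification, plus care with measurable selections. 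The paper avoids the envelope theorem entirely: for fixed $\bm{\theta}$ it writes $\alpha^{\bm{\theta}}(x_1,\tau,\varphi)-\alpha^{\bm{\theta}}(x_1+he^i,\tau,\varphi)$ as an integral of $-\partial_{x_i}\mu+\tfrac{\varphi}{2}\partial_{x_i}\sigma^2$ along the segment, bounds it by $\bigl(\max_{\bm{\theta},\xi}|\partial_{x_i}\mu|+L_0|\varphi|\bigr)h$ uniformly in $\bm{\theta}$, takes the minimum over $\bm{\theta}$ on both sides of the resulting one-sided inequality, symmetrizes in $x_1,x_2$, and lets $h\to0$. This transfers a modulus-of-continuity bound directly to the min without ever selecting a minimizer, so the a.e.\ differentiability from Rademacher is all that is needed and your ``main obstacle'' never arises. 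What your approach buys is the sharper information that the gradient, where it exists, actually equals $\nabla_x F$ at some minimizer (useful elsewhere, e.g.\ for the Milgrom--Segal argument behind Theorem~\ref{smootheness1}); what the paper's approach buys is that the estimate follows with no selection or subdifferential machinery. Both arguments share the same residual technicality of passing from the supremum of $|\partial_{x_i}\mu|$ over a shrinking segment to its value at the point, which is why the conclusion is stated only for a.e.\ $(x,\tau,\varphi)\in\mathcal{D}$.
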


\begin{proof}
First, let $\alpha^{\bm{\theta}}(x,\tau,\varphi) := 
-\mu(x,t,{\bm{\theta}}) +  \frac{\varphi}{2}\sigma(x,t,{\bm{\theta}})^2$, where $t=T-\tau$. Then,
\[
\alpha(x,\tau,\varphi) = \min_{ {\bm{\theta}} \in \triangle} \alpha^{\bm{\theta}}(x,\tau,\varphi) \,.
\]
For any given ${\bm{\theta}}\in \triangle$, the function $\alpha^{\bm{\theta}}(x,\tau,\varphi)$ is globally Lipschitz continuous in all variables. Therefore, the minimal function $\alpha$ is globally Lipschitz continuous. Moreover, the function $\alpha^{\bm{\theta}}(x,\tau,\varphi)$ satisfies the inequality (\ref{lipschitz}) for any ${\bm{\theta}}\in \triangle$, and so does the minimal function $\alpha$.

Next, we prove inequality (\ref{lipschitz-x}). Let $ x_1 ,x_2\in\mathbb{R}^d$ such that $x_2 = x_1 + h e^i$, where $e^i, i=1,\cdots, d,$ is the standard normal vector, i.e., $ e^i = (0,0,...,0,1,0,...,0)^T$. Then, we have 
\begin{eqnarray*}
&& \alpha^{\bm{\theta}}(x_1,\tau,\varphi)  - \alpha^{\bm{\theta}}(x_2,\tau,\varphi) = -(\mu(x_1,\tau,{\bm{\theta}})- \mu(x_2,\tau,{\bm{\theta}})) 
+ \frac{\varphi}{2} (\sigma(x_1,\tau,{\bm{\theta}})^2- \sigma(x_2,\tau,{\bm{\theta}})^2) 
\\
&& = \int_{0}^{h} (-\partial_{x_i}\mu(x_1+\xi e^i,\tau,{\bm{\theta}})) d\xi + \int_{0}^{h}  \frac{\varphi}{2} \partial_{x_i}\sigma^2(x_1+\xi e^i,\tau,{\bm{\theta}}) d\xi
\\
&&
\leq \int_{0}^{h} |\partial_{x_i}\mu(x_1+\xi e^i,\tau,{\bm{\theta}}))| d\xi + \int_{0}^{h}  \frac{|\varphi|}{2} |\partial_{x_i}\sigma^2(x_1+\xi e^i,\tau,{\bm{\theta}})| d\xi
\\
&& 
\le \max_{\bm{\theta}\in\triangle, 0\le\xi\le h} |\partial_{x_i}\mu(x_1+\xi e^i,\tau,{\bm{\theta}})| \, h  + \max_{{\bm{\theta}}\in\triangle, x\in\mathbb{R}^d} |\partial_{x_i} \sigma^2(x, \tau, \theta)|\frac{|\varphi|}{2}h.
\end{eqnarray*}
Hence,
\[
\alpha^{\bm{\theta}}(x_1,\tau,\varphi) \le  \alpha^{\bm{\theta}}(x_2,\tau,\varphi) +  \max_{{\bm{\theta}}\in\triangle, 0\le\xi\le h} |\partial_{x_i}\mu(x_1+\xi e^i,\tau,{\bm{\theta}})| \, h + \max_{{\bm{\theta}}\in\triangle, x\in\mathbb{R}^d} |\partial_{x_i} \sigma^2(x, \tau, \theta)||\varphi|h.
\]
We note that $x_2 - x_1 = h e^i$ so that $ |x_2 - x_1| = h$. Taking minimum over ${\bm{\theta}}\in\triangle$, we obtain 
\[
\alpha(x_1,\tau,\varphi) \le  \alpha(x_2,\tau,\varphi) +  \max_{\bm{\theta}\in\triangle,  0\le\xi\le h} |\partial_{x_i}\mu(x_1+ \xi e^i,\tau,{\bm{\theta}})| \, h + \max_{{\bm{\theta}}\in\triangle, x\in\mathbb{R}^d} |\partial_{x_i} \sigma^2(x, \tau, \theta)||\varphi| h .
\]
By exchanging the role of $x_1$ and $x_2$ and taking the limit as $x_2\to x_1$, i.e., $h\to 0$, we obtain inequality (\ref{lipschitz-x}), as stated.

\end{proof}

According to Proposition~\ref{smootheness0}, the value function $\alpha$ given in (\ref{eq_alpha_def}) satisfies the assumptions of Theorem~\ref{th:alpha-existence} provided that the functions 
\[
p(x,\tau)= \max_{{\bm{\theta}}\in\triangle} |\nabla_x\mu(x,\tau,{\bm{\theta}})|, \quad\text{and}\ 
h(x,\tau)= \alpha(x,\tau,0) = -\max_{{\bm{\theta}}\in\triangle} \mu(x,\tau,{\bm{\theta}})
\]
belong to the Banach space $L^\infty((0,T); H)$.\\

The next result was proved in \cite{KilianovaSevcovicJIAM}. It gives sufficient conditions imposed on the decision set $\triangle$ and functions $\mu$ and $\sigma$ that guarantee higher smoothness of the value function $\alpha$. Its proof is based on the classical envelope theorem due to Milgrom and Segal \cite{milgrom_segal2002} and the result on the Lipschitz continuity of the minimizer $\hat{\bm{\theta}}=\hat{\bm{\theta}}(x,\tau,\varphi)$ belonging to a convex compact set $\triangle$ according to Klatte \cite{Klatte}.

\begin{theorem}\cite[Theorem 1]{KilianovaSevcovicJIAM}
\label{smootheness1}
Suppose that $\triangle\subset\mathbb{R}^n$ is a convex compact set and the functions $\mu(x,t,{\bm{\theta}})$ and $\sigma(x,t,{\bm{\theta}})^2$ are $C^{1,1}$ smooth such that the objective function 
$f(x,t,\varphi, {\bm{\theta}}):= - \mu(x,t,{\bm{\theta}}) + \frac{\varphi}{2} \sigma(x,t,{\bm{\theta}})^2$ is strictly convex in the variable ${\bm{\theta}}\in\triangle$ for any  $\varphi\in(\varphi_{min}, \infty)$. Then, the function $\alpha$ belongs to the space $C^{1,1}(\mathcal{D})$.
\end{theorem}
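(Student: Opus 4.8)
The plan is to obtain the regularity of $\alpha$ from the two classical tools already announced above: the envelope theorem of Milgrom and Segal \cite{milgrom_segal2002} and the Lipschitz stability of minimizers of strongly convex parametric programs due to Klatte \cite{Klatte}. Throughout, write $p=(x,\tau,\varphi)\in\mathcal{D}$, $t=T-\tau$, and
\[
f(p,{\bm\theta}) = -\mu(x,t,{\bm\theta}) + \tfrac{\varphi}{2}\sigma(x,t,{\bm\theta})^2,
\qquad
\alpha(p) = \min_{{\bm\theta}\in\triangle} f(p,{\bm\theta}).
\]
Since $\triangle$ is convex and compact and $f(p,\cdot)$ is continuous and strictly convex, for each $p$ there is a unique minimizer $\hat{\bm\theta}(p)\in\triangle$; a standard Berge-type argument (upper semicontinuity of the minimizing set together with uniqueness) then gives that $p\mapsto\hat{\bm\theta}(p)$ is continuous on $\mathcal{D}$.

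First I would apply the envelope theorem. Because $\mu$ and $\sigma^2$ are $C^{1,1}$, the map $f$ is jointly $C^{1,1}$ on $\mathcal{D}\times\triangle$, and the minimizer is unique; hence $\alpha\in C^1(\mathcal{D})$ with
\[
\nabla_x\alpha(p) = \nabla_x f(p,\hat{\bm\theta}(p)),
\quad
\partial_\tau\alpha(p) = \partial_\tau f(p,\hat{\bm\theta}(p)),
\quad
\partial_\varphi\alpha(p) = \tfrac12\sigma(x,t,\hat{\bm\theta}(p))^2.
\]
This also re-derives, with sharper conclusions, the one-sided Lipschitz bounds in the $\varphi$- and $x$-variables already established in Proposition~\ref{smootheness0}.

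The decisive step, and the one I expect to be the main obstacle, is to show that $\hat{\bm\theta}:\mathcal{D}\to\triangle$ is Lipschitz. What is needed is a uniform quadratic-growth (strong convexity) estimate for $f(p,\cdot)$ on $\triangle$: since $f(p,\cdot)\in C^{1,1}$ is strictly convex on the \emph{fixed} convex compact set $\triangle$ and its second ${\bm\theta}$-derivatives depend continuously on $p$, the smallest eigenvalue of the restricted Hessian stays bounded below by a constant $\kappa>0$, uniformly for $p$ in bounded regions. With this at hand, Klatte's stability theorem for parametric convex programs with a constant feasible set yields $|\hat{\bm\theta}(p_1)-\hat{\bm\theta}(p_2)|\le (C/\kappa)|p_1-p_2|$. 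The delicate point here is verifying that mere strict convexity of the objective (rather than a priori uniform positive-definiteness of the Hessian) already furnishes such a uniform modulus $\kappa$; one resolves this by observing that a continuous, strictly convex $C^{1,1}$ function on the compact set $\triangle$ automatically obeys a quadratic growth bound at its unique minimizer, with a modulus that varies continuously in the parameter $p$, and then performing a compactness reduction.

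Finally I would close by composition. Each first partial derivative of $\alpha$ displayed above is the composition of the Lipschitz map $p\mapsto(p,\hat{\bm\theta}(p))$ with one of the $C^{0,1}$ maps $\nabla_x f,\ \partial_\tau f,\ \tfrac12\sigma^2$, all of which are Lipschitz because $\mu,\sigma^2\in C^{1,1}$. A composition of Lipschitz maps is Lipschitz, so $\nabla\alpha$ is Lipschitz on $\mathcal{D}$ (the growth in the $\varphi$-variable being controlled exactly as in Proposition~\ref{smootheness0}), i.e.\ $\alpha\in C^{1,1}(\mathcal{D})$, as claimed. The remaining work is routine: the Berge continuity of $\hat{\bm\theta}$, the uniform eigenvalue bound, and the bookkeeping of Lipschitz constants through the chain rule.
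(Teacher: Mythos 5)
Your overall architecture --- the Milgrom--Segal envelope theorem for the first derivatives of $\alpha$, followed by Lipschitz stability of the minimizer $\hat{\bm{\theta}}(x,\tau,\varphi)$ via Klatte, and a final composition-of-Lipschitz-maps argument --- is exactly the strategy the paper attributes to the cited proof of this theorem (the paper itself gives no proof, only this description). The envelope-theorem formulas you write, in particular $\partial_\varphi\alpha=\tfrac12\sigma(x,t,\hat{\bm{\theta}})^2$, agree with what the paper uses later.

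However, there is a genuine gap at precisely the point you flag as delicate, and your proposed resolution of it is incorrect. You claim that a strictly convex $C^{1,1}$ function on the compact convex set $\triangle$ ``automatically obeys a quadratic growth bound at its unique minimizer.'' This is false: $f(\bm{\theta})=|\bm{\theta}|^4$ on $[-1,1]$ is smooth and strictly convex with unique minimizer $0$, yet $f(\bm{\theta})-f(0)=|\bm{\theta}|^4$ admits no bound from below by $\kappa|\bm{\theta}|^2$ near the origin. Worse, the Lipschitz stability you need genuinely fails under mere strict convexity: for $f(p,\bm{\theta})=\bm{\theta}^4-p\,\bm{\theta}$ the minimizer behaves like $\hat{\bm{\theta}}(p)\sim (p/4)^{1/3}$, which is only H\"older continuous, and the resulting value function is $C^{1,1/3}$ but not $C^{1,1}$. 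So one cannot manufacture the uniform strong-convexity modulus $\kappa>0$ out of strict convexity by a compactness reduction; it has to come from the hypotheses under which Klatte's theorem actually applies (a parametric \emph{quadratic} program, or a uniform second-order growth condition). In the intended application the objective is exactly quadratic in $\bm{\theta}$ with Hessian $\varphi\,\bm{\Sigma}$, $\bm{\Sigma}\succ0$ and $\varphi$ bounded away from zero, which is where the modulus really comes from. To repair your argument you must either invoke that quadratic structure explicitly or strengthen ``strictly convex'' to a uniform positive-definiteness assumption on the $\bm{\theta}$-Hessian of $f$; without one of these, the decisive step does not go through.
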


\subsubsection{Point-wise a-priori estimates of solution with their existence and uniqueness}

In this section, we present a-priori estimates on a solution $\varphi$ to the Cauchy problem (\ref{eq_PDEphi-cons}). We will assume that the function $\mu$ is independent of time $t\in[0,T]$, and $\sigma$ is independent of $x\in\mathbb{R}$ and $t\in[0,T]$, i.e., 
\[
\mu=\mu(x,\bm{\theta}), \qquad \sigma = \sigma(\bm{\theta}).
\]
Then, the value function $\alpha=\alpha(x,\varphi)$ is also independent of the $\tau=T-t$ variable. Next, we prove a-priori estimates for the transformed function $\psi=\psi(x,\tau)$ defined as $\psi(x,\tau)=\alpha(x,\varphi(x,\tau))$. Since the function $\alpha$ is strictly increasing in the $\varphi$ variable, there exists an inverse function $\beta(x,\psi)$ such that $\alpha(x,\beta(x,\psi)) = \psi$. Therefore, the function $\varphi(x,\tau)$ is a solution to (\ref{eq_PDEphi-cons}) if and only if the function $\psi(x,\tau)$ is a solution to the following linear parabolic PDE: 
\[
-\partial_\tau\psi + a(x,\tau) \partial_x^2 \psi + b(x,\tau) \partial_x \psi + c(x,\tau)\psi  = 0,
\]
where
\begin{eqnarray*}
a(x,\tau) &=& \alpha^\prime_\varphi(x,\varphi(x,\tau)), 
\quad
b(x,\tau) = - \alpha^\prime_\varphi(x,\varphi(x,\tau)) \varphi(x,\tau) - \alpha(x,\varphi(x,\tau)),
\\
c(x,\tau) &=& \alpha^\prime_x (x,\varphi(x,\tau)).
\end{eqnarray*}
Note that $0<\omega \le a(x,\tau)\le L$. Next, suppose that the function $c(x,\tau)$ is bounded from above by a constant $\lambda\ge 0$. Then, the function $\psi_\lambda(x,\tau) = \psi(x,\tau) e^{-\lambda\tau}$
is a solution to the linear PDE:
\[
{\mathcal L}[\psi_\lambda] = 0, \qquad \text{where}\ \  {\mathcal L}[\psi_\lambda] \equiv \partial_\tau\psi_\lambda - a(x,\tau) \partial_x^2 \psi_\lambda - b(x,\tau) \partial_x \psi_\lambda - c_\lambda(x,\tau)\psi_\lambda,
\]
where $c_\lambda(x,\tau)= c(x,\tau) -\lambda$ is nonpositive, i.e., $c_\lambda (x,\tau) \le0$ for all $x,\tau$.
Let $\underline{\psi}\le 0$ be a constant. Then, ${\mathcal L}[\psi_\lambda - \underline{\psi}] =  c_\lambda(x,\tau)\underline{\psi}\ge 0$. We employ the maximum principle for parabolic equations on unbounded domains according to Meyer and Needham \cite[Theorem 3.4]{Meyer} to obtain $\psi_\lambda(x,\tau) - \underline{\psi}\ge 0$ for all $x,\tau$, provided that $\psi_\lambda(x,0) - \underline{\psi}=\psi(x,0) - \underline{\psi}\ge 0$ for all $x$. In other words, $\underline{\psi}$ is a subsolution. Similarly, if $\overline{\psi}\ge 0$ is a given constant, then ${\mathcal L}[\psi_\lambda - \overline{\psi}] =  c_\lambda(x,\tau)\overline{\psi}\le 0$ and $\psi_\lambda(x,\tau) - \overline{\psi}\le 0$ for all $x,\tau$, provided that $\psi(x,0) - \overline{\psi}\le 0$ for all $x$, i.e., $\overline{\psi} $ is a supersolution. Consequently, we have the following implication:
\[
\underline{\psi}\le \psi(x,0) \le \overline{\psi} \quad \Longrightarrow \quad 
\underline{\psi} e ^{\lambda \tau}\le \psi(x,\tau) \le \overline{\psi} e ^{\lambda \tau} \quad\text{for all}\ x\in\mathbb{R}, \tau\in[0,T].
\]
In terms of the solution $\varphi$ to the Cauchy problem (\ref{eq_PDEphi-cons})--(\ref{init_PDEphi-cons}), we obtain the following a-priori estimate:
\begin{equation}
\underline{\psi} e ^{\lambda \tau}\le \alpha(x,\varphi(x,\tau)) \le \overline{\psi} e ^{\lambda \tau} \quad\text{for all}\ x\in\mathbb{R}, \tau\in[0,T], 
\label{pointwise}
\end{equation}
where 
\begin{equation}
\underline{\psi} = \min\{ 0, \inf_{x\in\mathbb{R}} \alpha(x,\varphi(x,0)) \}, \qquad 
\overline{\psi} = \max\{0, \sup_{x\in\mathbb{R}} \alpha(x,\varphi(x,0)) \}. 
\label{psiplusminus}
\end{equation}

Now, we can apply the general Theorem~\ref{th:alpha-existence} on existence and uniqueness of a solution. 

\begin{theorem} \cite[Theorem 5]{udeani2021application}
Let the decision set $\triangle \subset \mathbb{R}^n$ be compact and $u:\mathbb{R}\to\mathbb{R}$ be an increasing utility function such that $\varphi_0(x) = -u''(x)/u'(x)$ belongs to the space $L^2(\mathbb{R})\cap L^\infty(\mathbb{R})$. Suppose that the drift $\mu(x, \bm{\theta})$ and volatility function $\sigma^2(\bm{\theta})>0$ are $C^1$ continuous in the $x$ and $\bm{\theta}$ variables and the value function $\alpha(x,\varphi)$ given in (\ref{eq_alpha_def}) satisfies $p\in L^2(\mathbb{R})\cap L^\infty(\mathbb{R}), h\in L^\infty(\mathbb{R}), \; \text{and} \;\partial^2_x h \in L^2(\mathbb{R})$, where  
\begin{eqnarray*}
&&p(x)= \max_{{\bm{\theta}}\in\triangle} |\partial_x\mu(x,{\bm{\theta}})|, 
\quad 
h(x)=  -\max_{{\bm{\theta}}\in\triangle} \mu(x,{\bm{\theta}}).
\end{eqnarray*}
Then, for any $T>0$, there exists a unique solution $\varphi$ to the Cauchy problem 

\begin{equation}
\label{finalequation}
\partial_\tau \varphi - \partial^2_x \alpha(\cdot,\varphi) = - \partial_x \left( \alpha(\cdot,\varphi)\varphi\right),
\quad \varphi(x,0) = \varphi_0(x),\quad (x,\tau)\in\mathbb{R}\times(0,T), 
\end{equation}
satisfying $\varphi\in C([0,T]; H)\cap L^2((0,T); V)\cap L^\infty((0,T)\times \mathbb{R})$.

\end{theorem}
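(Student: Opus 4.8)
\medskip
\noindent\textit{Proof proposal.} The plan is to recognise the Cauchy problem (\ref{finalequation}) as the Riccati-transformed HJB equation (\ref{eq_PDEphi-cons}) with the (now autonomous) value function $\alpha=\alpha(x,\varphi)$ of (\ref{eq_alpha_def}), and to solve it by writing it in the abstract form $\partial_\tau\varphi+A\tilde\alpha(\cdot,\varphi)=g_0(\varphi)+\partial_x g_1(\varphi)$, $A=I-\partial_x^2$, covered by Theorem~\ref{th:alpha-existence} and Theorem~\ref{cor:abs-continuous}. The only genuine obstacle is that the drift-type nonlinearity $\partial_x\big(\alpha(\cdot,\varphi)\varphi\big)$ is quadratic in $\varphi$, hence merely locally (not globally) Lipschitz as a map $H\to H$; I would remove this obstruction by a truncation argument closed with the pointwise a-priori bound (\ref{pointwise})--(\ref{psiplusminus}).

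First I would record the structural properties of the diffusion function. Since $\mu=\mu(x,\bm{\theta})$ is $C^1$ with $p(x)=\max_{\bm{\theta}}|\partial_x\mu(x,\bm{\theta})|\in L^2(\mathbb{R})\cap L^\infty(\mathbb{R})$ and $\sigma^2=\sigma^2(\bm{\theta})>0$ is independent of $x$ (so the constant $L_0$ of Proposition~\ref{smootheness0} vanishes), Proposition~\ref{smootheness0} gives $\alpha\in C^{0,1}$, $0<\omega\le\alpha^\prime_\varphi\le L$ with $\omega=\min_{\bm{\theta}}\tfrac12\sigma^2$, $L=\max_{\bm{\theta}}\tfrac12\sigma^2$, $|\partial_x\alpha(x,\varphi)|\le p(x)$, and $\alpha(x,0)=h(x)=-\max_{\bm{\theta}}\mu(x,\bm{\theta})$. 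Because $h\in L^\infty(\mathbb{R})$ need not belong to $L^2(\mathbb{R})$, I would work with the shifted function $\tilde\alpha(x,\varphi):=\alpha(x,\varphi)-h(x)$: it satisfies $\tilde\alpha(x,0)=0$, $\tilde\alpha^\prime_\varphi=\alpha^\prime_\varphi\in[\omega,L]$ and $|\partial_x\tilde\alpha|\le 2p$ (using $|\partial_x h|\le p$), so $\tilde\alpha$ verifies the hypotheses of Theorem~\ref{th:alpha-existence} with $\tilde h\equiv0$ and $\tilde p=2p\in L^\infty((0,T);H)$; moreover $|\tilde\alpha(x,\varphi)|\le L|\varphi|$ makes $\varphi\mapsto\tilde\alpha(\cdot,\varphi)$ a globally Lipschitz map $H\to H$ and a map $V\to V$. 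The identity $-\partial_x^2\alpha(\cdot,\varphi)=A\tilde\alpha(\cdot,\varphi)-\tilde\alpha(\cdot,\varphi)-\partial_x^2 h$ then rewrites (\ref{finalequation}) as
\[
\partial_\tau\varphi+A\tilde\alpha(\cdot,\varphi)=\tilde\alpha(\cdot,\varphi)+\partial_x\big(\partial_x h-\alpha(\cdot,\varphi)\varphi\big),\qquad\varphi(0)=\varphi_0,
\]
in which $\partial_x h\in L^2(\mathbb{R})$ is a fixed forcing and $g_0(\varphi)=\tilde\alpha(\cdot,\varphi)$ is globally Lipschitz $H\to H$.

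Next I would truncate the quadratic term. With $\pi_R(s)=\max\{-R,\min\{R,s\}\}$, set $g_1^R(\varphi):=\partial_x h-\alpha(\cdot,\pi_R(\varphi))\,\pi_R(\varphi)$; since $\pi_R$ is $1$-Lipschitz with range in $[-R,R]$, on which $\alpha(x,\cdot)$ is $L$-Lipschitz and bounded by $\|h\|_\infty+LR$, a short estimate shows that $g_1^R:H\to H$ is globally Lipschitz and $g_1^R(0)=\partial_x h\in H$. Applying Theorem~\ref{th:alpha-existence} to $\partial_\tau\varphi+A\tilde\alpha(\cdot,\varphi)=\tilde\alpha(\cdot,\varphi)+\partial_x g_1^R(\varphi)$, $\varphi(0)=\varphi_0\in H$, produces a unique $\varphi_R\in\mathcal V=L^2((0,T);V)$, and Theorem~\ref{cor:abs-continuous} adds $\varphi_R\in C([0,T];H)$.

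Finally I would remove the truncation, which is the main technical step. On the set where $|\varphi_R|\le R$ the function $\varphi_R$ solves the untruncated equation, so $\psi_R:=\alpha(\cdot,\varphi_R)$ is (weakly) a solution of the linear parabolic equation $-\partial_\tau\psi_R+\alpha^\prime_\varphi\partial_x^2\psi_R+\big(-\alpha^\prime_\varphi\varphi_R-\alpha\big)\partial_x\psi_R+\alpha^\prime_x\psi_R=0$ with $0<\omega\le\alpha^\prime_\varphi\le L$ and $\alpha^\prime_x\le\|p\|_\infty=:\lambda$; the weak maximum principle used to derive (\ref{pointwise}) then gives $\underline\psi\,e^{\lambda\tau}\le\psi_R\le\overline\psi\,e^{\lambda\tau}$ with $\underline\psi,\overline\psi$ as in (\ref{psiplusminus}), and $\omega|\varphi_R|\le|\alpha(\cdot,\varphi_R)-\alpha(\cdot,0)|\le|\psi_R|+\|h\|_\infty$ yields $\|\varphi_R\|_{L^\infty((0,T)\times\mathbb{R})}\le R_0:=\omega^{-1}\big(\max\{|\underline\psi|,|\overline\psi|\}e^{\lambda T}+\|h\|_\infty\big)$, a quantity depending only on the data (finite since $\varphi_0\in L^\infty$). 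Choosing $R:=R_0+1$ and propagating this bound by continuation shows that the truncation is never active, so $\varphi:=\varphi_R$ solves (\ref{finalequation}), lies in $L^\infty((0,T)\times\mathbb{R})$, and inherits $\varphi\in C([0,T];H)\cap L^2((0,T);V)$ from Theorems~\ref{th:alpha-existence} and \ref{cor:abs-continuous}. Uniqueness follows since any solution in $C([0,T];H)\cap L^2((0,T);V)\cap L^\infty((0,T)\times\mathbb{R})$ is bounded by some $R_1$ and hence coincides, for $R>R_1$, with the unique solution of the truncated problem. The hardest part is the rigorous justification of the maximum-principle step for the merely weak solution $\varphi_R$ — controlling the region where the truncation might a priori be active and the limited time regularity of $\tau\mapsto\|\varphi_R(\cdot,\tau)\|_{L^\infty}$; a clean route is to carry out the entire argument on a uniformly parabolic regularisation, establish (\ref{pointwise}) there with constants independent of the regularisation parameter, and pass to the limit.
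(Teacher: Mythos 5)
Your proposal follows essentially the same route as the paper's own proof: shift the diffusion to $\tilde\alpha(x,\varphi)=\alpha(x,\varphi)-\alpha(x,0)$, recast (\ref{finalequation}) in the abstract form $\partial_\tau\varphi+A\tilde\alpha(\cdot,\varphi)=g_0(\varphi)+\partial_x g_1(\varphi)$ with $A=I-\partial_x^2$, truncate the quadratic transport term so that Theorems~\ref{th:alpha-existence} and~\ref{cor:abs-continuous} apply, and then invoke the pointwise a-priori bound (\ref{pointwise})--(\ref{psiplusminus}) to show the truncation is never active, which also yields the $L^\infty$ estimate. The only deviations are cosmetic: you truncate $\varphi$ itself via $\pi_R$ where the paper truncates the factor $\alpha(\cdot,\varphi)$ through the cutoff $w$ (your variant makes the global Lipschitz continuity of the truncated $g_1$ on $H$ somewhat more transparent), and you absorb $\partial_x h$ into the divergence term rather than placing $\partial_x^2 h$ in $g_0$.
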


\begin{proof}

Since $\sigma^2(\bm{\theta})>0$ and $\triangle$ is a compact set, there exist constants $0<\omega\le L$ such that $0<\omega\le \sigma^2(\bm{\theta})\le L$ for all $\bm{\theta}\in \triangle$. It follows from Proposition~\ref{smootheness0} that 
\begin{equation}
\omega|\varphi| \le |\alpha(x,\varphi) - \alpha(x,0)| \le L |\varphi|.
\label{monotonicity}
\end{equation}
Since $\varphi_0, h\in L^\infty(\mathbb{R})$ and $h(x) = \alpha(x,0)$, we obtain 
$M:=\sup_{x\in\mathbb{R}} |\alpha(x,\varphi_0(x))| <\infty$.

Now, let us define the shift diffusion function by $\tilde\alpha(x,\varphi)=\alpha(x,\varphi) - \alpha(x,0)$. Note that $\alpha(x,0)= \min_{{\bm{\theta}}\in\triangle} -\mu(x,{\bm{\theta}}) = h(x)$. Then, equation (\ref{finalequation}) is equivalent to 
\[
\partial_\tau \varphi  + A \tilde \alpha(\cdot,\varphi) = 
\tilde \alpha(\cdot,\varphi)+ \partial^2_x h - \partial_x \left( \alpha(\cdot,\varphi)\varphi\right),
\]
where $A=I-\partial^2_x$.

Next, let $g_0(\varphi) = \tilde \alpha(\cdot,\varphi)+ \partial^2_x h$ and $g_1(\varphi) = -w(\alpha(\cdot,\varphi)) \varphi$. Here, $w:\mathbb{R}\to\mathbb{R}$ is a suitable cutoff function 
\[
w(\alpha) = \left\{ 
\begin{array}{ll}
\underline{\psi} e^{\lambda T}, & \text{if}\ \ \alpha  \le \underline{\psi} e^{\lambda T},  \\
\alpha, &  \text{if} \ \ \underline{\psi} e^{\lambda T} < \alpha < \overline{\psi} e^{\lambda T}, \\
\overline{\psi} e^{\lambda T}, & \text{if}\ \ \alpha \ge \overline{\psi} e^{\lambda T},  \\
\end{array}
\right.
\]
where $\overline{\psi}=M, \underline{\psi}=-M$.
Then, the functions $g_0, g_1:H\to H$ are globally Lipschitz continuous. 

Note that the diffusion function $\tilde\alpha$ satisfies the assumptions of Theorem~\ref{th:alpha-existence} with $\tilde h(x)=\tilde\alpha(x,0)\equiv 0$. Now, applying Theorems~\ref{th:alpha-existence} and \ref{cor:abs-continuous}, we obtain the existence and uniqueness of a solution $\varphi\in C([0,T]; H)\cap L^2((0,T); V)$ to the Cauchy problem (\ref{equ:g_0,g_1}). The solution $\varphi$ satisfies the point-wise estimate (\ref{pointwise}). Thus, $w(\alpha(x,\varphi(x,\tau))) = \alpha(x,\varphi(x,\tau))$, and $\varphi$ is a solution to the Cauchy problem (\ref{finalequation}). 

Finally, from (\ref{monotonicity}), we deduce the $L^\infty((0,T)\times \mathbb{R})$ estimate for the solution $\varphi$ since $\sup_{x\in\mathbb{R}} |\alpha(x,\varphi(x,\tau))| \le M e^{\lambda \tau}$, where $\lambda = \sup_{x\in\mathbb{R}} p(x)$. Moreover, $\varphi\in L^\infty((0,T)\times \mathbb{R})$, and
\[
\sup_{x\in\mathbb{R}, \tau\in [0,T]} |\varphi(x,\tau)| \le \omega^{-1} 
(M e^{\lambda T} + \max_{x\in\mathbb{R}} |h(x)|).
\]

\end{proof}

\subsubsection{Application to stochastic dynamic optimal portfolio selection problem}
\label{application}

This subsection presents examples of the stochastic process (\ref{process_x}). We also present examples of the corresponding utility function. For the the stochastic process (\ref{process_x}), we can consider a portfolio optimization problem with regular cash inflow ($\varepsilon>0$)/outflow ($\varepsilon<0$) to a portfolio representing pension funds savings. Here, we consider Slovak pension fund savings according to Kilianov\'a and {\v S}ev{\v c}ovi{\v c} \cite{KilianovaSevcovicANZIAM}). It is well-known that in a stylized financial market, the stochastic process $\{y_t \}_{t\ge0}$ driven by the stochastic differential equation
\begin{equation}
d y_t = (\varepsilon(y_t) + {\bm{\mu}}^T {\bm{\theta}} y_t)
d t + \sigma(\bm{\theta}) y_t d W_t, \label{processYeps}
\end{equation}
represents a stochastic evolution of the value of a synthetized portfolio $y_t$ consisting of $n$-assets with weights ${\bm{\theta}}=(\theta_1, \cdots, \theta_n)^T$, mean returns ${\bm{\mu}}=(\mu_1, \cdots, \mu_n)^T$, and an $n\times n$ positive definite covariance matrix ${\bm{\Sigma}}$, i.e., $\sigma({\bm{\theta}})^2 ={\bm{\theta}}^T {\bm{\Sigma}}  {\bm{\theta}}$.

In this study, we assume that the value of the inflow/outflow rate $\varepsilon=\varepsilon(y)$ also depends on the $y$ such that it is characterized by the following 
\[
\varepsilon(y)=\left\{ 
\begin{array}{ll}
0,     & \hbox{if}\ 0<y<y_{-},\\
 C,     & \hbox{if} \ y_{-} < y,
\end{array}
\right.
\]
where $C$ is a constant. It represents a realistic pension saving model in which there is no inflow/outflow provided that the value $y$ of the portfolio is very small. Based on the logarithmic transformation $x=\ln y$ and It\^{o}'s lemma, the stochastic process $\{x_t\}$ satisfies (\ref{process_x}), where 
$\mu(x,{\bm{\theta}}) = {\bm{\mu}}^T{\bm{\theta}} - \frac12 \sigma({\bm{\theta}})^2  +\varepsilon(e^x)  e^{-x}$.

Recently, Kilianov\'a and Trnovsk\'a \cite{KilianovaTrnovska} established a further generalization of the drift and volatility functions arising from the so-called worst-case portfolio optimization problem. In such case, the volatility function is given by
\[
\sigma({\bm{\theta}})^2 = \max_{{\bm{\Sigma}}\in{\mathcal K}}{\bm{\theta}}^T {\bm{\Sigma}}  {\bm{\theta}}, 
\]
where $\mathcal K$ is a bounded uncertainty convex set of positive definite covariance matrices. In general, only a part of the covariance matrix can be calculated precisely, whereas entries are not precisely determined. For instance, if only the diagonal $d$ is known, we have ${\mathcal K}=\{{\bm{\Sigma}}\succ0, \ diag({\bm{\Sigma}})=d\}$. Furthermore, the drift function is given by
\[
\mu(x,{\bm{\theta}}) = \min_{{\bm{\mu}}\in{\mathcal E}}{\bm{\mu}}^T{\bm{\theta}} - \frac12 \sigma({\bm{\theta}})^2  +\varepsilon(e^x) e^{-x} ,
\]
where $\mathcal E$ is a given bounded uncertainty convex set of mean returns.

\begin{remark}

Consider a class of utility function characterized by a pair of exponential functions: 
\begin{equation}
u(x) = \begin{cases} - e^{-a_0 x} - c^*, & x \le x^\ast, \\ 
- (a_0/a_1) e^{-a_1 x + (a_1-a_0)x^\ast}, & x > x^\ast, \end{cases} 
\label{eq:utility_DARA}
\end{equation}
where $c^*=e^{-a_0 x^*}(a_0-a_1)/a_1$ and $a_0, a_1\in \mathbb{R}>0$ are given constants. Here, $x^\ast \in \mathbb{R}$ is a point at which the risk aversion changes. We observe that $u$ is an increasing $C^1$ function with a jump in the second derivative at the point $x^\ast$. 

If $a_0>a_1>0$, then the utility function $u$ is called the decreasing absolute risk aversion (DARA) function, representing an investor with a non-constant decreasing risk aversion. It means that the higher the wealth of investors, the lower their risk aversion, and hence the higher exposition of the portfolio to more risky assets. According to Post, Fang and Kopa \cite{Post}, the piece-wise exponential DARA utility function plays an essential role in the analysis of decreasing absolute risk aversion stochastic dominance introduced by Vickson \cite{Vickson} (see also \cite{KilianovaSevcovicKybernetika}). It is worth noting that the coefficients of absolute risk aversion of the above utility functions $-u^{\prime\prime}(x)/u^{\prime}(x)$ is equal to $a_0$ if $x\le x^*$ or to $a_1$ if $x> x^*$. 

The piece-wise constant function $\varphi_0(x)=-u''(x)/u'(x)$ should be truncated outside of some interval $(-\gamma,\gamma)$, where $\gamma$ is large enough. Then, $\varphi_0\in L^2(\mathbb{R})\cap L^\infty(\mathbb{R})$. Therefore, the underlying utility function is modified by linear functions for $x<-\gamma$ and $x>\gamma$.

Another simple example of a convex-concave utility function is the function $u(x)=\arctan(x)$. Then, $\varphi_0(x) = -u''(x)/u'(x) = 2x/(1+x^2)$. Clearly, $\varphi_0\in H=L^2(\mathbb{R})$. It is worth noting that the individual's reduction in marginal utility arising from a loss is absolutely greater than the marginal utility from a financial gain. In the domain of gain ($x>x^*$), the utility function is concave, indicating that investors show risk aversion in this domain. In contrast, investors become risk-seeker when dealing with losses, i.e., the utility function is convex for $x\leq x^*$. 
\end{remark}

\subsubsection{Numerical examples}

This subsection presents examples of the value function $\alpha=\alpha(\varphi)$ to the parametric optimization problem with different decision sets. First, we consider a simple decision set $\triangle=\{{\bm{\theta}}\in\mathbb{R}^2,\,  {\bm{\theta}} \ge0, \bm{1}^T {\bm{\theta}} =1\}, n=2, \mu({\bm{\theta}})={\bm{\mu}}^T{\bm{\theta}}, \sigma^2({\bm{\theta}})={\bm{\theta}}^T{\bm{\Sigma}}{\bm{\theta}}$, where ${\bm{\Sigma}}$ is a positive definite covariance matrix, and ${\bm{\mu}}$ is a positive vector of mean return. The value function $\alpha=\alpha(\varphi)$ can be explicitly expressed as follows:
\[
\alpha(\varphi)=\left\{ 
\begin{array}{ll}
E^- \varphi + D^-,     & \hbox{if}\ 0<\varphi\le \varphi_*^-,\\
 A - \frac{B}{\varphi} + C\varphi,     & \hbox{if} \ \varphi_*^-<\varphi<\varphi_*^+, \\
E^+ \varphi + D^+,     & \hbox{if}\ \varphi_*^+\le \varphi.
\end{array}
\right.
\]
Here, $(\varphi_*^-,\varphi_*^+)$ is the maximal interval where the optimal value $\hat{\bm{\theta}}(\varphi)\in\triangle$ of the function ${\bm{\theta}} \mapsto -{\bm{\mu}}^T{\bm{\theta}} +\frac{\varphi}{2} {\bm{\theta}}^T{\bm{\Sigma}}{\bm{\theta}}$ is strictly positive ($\hat{\bm{\theta}}(\varphi)>0$) for $\varphi\in (\varphi_*^-,\varphi_*^+)$, and $C, E^\pm>0$, $B\ge 0$, $A,D^\pm$ are constants explicitly depending on the covariance matrix ${\bm{\Sigma}}$ and the vector of mean return ${\bm{\mu}}$ such that the function $\alpha$ is $C^1$ continuous at $\varphi_*^\pm$, i.e., $E^\pm=B/(\varphi_*^\pm)^2+C$ and $D^\pm=A-B/\varphi_*^\pm +C\varphi^\pm - E^\pm\varphi_*^\pm$. It is clear that $\alpha$ is only $C^{1,1}$ continuous function with two points $\varphi_*^\pm$ of discontinuity of the second derivative $\alpha''$. 

Furthermore, consider a decision set consisting of finite number of points. Then, the value function $\alpha(\varphi)$ corresponding to such decision set is only piece-wise linear. In other words, if $\hat\triangle = \{{\bm{\theta}}^1, \cdots, {\bm{\theta}}^k\} \subset \{{\bm{\theta}}\in\mathbb{R}^2,\,  {\bm{\theta}} \ge0, \bm{1}^T {\bm{\theta}} =1\}$, then $\alpha(\varphi)=\min_{i=1,\cdots,k} \alpha^{i}(\varphi)$, where $\alpha^i(\varphi) = E^i \varphi + D^i$ is a linear function with the slope $E^i = (1/2) ({\bm{\theta}}^i)^T{\bm{\Sigma}} {\bm{\theta}}^i >0$ and intercept $D^i = - {\bm{\mu}}^T {\bm{\theta}}^i$.

Figure~\ref{fig:alpha_alphader_alphaderder} a) shows the graph of the value function $\alpha$ corresponding to the Slovak pension fund system with the two types of decision sets. According to the data-set obtained from \cite{KMS}, the portfolio consists of the stock index (with a high mean return $\mu_s=0.10$ and high volatility $\sigma_s=0.3$) and bonds (with mean return $\mu_b=0.03$ and very low volatility $\sigma_s=0.01$). The returns on stocks index and bonds are negatively correlated $\varrho=-0.15$. ${\bm{\mu}}=(\mu_s,\mu_b)^T=(0.1, 0.05)^T$. Then ${\bm{\Sigma}}_{11}=\sigma_2^2, {\bm{\Sigma}}_{22}=\sigma_b^2, {\bm{\Sigma}}_{12}={\bm{\Sigma}}_{21}=\varrho\sigma_s\sigma_b$. As shown in Figure~\ref{fig:alpha_alphader_alphaderder} a), the solid blue line corresponds to the convex compact decision set $\triangle=\{{\bm{\theta}}\in\mathbb{R}^2,\,  {\bm{\theta}} \ge0, \bm{1}^T {\bm{\theta}} =1\}$. The piece-wise linear value function $\alpha$ (dotted red line) corresponds to the discrete decision set $\hat\triangle=\{ {\bm{\theta}}^1, {\bm{\theta}}^2, {\bm{\theta}}^3\}\subset \triangle$. It represents the Slovak pension fund system consisting of three funds: growth funds with ${\bm{\theta}}^1=(0.8,0.2)^T$ (80\% of stocks and 20\% of bonds), balanced funds with  ${\bm{\theta}}^2=(0.5, 0.5)^T$ (equal proportion of stocks and bonds), and conservative funds with ${\bm{\theta}}^3=(0, 1)^T$ (only bonds) (c.f. \cite{KMS}). Figure~\ref{fig:alpha_alphader_alphaderder} b) shows the graph of the second derivative $\alpha''_\varphi(\varphi)$ of the value function $\alpha(\varphi)$ corresponding to the convex compact decision set $\triangle$. It has the first point of discontinuity $\varphi_*^-$ close to the value 2.
For $n>2$, the number of discontinuities of $\alpha_\varphi^{\prime\prime}$ increases (c.f. \cite{KilianovaSevcovicANZIAM}). Figure ~\ref{fig:alpha_alphader_alphaderder-dax} shows another example of the value function and its second derivative for the portfolio consisting of five stocks (BASF, Bayer, Degussa-Huls, FMC, and Schering) entering DAX30 German stocks index. The covariance matrix ${\bm{\Sigma}}$ is taken from \cite{DDV}. We set the vector of yields ${\bm{\mu}}=(0.03,0.02,0.04,0.01,0.01)^T$.

\begin{figure}
\centering
\includegraphics[height=6truecm]{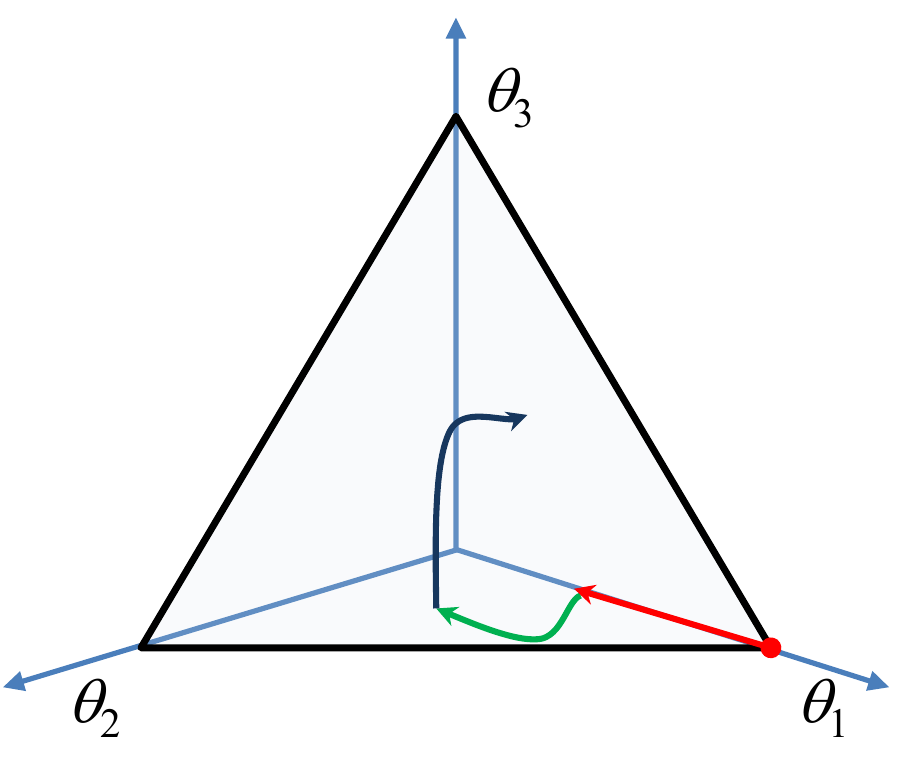}
\caption{The path $\hat{\bm{\theta}}(\varphi)$ as a function of $\varphi$.}
\label{fig:simplex}
\end{figure}

The minimizer $\hat{\bm{\theta}}(\varphi)$ of the convex optimization problem 
\[
\alpha(\varphi) = \min_{ \bm{\theta}
\in \triangle } \{ -\bm{\mu}^T \bm{\theta} +
\frac{\varphi}{2}\bm{\theta}^T \Sigma \bm{\theta}\}
\]
when considered as a function of the risk aversion parameter $\varphi$ is only Lipschitz continuous in $\varphi$. According to Millgrom--Segal envelope theorem, the derivative $\alpha'(\varphi)$ is given by $\alpha'(\varphi) = \frac{1}{2} \hat{\bm{\theta}}(\varphi)^T\Sigma\hat{\bm{\theta}}(\varphi)$. Figure~\ref{fig:simplex} shows the path $\hat{\bm{\theta}}(\varphi)$ as a function of $\varphi$ when increasing $\varphi$ from $\varphi=0$ to $\varphi \to \infty$. For small values of $\varphi$, only one asset with maximal mean return is active, i.e., $\theta_1>0, \theta_2=\theta_3=0$. For intermediate values of $\varphi$, two assets are active ${ \theta_1>0, \theta_2>0}, \theta_3=0$. Moreover, for larger values of $\varphi$, all three assets are active, i.e.,  ${ \theta_1>0, \theta_2>0, \theta_3>0}$. The path $\varphi\mapsto \hat{\bm{\theta}}(\varphi)$ has a discontinuity in the first derivative when it leaves lower dimensional object (vertex, edge) enter a higher-dimensional object volume.


\begin{figure}
    \centering
    \includegraphics[width=0.48\textwidth]{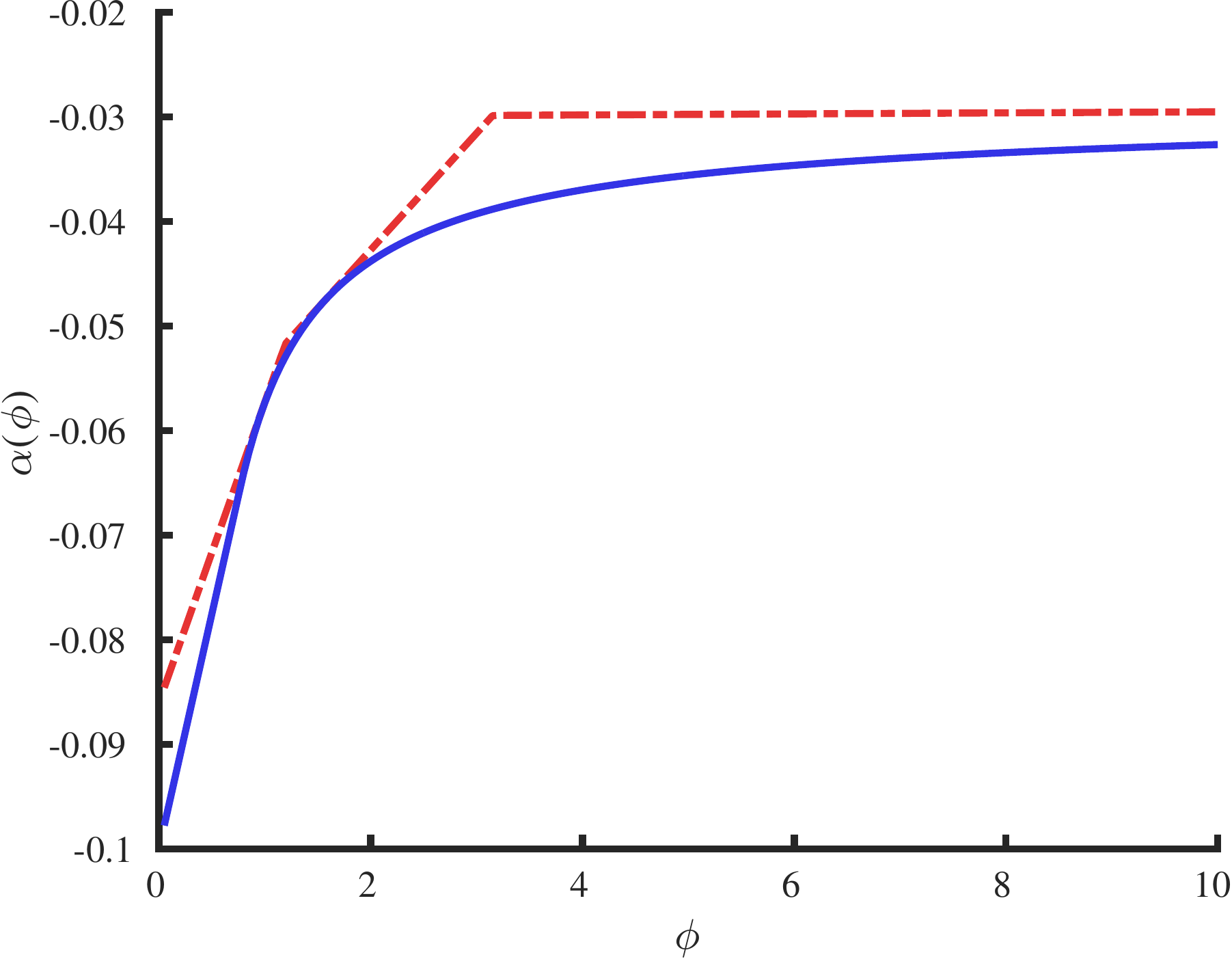}
    \includegraphics[width=0.48\textwidth]{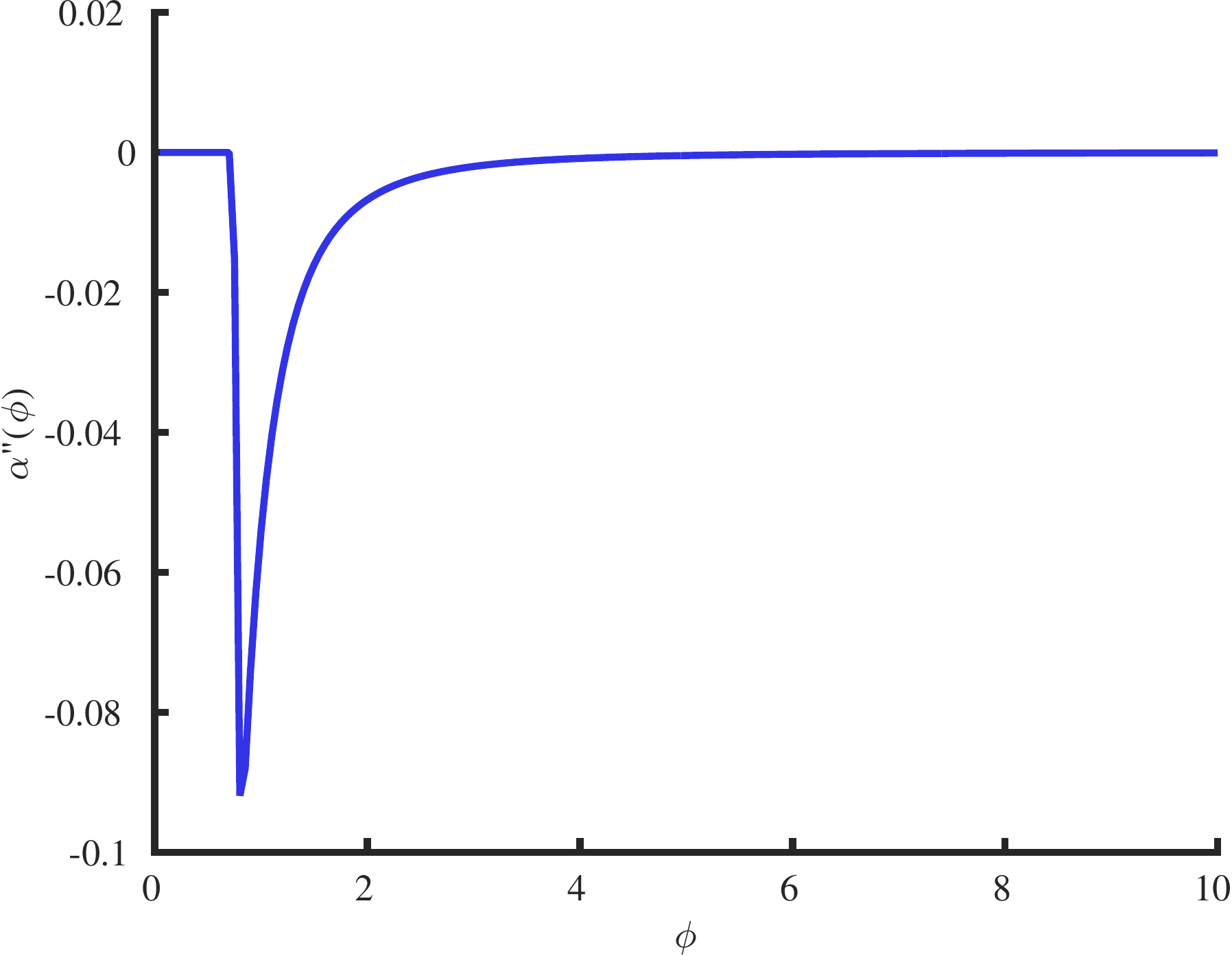}
    
    a) \hglue6truecm \qquad b)
    
    \caption{a) A graph of the value function $\alpha$, b) its second derivative $\alpha''(\varphi)$ for the portfolio consisting of the stocks index and bonds (c.f. \cite{KMS}) for the convex compact decision set $\triangle$. The dotted line in a) corresponds to the discrete decision set $\hat\triangle=\{ {\bm{\theta}}^1, {\bm{\theta}}^2, {\bm{\theta}}^3\}\subset \triangle$. Source: our computation is based on the method from \cite{udeani2021application}. }
    \label{fig:alpha_alphader_alphaderder}
\end{figure}

\begin{figure}
    \centering
    \includegraphics[width=0.48\textwidth]{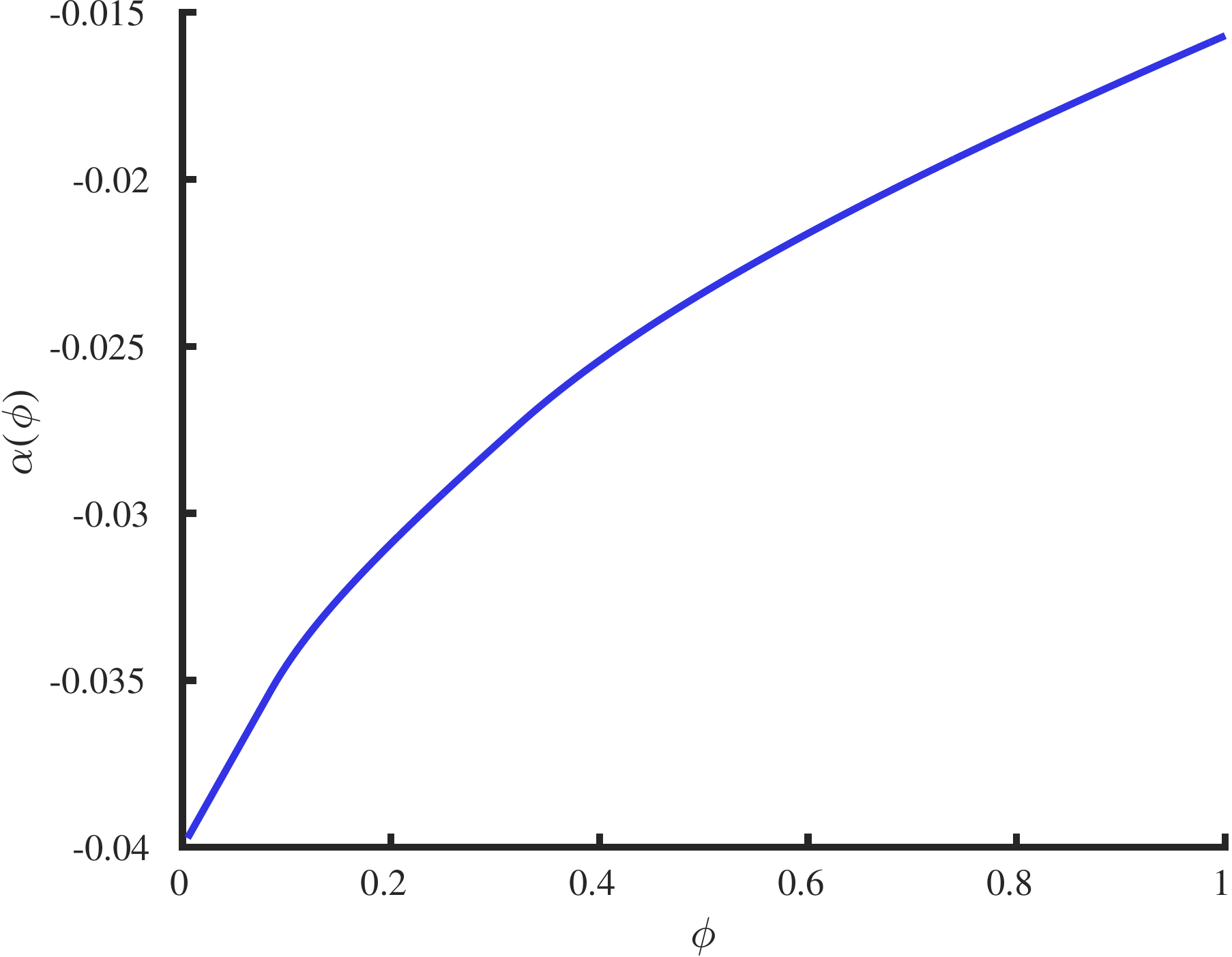}
    \includegraphics[width=0.48\textwidth]{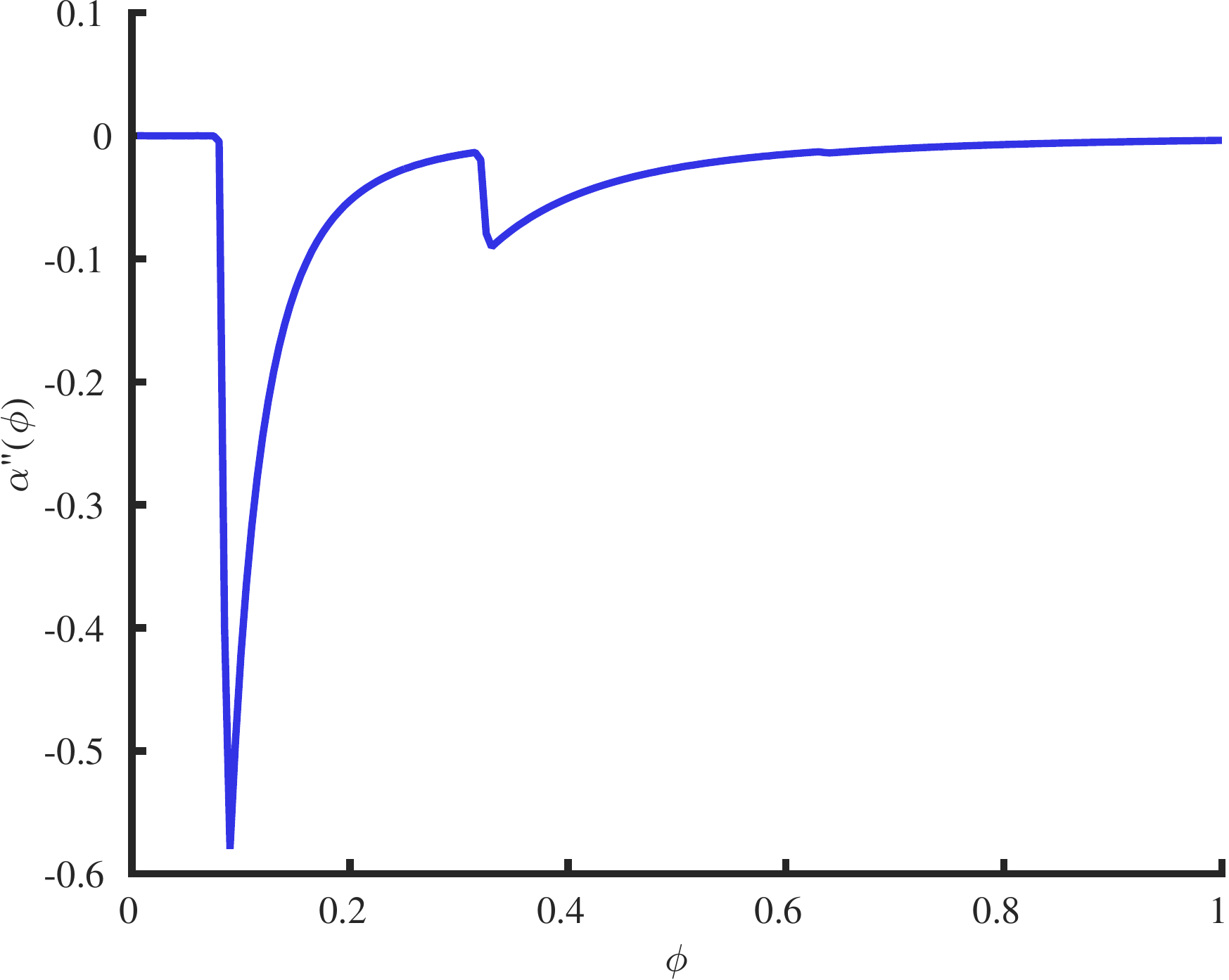}
    
    a) \hglue6truecm \qquad b)
    
    \caption{a) A graph of the value function $\alpha(\varphi)$, and b) the second derivative $\alpha''_\varphi(\varphi)$ corresponding to five stocks (BASF, Bayer, Degussa--Huls, FMC Scheringfrom) entering DAX30 index. Source: our computation is based on the method from \cite{udeani2021application}.}
    \label{fig:alpha_alphader_alphaderder-dax}
\end{figure}

The advantage of the Riccati transformation of the original HJB is twofold. First, the diffusion function $\alpha$ can be computed in advance as a result of quadratic optimization problem when the vector $\bm{\mu}$ and the covariance matrix $\bm{\Sigma}$ are given or semidefinite programming problem when they belong to a uncertainity set of returns and covariance matrices (c.f. \cite{KilianovaTrnovska}). Figure \ref{fig:vysledky-DAX} shows the vector of optimal weights $\bm{\theta}$, as a function of the parameter $\varphi$, obtained as the optimal solution to the quadratic optimization problem with the covariance matrix from \cite{DDV} corresponding to the five assets  (BASF, Bayer, Degussa--Huls, FMC, Scheringfrom) entering DAX30 index from 2008. There are more nontrivial weights $\theta_i$ when the parameter $\varphi$ increases.

\begin{figure}
    \centering
    \includegraphics[width=0.5\textwidth]{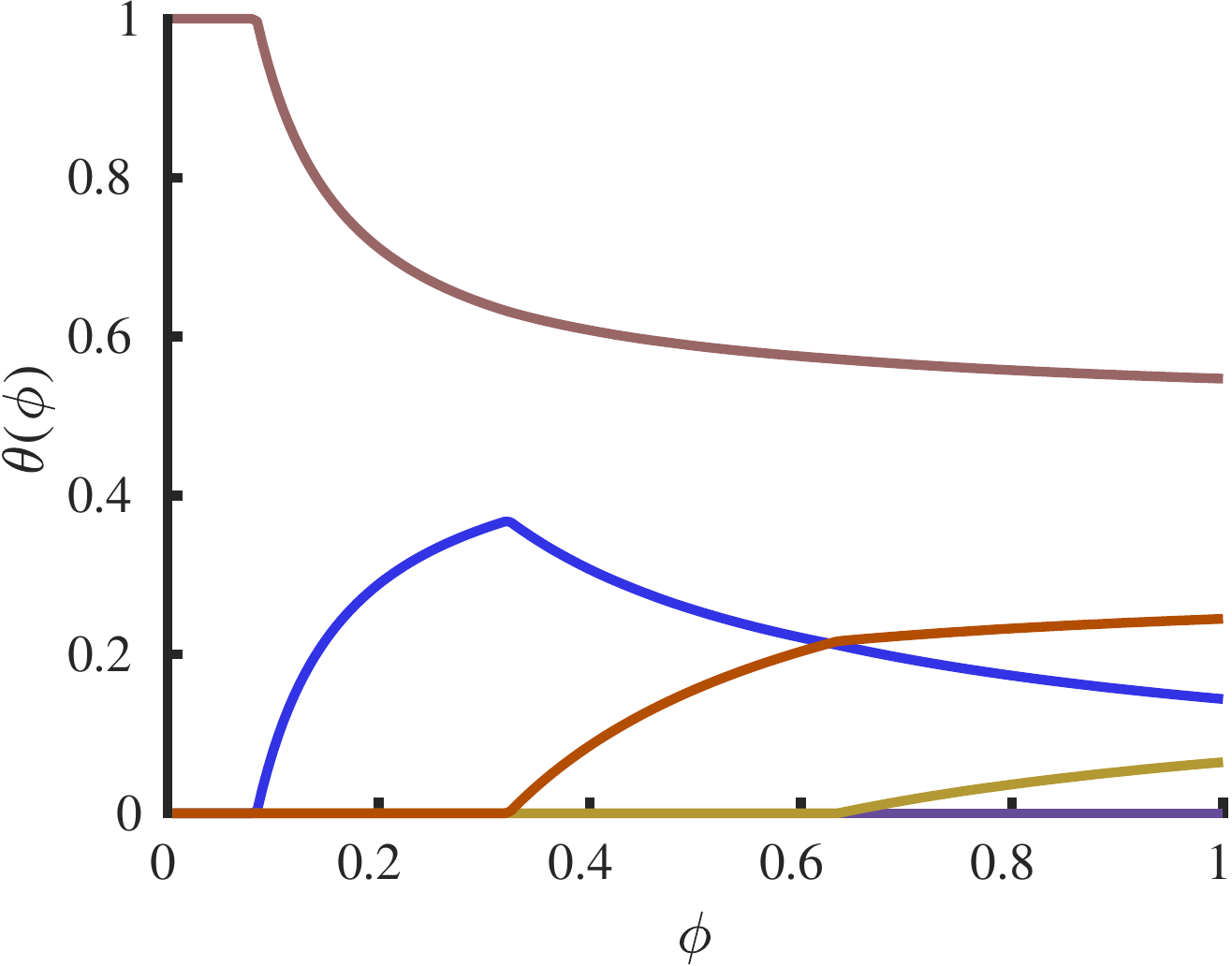} \\
    
    \caption{The optimal vector $\bm{\theta}=(\theta_1, \cdots,\theta_n)^T$ as a function of $\varphi$ for the German DAX30 index. Source: our computation is based on the method from \cite{udeani2021application}, \cite{KilianovaSevcovicANZIAM}.}
    \label{fig:vysledky-DAX}
\end{figure}

In contrast to the fully nonlinear character of the original HJB equation (\ref{eq_HJB}), the transformed equation (\ref{eq_PDEphi-cons}) represents a quasilinear parabolic equation in the divergence form. Thus, efficient numerical schemes can be constructed for this class of equation. In our computational experiments, we employ the finite volume discretization scheme proposed and investigated by Kilianov\'a and \v{S}ev\v{c}ovi\v{c} \cite{KilianovaSevcovicKybernetika, KilianovaSevcovicANZIAM, KilianovaSevcovicJIAM}). Figure~\ref{fig:vysledky1} a) shows the results of time dependent sequence of profiles $\varphi(x,\tau)$ for a constant initial condition $\varphi_0\equiv 9$. Figure~\ref{fig:vysledky1} b) shows the solution profiles for the initial condition $\varphi_0$ attaining four decreasing values $\{9,8\}$. It represents DARA utility function. The function $\varphi(x,\tau)$ is increasing in the $x$ variable and decreasing in the $\tau=T-t$ variable. Therefore, the optimal vector $\bm{\theta}(x,\tau)$ contains more diversified portfolio of assets when $x$ increases and the time $t\to T$ (see Figure~\ref{fig:vysledky-DAX}).  Furthermore, it is reasonable to invest in an asset with the highest expected return when the account value $x$ is low, whereas an investor has to diversify the portfolio when $x$ is large and time $t$ is approaching terminal maturity $T$.

\begin{figure}
\centering
\includegraphics[width=0.5\textwidth]{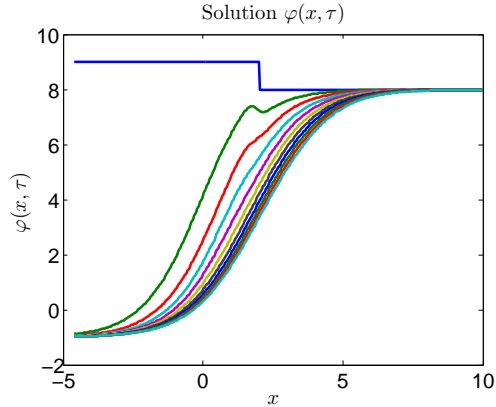}     

    \caption{A solution $\varphi(x,\tau)$ for the DARA utility function with $a_0=9$, $a_1=8$, $x^\ast=2$. Source: our computations based on the numerical method from \cite{udeani2021application}.}
    \label{fig:vysledky1}
\end{figure}

\section{Conclusions}
This review paper presents the analysis of solution to nonlinear and nonlocal partial integro-differential equation arising from financial market. Specifically, we studied and analyzed linear and nonlinear PIDEs arising from option pricing and portfolio selection problem. For the option pricing, we investigated the systematic relationships between the corresponding PIDEs and Black--Scholes models that study market's illiquidity when the underlying asset price follows a L\'evy stochastic process with jumps. We employ the theory of abstract semilinear parabolic equation in the Bessel potential spaces to establish the qualitative properties of solutions to nonlocal linear and nonlinear PIDE for a general class of the so-called admissible L\'evy measures satisfying suitable growth conditions at infinity and origin are also established in the multidimensional space. We considered a general shift function arising from nonlinear option pricing models, which takes into account a large trader stock-trading strategy with the underlying asset price following the L\'evy process. Then, for the portfolio allocation problem, we presented the qualitative properties results to the fully nonlinear HBJ equation arising from stochastic dynamic optimization problem in Sobolev spaces using the theory of monotone operator technique. Such HJB equation, presented in abstract setting, arises from portfolio management problem, where the goal of an investor is to maximize the condition expected terminal utility of a portfolio. We also presented a stable, convergent, and consistent numerical scheme that approximates solution of such PIDE. Several numerical simulations are conducted to demonstrate the influence of a large trader and intensity of jumps on the option price.

\label{lastpage-01}


\begin{thebibliography}{999}
\bibitem{AI}
Abe,  R., and Ishimura, R. 2008.
\newblock {{Existence of solutions for the nonlinear partial
  differential equation arising in the optimal investment problem.}}
\newblock {\em Proc. Japan Acad., Ser. A\/},~ 84(1), 11--14.

\bibitem{AbelsKass2009}
Abels, H., and Kassmann, M. 2009.
\newblock The {C}auchy problem and the martingale problem for
  integro-differential operators with non-smooth kernels.
\newblock {\em Osaka J. Math.}, (46),~661--683.

\bibitem{aboodh2016solution}
Aboodh, K., Farah, R., Almardy, I., and Almostafa, F. 2016.
\newblock Solution of partial integro-differential equations by using Aboodh
  and double Aboodh transforms methods.
\newblock {\em Global Journal of Pure and Applied Mathematics}, (13),~4347--4360.

\bibitem{NBS15}
Arregui, I., Salvador, B., \v{S}ev\v{c}ovi\v{c}, D., and V\'{a}zqu\'{e}z, C. 2019.
\newblock Mathematical analysis of a nonlinear {PDE} model for {E}uropean
  options with counterparty risk.
\newblock {\em Comptes Rendus Mathematique}, (357),~252--257.

\bibitem{NBS17}
Arregui, I., Salvador, B., \v{S}ev\v{c}ovi\v{c}, D., and V\'{a}zqu\'{e}z, C. 2018.
\newblock Total value adjustment for European options with two stochastic
  factors. Mathematical model, analysis and numerical simulation.
\newblock {\em Computers and Mathematics with Applications}, (76),~725--740.
  
\bibitem{NBS7}
Avellaneda, M., Levy, A., and Paras, A. 1995.
\newblock Pricing and hedging derivative securities in markets with uncertain
  volatilities.
\newblock {\em Applied Mathematical Finance},~ (2),~73--88.
\bibitem{Barbu}
Barbu, Viorel. 2010.
\newblock Nonlinear differential equations of monotone types in Banach spaces.
\newblock {\em Springer Science \& Business Media\/}.

\bibitem{barles1997backward}
Barles, G., Buckdahn, R., and Pardoux, E. 1997.
\newblock Backward stochastic differential equations and integral-partial
  differential equations.
\newblock {\em Stochastics: An International Journal of Probability and
  Stochastic Processes},~ (60),~57--83.
  
\bibitem{barles}
Barles, G., and Soner, H.~M. 1998.
\newblock Option pricing with transaction costs and a nonlinear Black--Scholes
  equation.
\newblock {\em Finance and Stochastics}.

\bibitem{BARNIE01}
Barndorff--Nielsen, O.~E., and Levendorski{\u\i}, S.~Z. 2001.
\newblock Feller processes of normal inverse {G}aussian type.
\newblock {\em Quant. Finance},~318--331.
\bibitem{Bertsekas}
Bertsekas, Dimitri P. 1976.
\newblock {\em {Dynamic programming and stochastic control.}\/} ({Academic
  Press, New York}.
  
\bibitem{burzoni2020viscosity}
Burzoni, M., Ignazio, V., Reppen, A.~M., and Soner, H.~M. 2020.
\newblock Viscosity solutions for controlled McKean--Vlasov jump-diffusions.
\newblock {\em SIAM Journal on Control and Optimization},~ (58),~1676--1699.
  
\bibitem{brezis2010functional}
Brezis, H{\"a}im. 2010.
\newblock {\em Functional analysis, Sobolev spaces and partial differential
  equations}. Springer Science \& Business Media.
\bibitem{PD99}
Carr, P.,  Madan, and Dilip, B. 1999.
\newblock Option valuation using the fast Fourier transform.
\newblock {\em Journal of Computational Finance}, ~2(4),~61--73.

\bibitem{RamCont2005}
Cont, R., and Voltchkova, E. 2005.
\newblock Integro-differential equations for option prices in
              exponential {L}\'evy models.
\newblock {\em Finance and Stochastics}, ~3(9), ~299--325.
\bibitem{cont2005integro}
Cont, R., and Voltchkova, E. 2005.
\newblock Integro-differential equations for option prices in exponential
  L{\'e}vy models.
\newblock {\em Finance and Stochastics},~(9),~299--325.
\bibitem{ConTan03}
Cont, R., and Tankov, P. 2004.
\newblock Financial Modeling with Jump Processes.
\newblock {\em Journal of Business}.

\bibitem{CruzPhD2021}
Cruz, J. 2021.
\newblock "Option Pricing in Iliquid markets with jumps." 
\newblock PhD. diss., ISEG, Lisbon School of Economics and Management, University of Lisbon.

\bibitem{CruzSevcovic2020}
Cruz, J., and \v{S}ev\v{c}ovi\v{c}, D. 2020.
\newblock On solutions of a partial integro-differential equation in {B}essel
  potential spaces with applications in option pricing models.
\newblock {\em Jpn. J. Ind. Appl. Math.},~ (37),~697--721.

\bibitem{NBS19}
Cruz, J., \v{S}ev\v{c}ovi\v{c}, D. 2018.
\newblock Option Pricing in Illiquid Markets with Jumps.
\newblock {\em Applied Mathematical Finance},~ (25),~389--409.

\bibitem{DDV}
Deelstra, G., Diallo, I., and Vanmaele, M. 2008.
\newblock{Bounds for Asian basket options}.
\newblock{\em Journal of Computational and Applied Mathematics},~ 218, 215--228.

\bibitem{Eber98}
Eberlein, E., Keller, U., and  Prause, K. 1998.
\newblock New Insights into Smile, Mispricing, and Value at Risk: The hyperbolic Model.
\newblock {\em Journal of Business},~ 71, ~371--405.

\bibitem{Federico}
Federico, S., Gassiat, and P., Gozzi, F. 2015.
\newblock{{Utility maximization with current utility on the wealth:regularity of solutions to the HJB equation}}.
\newblock {\em Finance Stoch.\/}, ~19, 415--448.
\bibitem{florescu2010solutions}
Florescu, I., and Mariani, M.~C. 2010.
\newblock Solutions to integro-differential parabolic problems arising in the
  pricing of financial options in a L\'evy market.
\newblock {\em Electronic Journal of Differential Equations (EJDE)[electronic
  only]},~1--10.

\bibitem{NBS11}
Frey, R., and Stremme, A. 1997.
\newblock Market volatility and feedback effects from dynamic hedging.
\newblock {\em Math. Finance}, 7,~351--374.

\bibitem{NBS10}
Frey, R., and Patie, P. 2002.
\newblock Risk management for derivatives in illiquid markets: a simulation
  study. In {\em Advances in finance and stochastics}; Springer, Berlin, 137--159.
  
\bibitem{Frey98}
Frey, R\"udiger. 1998.
\newblock {{Perfect option hedging for a large trader}}
\newblock {\em Finance Stochastic\/}, 1(2), 305--325.

\bibitem{Frey2000}
Frey, R\"udiger. 1998.
\newblock {{Market Illiquidity as a Source of Model Risk in Dynamic Hedging}}
\newblock {\em Finance Stochastic\/}, 1(2), 305--325.

\bibitem{Henry1981}
Henry, Daniel. 1981.
\newblock {\em Geometric theory of semilinear parabolic equations}. Vol. 840,
  {\em Lecture Notes in Mathematics}, Springer-Verlag, Berlin-New York, iv+348.
  
\bibitem{IshSev}
Ishimura, N., and \v{S}ev\v{c}ovi\v{c}, D. 2013.
\newblock {On traveling wave solutions to a Hamilton-Jacobi-Bellman
 equation with inequality constraints.}
\newblock {\em Japan J. Ind. Appl. Math.\/}, 30(1), 51--67.
\bibitem{ishii1996viscosity}
Ishii, H. 1996.
\newblock Viscosity solutions of nonlinear partial differential equations.
\newblock {\em Sugaku Expositions}, 9,~135--152.

\bibitem{NBS1}
Janda\v{c}ka, M., and \v{S}ev\v{c}ovi\v{c}, D. 2005.
\newblock On the risk-adjusted pricing-methodology-based valuation of vanilla
  options and explanation of the volatility smile.
\newblock {\em J. Appl. Math.}, 235--258.

\bibitem{Jarrow94}
Jarrow, Robert A. 1994.
\newblock {{Derivative Securities Markets, Market Manipulation and Option Pricing Theory.}}
\newblock {\em Journal of Finance and Quantitative Analysis\/},~(29), 241--261.

\bibitem{KilianovaSevcovicKybernetika}
Kilianov\'a, S., and \v{S}ev\v{c}ovi\v{c}, D. 2018.
\newblock {Expected Utility Maximization and Conditional Value-at-Risk Deviation-based 
Sharpe Ratio in Dynamic Stochastic Portfolio Optimization}.
\newblock {\em Kybernetika}, 54(6), 1167-1183.

\bibitem{KilianovaSevcovicANZIAM}
Kilianov\'a, S., and \v{S}ev\v{c}ovi\v{c}, D. 2013.
\newblock {A Transformation Method for Solving the Hamilton-Jacobi-Bellman Equation for a Constrained Dynamic Stochastic Optimal Allocation Problem}.
\newblock {\em ANZIAM Journal},~ 55, 14--38.

\bibitem{KilianovaSevcovicJIAM}
Kilianov\'a, S., and \v{S}ev\v{c}ovi\v{c}, D. 2019.
\newblock {Dynamic intertemporal utility optimization by means of Riccati transformation of Hamilton-Jacobi Bellman equation}.
\newblock{\em Japan Journal of Industrial and Applied Mathematics}, (36(2), 497--517.

\bibitem{KilianovaTrnovska}
Kilianov\'a, S., and Trnovsk\'a, M. 2016.
\newblock {Robust Portfolio Optimization via solution to the Hamilton-Jacobi-Bellman Equation}. 
{\em Int. Journal of Computer Mathematics}, (93), 725--734. 

\bibitem{KMS}
Kilianov\'a, S., Melicher\v{c}\'\i k, I., and \v{S}ev\v{c}ovi\v{c}, D. 2006.
\newblock {Dynamic Accumulation Model for the Second Pillar of the Slovak Pension System}, 
\newblock{Finance a uver - Czech Journal of Economics and Finance}, 56(11-12), 506--521.

\bibitem{Klatte}
Klatte, D. 1985.
\newblock {On the {L}ipschitz behavior of optimal solutions in
  parametric problems of quadratic optimization and linear complementarity}.
\newblock {\em Optimization: A Journal of Mathematical Programming and
  Operations Research\/},~ 16(6), 819--831.
\bibitem{Kou2002}
Kou, S. 2002.
\newblock {A jump-diffusion model for option pricing}.
\newblock {\em Management Science}, (48),~1086--1101.

\bibitem{NBS5}
Kwok, Y.~K. 2008.
\newblock {\em Mathematical models of financial derivatives}, second ed.;
  Springer Finance, Springer, Berlin, xvi+530.
  
\bibitem{leland1985option}
Leland, H.~E. 1985.
\newblock Option pricing and replication with transactions costs.
\newblock {\em The journal of finance},~ (40),~1283--1301.

\bibitem{MS}
Macov{\'a},~ Z, and {\v{S}}ev{\v{c}}ovi{\v{c}}, D. 2010.
\newblock {Weakly nonlinear analysis of the
  {H}amilton-{J}acobi-{B}ellman equation arising from pension savings
  management}.
\newblock {\em Int. J. Numer. Anal. Model.\/}, 7(4), 619--638.

 \bibitem{DPE98}
Madan, D.~B., Carr, P., and Chang, E.~C. 1998.
\newblock {The Variance Gamma Process and Option Pricing}.
\newblock {\em European Finance Review},~ (2), ~79--105.

\bibitem{Merton76}
Merton, R.~C. 1976.
\newblock {Option pricing when underlying stock returns are discontinuous.}
\newblock {\em J. Financial Economics},~ (3),~125--144.

\bibitem{Meyer}
Meyer, J.~ C., Needham, D.~ J.. 
\newblock {Extended weak maximum principles for parabolic partial differential inequalities on unbounded domains}. 
\newblock {\em Proc. R. Soc. Lond. Ser. A Math. Phys. Eng. Sci.}, 470(2014), 20140079.

\bibitem{mikulevivcius1992cauchy}
Mikulevi{\v{c}}ius, R., and Pragarauskas, H. 1992.
\newblock On the Cauchy problem for certain integro-differential operators in
  Sobolev and H{\"o}lder spaces.
\newblock {\em Lithuanian Mathematical Journal},~ (32),~238--264.

\bibitem{milgrom_segal2002}
Milgrom, P., and Segal, I. 2002.
\newblock {Envelope theorems for arbitrary choice sets}.
\newblock {\em Econometrica\/},~ 70(2), 583--601.


\bibitem{NuaSch01}
Nualart, D., and Schoutens, W. 2001.
\newblock Backward stochastic differential equations and {F}eynman-{K}ac
              formula for {L}\'evy processes, with applications in finance.
\newblock {\em Bernoulli. Official Journal of the Bernoulli Society for
              Mathematical Statistics and Probability},~ 7(5), ~761--776.


\bibitem{palatucci2017recent}
Palatucci, G., Kuusi, T. 2017.
\newblock {\em Recent developments in nonlocal theory}, De Gruyter Open.

\bibitem{PaoRuan}
Pao, C.~V., and Ruan, W.~H. 2010.
\newblock {{Positive solutions of quasilieanr parabolic systems with Dirichlet boundary condition.}}
\newblock {\em Journal of Differential Equation\/},~ 248(5), 1175--1211.


\bibitem{PlaSch98}
Platen, E.,  and Schweizer, M. 1998.
\newblock {{On Feedback Effects from Hedging Strategies.}}
\newblock {\em Mathematical Finance\/},~(8), 67--84.

\bibitem{Post}
Post, T., Fang, Y., and Kopa, M. 2015.  
\newblock {Linear Tests for DARA Stochastic Dominance.\/}
\newblock{\em Management Science\/},~(61), 1615--1629. 
\bibitem{Protter}
Protter, M.~ H., and Weinberger, H.~F. 2012.
\newblock {Maximum principles in differential equations}.
\newblock{\em Springer Science \& Business Media}.


\bibitem{NBS13}
Sch\"{o}nbucher, P.J., and Wilmott, P. 2000.
\newblock The feedback effect of hedging in illiquid markets.
\newblock {\em SIAM J. Appl. Math.},~ (61),~232--272.

\bibitem{PhilWil2000}
Sch\"onbucher, Philipp J., and Wilmott, P. 2000.
\newblock {{The feedback effect of hedging in illiquid markets}}
\newblock {\em ASIAM J. Appl. Math.\/},~ 1(61), 232--272.

\bibitem{Amster12}
SenGupta, I., Mariani, M.~C., and Amster, P. 2012.
\newblock Solutions to integro-differential problems arising on pricing options
  in a {L}\'evy market.
\newblock {\em {Acta Appl. Math.}},~ (118),~237--249.

\bibitem{vsevvcovivc2021multidimensional}
\v{S}ev{\v{c}}ovi{\v{c}} D, and Udeani C. I. 2021.
\newblock Multidimensional Linear and Nonlinear Partial Integro-Differential Equation in Bessel Potential Spaces with Applications in Option Pricing.
\newblock {\em {Mathematics}},~ 9(13)~ 1463.

\bibitem{SSMbook}
\v{S}ev\v{c}ovi\v{c}, D., Stehl\'{\i}ková, B., and Mikula, K. 2011.
\newblock {Analytical and numerical methods for pricing financial
  derivatives}. 
\newblock{\em {Nova Science Publishers, Inc., Hauppauge}}, ~(1)--309.

\bibitem{sevcoviczitnanska}
{\v{S}ev\v{c}ovi\v{c}}, D., and \v{Z}it\v{n}ansk\'a, M. 2016.
\newblock {Analysis of the nonlinear option pricing model under variable
  transaction costs}.
\newblock {\em {Asia-Pac. Financ. Mark.}},~ (23),~153--174.
  
\bibitem{SSM}
\v{S}ev\v{c}ovi\v{c}, D., Stehl\'{\i}kov\'a, B., and Mikula, K. 2011.
\newblock {Analytical and numerical methods for pricing financial
  derivatives.\/} 
  \newblock {{\em Nova Science Publishers, Inc., Hauppauge}}.
 
\bibitem{vsevvcovivc2017nonlinear}
\v{S}ev{\v{c}}ovi{\v{c}}, Daniel. 2017.
\newblock Nonlinear parabolic equations arising in mathematical finance.
\newblock {\em {Novel Methods in Computational Finance, Springer}},~ (15),~3--15.
  \bibitem{Showalter}
Showalter, Ralph Edwin. 2013.
\newblock {{Monotone operators in Banach space and nonlinear partial differential equations.}}
\newblock {\em American Mathematical Soc.\/},~ (49).

\bibitem{SirPapa98}
Sircar, K.   Ronnie,  and Papanicolaou, George. 1998.
\newblock {{General {B}lack-{S}choles models accounting for increased market volatility from hedging strategies.}}
\newblock {\em Applied Mathematical Finance\/}.~ 1(5), 45--82.
\bibitem{Stein1970}
Stein, Elias.~M. 1970.
\newblock {\em Singular integrals and differentiability properties of
  functions}.
 \newblock {\em Princeton Mathematical Series, No. 30, Princeton University
  Press, Princeton}, N.J. xiv+290.

\bibitem{udeani2021application}
Udeani, C., {\v{S}}ev{\v{c}}ovi{\v{c}}, D. 2021.
\newblock {{Application of maximal monotone operator method for solving Hamilton--Jacobi--Bellman equation arising from optimal portfolio selection problem.}}
\newblock {\em Japan Journal of Industrial and Applied Mathematics, Springer\/},~(5), 1--21.


\bibitem{Vickson} 
Vickson Raymond G. 1975.
\newblock {Stochastic Dominance for Decreasing Absolute Risk Aversion.} 
\newblock{\em Journal of Financial and Quantitative Analysis\/},
(10), 799--811.

\bibitem{Wu et al}
Wu X., Bin, G., and Wenjie, G. 2013.
\newblock {{Blow-up of solutions for a semilieanr parabolic equation involving variable source and positive initial energy.}}
\newblock {\em Applied Mathematics Letters\/},~26(5), 539--543.

\bibitem{yuzbacsi2016improved}
Y{\"u}zba{\c{s}}{\i}, {\c{S}}uayip. 2016.
\newblock Improved Bessel collocation method for linear Volterra
  integro-differential equations with piecewise intervals and application of a
  Volterra population model.
\newblock {\em Applied Mathematical Modelling},~(40),~5349--5363.





\end{thebibliography}
\end{document}